\newtheorem{theorem}{Theorem}[section]
\newtheorem{lemma}[theorem]{Lemma}
\newtheorem{proposition}[theorem]{Proposition}
\newtheorem{corollary}[theorem]{Corollary}
\newtheorem{definition}{Definition}[section] 
\theoremstyle{remark}
\newtheorem*{rmk}{Remark}
\newcommand{\Sc}{{\cal S}} 
\newcommand{\TT}{{\cal G}}
\newcommand{\BB}{{\cal B}} 
\newcommand{\U}{{\cal U}}
\newcommand{\M}{{\cal M}}
\newcommand{\ts}{\sigma}
\newcommand{\bm}[1]{\mbox{\boldmath $#1$}}
\newcommand{\Der}{\delta}
\newcommand{\thl}{{\theta}}
\newcommand{\tht}{{\theta_t}}
\newcommand{\thk}{\theta_k}
\newcommand{\eps}{\epsilon} 
\newcommand{\half}{\frac{1}{2}}
\newcommand{\la}{\langle} 
\newcommand{\ra}{\rangle}
\newcommand{\A}{a}
\newcommand{\B}{b}
\newcommand{\C}{u}
\newcommand{\hatq}{{\hat q} {}}
\newcommand{\PP}{\mathbb P}
\newcommand{\QQ}{\mathbb Q}
\newcommand{\Id}{\mathbb I} 
\newcommand{\Real}{\text{Re}}
\def\Journal#1#2#3#4{{#1} {\bf #2}, #3 (#4)}
\def\CQG{\em Class. Quantum Grav.}
\def\PRD{{\em Phys. Rev.} \bm{D}}
\def\JMP{\em J. Math. Phys.}
\def\CMP{\em Commun. Math. Phys.}
\def\PRL{\em Phys. Rev. Lett.}
\def\CPAM{\em Commun. Pure Appl. Math.}
\def\PJM{\em Pacific J. Math.}
\def\ATMP{\em Adv. Theor. Math. Phys.}
\begin{document}

\title{Stability of marginally outer trapped surfaces and existence of 
marginally outer trapped tubes}

\author{Lars Andersson\footnotemark[1]\,\,\footnotemark[2]\,\,,
Marc Mars\footnotemark[3]\,
and
Walter Simon\footnotemark[3] \\
\footnotemark[1] Albert
Einstein Institute, Am M\"uhlenberg 1, D-14476 Potsdam, \\ Germany, \\
\footnotemark[2] Dept. of Mathematics, University of Miami, Coral Gables, \\ FL
33124, USA,
\\
\footnotemark[3] Facultad de Ciencias, Universidad de Salamanca\\
Plaza de la Merced s/n, E-37008 Salamanca, Spain.}


\maketitle

\begin{abstract}
The present work extends our short communication \cite{AMS}.  For smooth
marginally outer trapped surfaces (MOTS) in a smooth spacetime we define
stability 
with respect to variations along
arbitrary vectors $v$ normal to the MOTS. After giving some introductory
material about  linear non self-adjoint elliptic operators,  we introduce the
stability operator $L_v$ and we characterize 
stable MOTS  in terms
of sign conditions on the principal  eigenvalue of $L_v$.
The main result shows that given a strictly stable MOTS $\Sc_0 \subset \Sigma_0$
in a spacetime with a reference foliation $\Sigma_t$, there is an open
marginally outer trapped tube (MOTT), adapted to the reference foliation,
which contains $\Sc_0$. We give conditions under which the MOTT can be
completed.
%
Finally, we show that under standard  energy conditions on the spacetime,
the MOTT 
must be either locally achronal, spacelike or null.
\end{abstract}

\section{Introduction}
\label{Intro}
The singularity theorems of Hawking and Penrose assert the presence of
incomplete geodesics in the time evolution of Cauchy data with physically
reasonable matter containing a trapped surface \cite{HE}.
Studying the structure of these singularities and understanding
whether generically they 
entail a loss of predictability power of the theory have become
central issues in Classical General Relativity. 
More specifically,
the weak cosmic censorship conjecture asserts, roughly speaking, that in the
asymptotically flat case the singularity will be generically
hidden from infinity 
by an event horizon and that a black hole will form. Since this conjecture aims
precisely at showing that a black hole forms,
any sensible approach to its proof should not make 
strong a priori assumptions on the global structure of the spacetime. It is therefore
necessary to replace to concept of black hole, which requires full knowledge
of the future evolution of a spacetime, with a quasi-local concept
that captures its main features and that can be used as a tool to show 
the existence of a black hole.

This approach has been successfully applied in
spherically symmetric spacetimes with matter fields, where the existence
of a complete future null infinity and an event horizon
can in fact be inferred from the
presence of at least one trapped surface in the data \cite{MD}, plus some 
extra assumptions. An important tool in the analysis is to study the 
sequence of the marginally trapped surfaces bounding the region with trapped 
surfaces within each slice of a spherically symmetric foliation.

Quasi-local versions of black holes are important 
not only in the context of cosmic censorship, they are relevant in
any physical situation involving black holes where no global
knowledge of the spacetime is available. An outstanding example
is the dynamics and evolution of black holes. 
In the strong field regime, these
evolutions are so complex that they can only be approached with the aid
of numerical methods. In most cases, numerical computations can only evolve the
spacetime a finite amount of time, which makes the
global definition of black hole of little practical use. A quasilocal 
definition becomes necessary even to define what is understood by a black hole
in this context. More importantly, such a definition is crucial
in order to be able to track the location of the black holes
and to extract relevant physical information from their evolution. Over the
last few years, marginally outer trapped surfaces 
and the hypersurfaces in spacetime which they sweep out during a time
evolution, 
have become standard as quasilocal
replacements of black holes and have been studied extensively, using both 
numerical methods (see, e.g. \cite{BBGB,SK,SKB}) 
as well as analytically, with either
mathematical \cite{AG} or more physical scope \cite{GJ}. 
See \cite{A} for a review of 
some of these issues. However, many open problems still remain. Only in
spherical symmetry a rather complete picture has been obtained 
thanks to \cite{MD,BI,CW}. In general even finding
examples is not easy, see however \cite{DJK,MK,IBD} for interesting
non-spherical examples. 

A marginally outer trapped surface (MOTS)  is a spacelike surface of
codimension two, such that the null expansion $\theta$ with respect to the
outgoing null normal vanishes.  The notions of ``outer'' and ``outgoing'' are
simply defined by the choice of a null section in the two dimensional normal
bundle of the surface.  We call a 3-surface foliated by MOTS  a 
marginally outer trapped tube (MOTT) \cite{AK1,AK2}; an alternative
terminology adopted in \cite{AMS} is ``trapping horizon'', c.f. \cite{SHay}.

The wide range of applicability of MOTS motivated us to study their
propagation in spacetime from an analytical point of view and in a general
context, i.e. assuming neither symmetries nor the presence of any  trapped
regions a priori.  In the context of the initial value problem in general
relativity, namely in a smooth spacetime foliated by smooth hypersurfaces
$\Sigma_t$,  it is natural to ask the following:  Given a MOTS $\Sc_0$ on
some initial leaf ${\Sigma_0}$, does it ``propagate'' to the adjacent leaves
$\Sigma_t$ of the foliation?  In other words, is there a marginally outer
trapped tube starting at $\Sc_0$  whose marginally outer trapped leaves lie
in the time slices $\Sigma_t$?

It turns out that the key property of a MOTS $\Sc_0$ relevant for this
question is its ``stability'' with respect  to the initial leaf
$\Sigma_0$. This concept has interesting applications even when considered 
purely inside a hypersurface $\Sigma_0$, 
in particular for the topology of $\Sc_0$, and also for the
property of being a ``barrier'' for weakly outer trapped and weakly outer
untrapped surfaces (defined by $\theta \le 0$ and $\theta \ge 0$, respectively). We
shall discuss these two issues, which both originate in the work of Hawking
\cite{SH1,SH2}, before turning to the question of propagation of $\Sc_0$ off
$\Sigma_0$.   Hawking's analysis was extended by Newman \cite{RN} who
calculated the general variation $\delta_{v}$  of the expansion $\theta$ with
respect to any transversal direction $v$. A central issue in these papers was
to show that stable MOTS have spherical topology in the generic case.  The
classification of the ``rigidity case'', in which the torus is allowed, was
investigated first for minimal  surfaces \cite{CG1} and subsequently also for
generic MOTS and in higher dimensions \cite{CG2,GS,GG}; see \cite{LA} for a
review.  

A key tool, both for the topological issues as well as for the present
purposes, is the linear elliptic  stability operator $L_{v}$ defined by
$\delta_{\psi v} \theta = L_{v} \psi$ for MOTS (introduced in \cite{AMS},  in deformation
form already present in \cite{RN,CG2}) where $v$ is a suitably
scaled vector.  $L_v$ is  not self-adjoint in general, except in special
cases as for example when the MOTS lies in a time symmetric
slice. Nevertheless, linear elliptic operators always have a real principal
eigenvalue,  and the corresponding principal eigenfunction can be chosen to
be positive \cite{DV1,DV2,Smoller,LE}.  While we define stability and strict
stability of MOTS in terms of sign conditions for preferred variations
$\delta_v \theta$, we can show that this definition is equivalent to
requiring that the principal eigenvalue of  $L_v$ is non-negative or
positive, respectively.  In \cite{AMS} we also showed that strictly stable
MOTS are barriers for all weakly outer trapped and weakly outer  untrapped
surfaces in a neighbourhood. In addition to the properties of the stability
operator,  this result uses a representation of MOTS as graphs over some
reference 2-surfaces. In terms of this representation,  the condition of
vanishing expansion $\theta$ characterizing a MOTS becomes a quasilinear
elliptic equation  for the graph function, for which a maximum principle
holds. 
A similar application of the  maximum principle for the functional $\theta$  is
contained  in the uniqueness results of Ashtekar and Galloway \cite{AG} where
null hypersurfaces through a given MOTS and their intersection with a given
spacelike MOTT were considered.

A barrier property of trapped and marginally trapped surfaces, which
complements the one discussed above has been considered by Kriele and Hayward
\cite{KH}. They showed that the boundary of the trapped region, i.e. the
set of all points in a spacelike hypersurface contained in a bounding,
trapped surface  is a MOTS, under the assumption that it is 
{\em piecewise smooth}. By \cite{AMS}, this MOTS is necessarily stable. 
Andersson and Metzger \cite{AM2} recently showed that the boundary of the trapped 
region is a smooth, embedded, stable MOTS, without any
additional smoothness assumption. 

Turning now to our main problem, namely the propagation of the MOTS $\Sc_0$
off $\Sigma_0$ to adjacent slices, we can prove this for some open time
interval if the MOTS are strictly stable (c.f. Theorem \ref{ex1}). As a tool
we first extend the graph representation of $\Sc_0 \subset \Sigma_0$ to
2-surfaces $\Sc_t  \subset \Sigma_t$.  The linearization of the expansion
operator  $\theta$ is precisely  the stability operator, and strict stability guarantees that
this operator is invertible.  As $\Sc_0$ is a solution of the equation
$\theta = 0$ on $\Sigma_0$, we could apply in our earlier paper \cite{AMS}
the implicit  function theorem to get solutions of $\theta = 0$ in a
neighbourhood.  Here we cut this procedure short by using standard results on
perturbations of differential operators \cite{ADN, AB}  whose linearizations
are elliptic and invertible. Naturally, these results also make use of the
implicit  function theorem. As an easy corollary to  Theorem \ref{ex1}, we find
that the MOTT constructed in this interval is {\it nowhere} tangent to the
$\Sigma_t$. Much more subtle results are  Theorem \ref{thm:lars-ex2}
and Corollary \ref{tang}. Under some genericity condition they show in
particular that, if $\Sc_t$ converge smoothly to a limiting MOTS 
$\Sc_{\tau} \subset \Sigma_{\tau}$ whose principal eigenvalue $\lambda_{\tau}$
vanishes, the resulting MOTT is {\it everywhere} tangent to $\Sigma_{\tau}$. 

We wish to stress that the existence result contained in \cite{AMS} is local
in time. The attempt to formulate  a global result  in \cite{AMS} required
the {\it assumptions} (implicit in the definitions of that paper)  that the
MOTS remain compact, embedded, smooth and strictly stable during the
evolution.  In the present work we allow immersed rather than embedded MOTS
in our existence theorems for MOTT, but we have to deal with the other
potential pathologies.  
To do so we apply a recent result by Andersson and
Metzger \cite{AM1} which shows that, in four dimensional spacetimes,
a sequence of  stable and smooth MOTS which lie in a compact set 
such that area of the sequence stays bounded, converges to a smooth and stable MOTS.
This leads to Proposition \ref{prop:lars-ex2}. 
Moreover, if the dominant energy condition holds, it is
easy to show that the area stays in fact bounded provided  the MOTS remain
strictly stable in the limit.  This gives Theorem \ref{ex3} as a sharper version of
our existence result.

This paper is organized as follows. In section \ref{basic} we explain the
most important items of our notation. In
section \ref{variation} we discuss the variation of the expansion, and 
introduce the stability operator. The somewhat technical computation
of the variation, which simplifies the derivation by Newman, 
is given in  appendix \ref{sec:defproof}. 

We proceed with some technical material on linear elliptic operators
with first order term, cf. section  \ref {elliptic}. 
Here and in appendix \ref{sec:appendB} 
we give an exposition of,  in particular,  the
Krein-Rutman theorem \cite{KR} on the principal eigenvalue and
eigenfunction of linear elliptic operators, 
and the maximum principle for  operators with non-negative
principal eigenvalue \cite{BNV}.  We continue in section  \ref{stability} by
discussing in detail stability definitions for MOTS, in particular the
relation between the variational definitions and the sign  condition on the
principal eigenvalue, and we give a result on the dependence of stability on
the direction.  Sect. \ref{graph} contains the graph representation of a
MOTS.  In Sect. \ref{barrier} we describe the barrier properties of MOTS
which satisfy suitable stability conditions, along the lines sketched above,
slightly extending our earlier paper \cite{AMS}.  In section
\ref{symmetries}, we show, roughly speaking, that
strictly stable MOTS inherit the symmetry of the ambient geometry,  and that 
the
same is true for the principal eigenfunction.  In the final section
\ref{mott} we prove existence for MOTT in the three theorems \ref{ex1},
\ref{thm:lars-ex2} and \ref{ex3} already
sketched above. 
Our final Theorem \ref{achronal} is a slight extension of a result of \cite{AMS}
and shows that under standard energy conditions, suitable
(non-)degeneracy conditions for the initial MOTS $\Sc_0$ and 
for spacelike or null reference foliations, the MOTT through
$\Sc_0$  is either spacelike or null everywhere on $\Sc_0$.

\section{Some basic definitions}
\label{basic}
A spacetime $(\M,g)$ is an $n$-dimensional oriented and time-oriented
Hausdorff manifold endowed with a smooth metric of Lorentzian signature
$+2$. Some results below require $n=4$.  $\Sc$ will denote an orientable,
closed (i.e. compact without boundary) codimension 2, immersed submanifold of
$\M$ with positive definite first fundamental form $h$. 
An object with all
these properties is simply called ``surface'' throughout this paper. 
The area of $\Sc$ will be denoted by $|\Sc|$. 
Spacetime
tensors will carry Greek indices and tensors in $\Sc$ will carry
capital Latin indices. Our
conventions for the second fundamental form (-vector) and the mean curvature
(-vector) are  $K(X,Y) \equiv - ( \nabla_{X} Y)^{\bot}$ and $H = \mbox{tr}_h
K$. Here $X$, $Y$ are tangent  vectors to $\Sc$, ${\bot}$ denotes the
component normal to $\Sc$ and $\nabla$ is the Levi-Civita connection  of
$g$. $\Sc$ will always be assumed to be smooth unless otherwise stated.

The normal bundle of $\Sc$ is a Lorentzian vector bundle which admits a null
basis $\{l^{\alpha},k^{\alpha}\}$ which we always take future directed,
smooth and normalized  so that $l^{\alpha} k_{\alpha}=-2$. This basis is
defined up to a boost $l^{\alpha} \rightarrow \kappa l^{\alpha}$, $k^{\alpha}
\rightarrow \kappa^{-1} k^{\alpha} $, $\kappa>0$.  The {\it null expansions}
of $\Sc$ are $\theta_l =  H^{\alpha}l_{\alpha}$ and $\thk =
H^{\alpha}k_{\alpha}$ and the mean  curvature in this null basis reads $
H^{\alpha}= - \half \left ( \thk l^{\alpha} + \theta_l k^{\alpha} \right)$.

\begin{definition}
A surface $\Sc$ is a {\bf marginally outer trapped surface} (MOTS) if
$H^{\alpha}$ is proportional to one of the  elements of the null basis of its
normal bundle. 
\end{definition}
This condition is more restrictive than just demanding $H^{\alpha}$ to be
null because it excludes the possibility  that $H^{\alpha}$ points along
$l^{\alpha}$ in some open set and along $k^{\alpha}$ on its complement (c.f.
\cite{MSen}).  The
null vector to which $H^{\alpha}$ is proportional is called $l^{\alpha}$, and
the direction to which  it points is called the {\em outer} direction 
(in the case
$H^{\alpha} \equiv 0$, both $l^{\alpha}$ and $k^{\alpha}$ are outer 
directions).  In other words, the term outer does not refer to a direction
singled out {\it a priori}, but to the fact that we only have information
about, or we are only interested in,  {\it one of} the expansions.
Equivalently, $\Sc$ is a MOTS iff $\theta_l=0$.  If furthermore $H^{\alpha}$
is either future or past directed everywhere $\Sc$ is called {\it marginally
trapped}.  Hence a marginally trapped surface satisfies $\theta_l = 0$ and
either $\thk \leq 0$ or $\thk \geq 0$ everywhere.  Next, $\Sc$ is called {\it
weakly outer trapped} iff at least one of the expansions in non-positive,
say $\theta_l \leq 0$. {\it  Weakly outer untrapped} surfaces satisfy the 
reverse inequality.  Finally, in order for $\Sc$ to be a future (past)
{\it trapped
surface} we require that the strict inequalities $\theta_l < 0$ and $\thk <
0$, ($\theta_l > 0$, $\thk > 0$) hold. 
Since we will only deal with $\theta_l$ from now on, we simplify
the notation and refer to it simply as $\theta$.

A  marginally outer trapped tube $\TT$ is a smooth collection of MOTS.  More
precisely, we state the following
\begin{definition}
Let $I \subset \mathbb{R}$ be an interval.  A hypersurface $\TT$, possibly
with boundary, is a 
{\bf marginally outer trapped tube} (MOTT) if there 
is a smooth immersion $\Phi: \Sc \times I \rightarrow M$, such
that $\TT = \Phi(\Sc\times I)$ and 
\begin{enumerate}
\item for fixed $s \in I$, $\Phi(\Sc,s)$ is a MOTS with respect to a smooth
field of null normals $l^{\alpha}$ on $\TT$ 
and 
\item \label{point:phistar}
$\Phi_{\star}
(\partial_s)$ is nowhere zero.
\end{enumerate}
Suppose $(\M,g)$ contains a foliation by hypersurfaces
$\{\Sigma_t\}_{t \in J}$. 
A MOTT $\TT$ is said to be {\bf adapted} to the reference
foliation $\{\Sigma_t\}$ if for each $s \in I$, it holds that 
$\Sc_{\sigma(s)} = \Phi(\Sc,s)$ is a MOTS in $\Sigma_{\sigma(s)}$, 
where $\sigma: I \to J$ is a smooth,
strictly monotone function.
\end{definition}
\begin{rmk} Note that if $I$ contains at least 
one of its boundary points, then the
MOTT $\TT$ is a hypersurface with boundary. 
\end{rmk}
Condition \ref{point:phistar} is required in order to allow self-intersections
of the MOTS
but not in the direction of propagation. For embedded MOTS, the MOTT is also
embedded and its definition is equivalent to Hayward's ``trapping horizons''
\cite{SH1}. The terms ``dynamical horizon'' and ``isolated horizon''
\cite{AK1,AK2} are particular cases in which the causal structure is
restricted {\it a priori} .

\section{Varying the expansion}
\label{variation}
A fundamental ingredient in our existence theorem is the first order
variation of the vanishing null expansion $\thl$ of a MOTS. This variation
was given in full generality by Newman in \cite{RN} for arbitrary immersed
surfaces (not necessarily MOTS).  We give here a simplified derivation.  We
first have to introduce some notation.

Let $\nabla_{\alpha}$ and $G_{\alpha\beta}$ denote the covariant derivative
and Einstein tensor of $(\M,g)$ respectively. Let  $(\Sc,h)$ be a spacelike
codimension-two surface  immersed in $\M$ (not necessarily closed),
$e^{\alpha}_{A}$ any basis of the tangent space of $\Sc$, $D_A$  the
covariant derivative on $(\Sc,h)$ and $R_{\Sc}$ its curvature scalar.  Let us
fix a null basis $\{l^{\alpha},k^{\alpha} \}$ in the normal bundle of $\Sc$
and define a one-form on $\Sc$ as
$$
s_A = - \frac{1}{2}  k_{\alpha} \nabla_{e_A} l^{\alpha}.
$$
We shall calculate how $\thl$ changes when the surface $\Sc$ is varied
arbitrarily.  This variation is defined by an arbitrary spacetime vector
$q^{\alpha}$ defined along $\Sc$.  More specifically, let $0 \in I \subset
\mathbb{R}$ be an open interval and $\Phi: \Sc \times I \rightarrow \M$ be a
differentiable map such that for fixed $\sigma$, $\Phi(\cdot,\sigma)$ is an
immersion and for fixed $p$, $x_p^{\alpha} (\ts) \equiv \Phi^{\alpha}(p,\ts)$
is a curve starting at $p \in \Sc$ with tangent vector $q^{\alpha}(p)$.
Define the family of two-surfaces $\Sc_{\ts} \equiv \Phi (\Sc, \ts)$. 
Let
$l_{\ts}^{\alpha}$ be a nowhere zero null vector on the normal  bundle of
$\Sc_{\ts}$ which is differentiable with respect to $\ts$, and let
$\thl_{\ts}$ be the null expansion of each surface $\Sc_{\ts}$.  The
variation of $\thl_{\ts}$, defined as $\Der_{q} \thl \equiv \partial_{\ts}
\thl_{\ts}  |_{\ts=0}$, depends only on $q^{\alpha}$ and on the null vector
field $l^{\alpha}$ and its first variation (if $\Sc$ is a MOTS this last
dependence also drops out), but not on the details  of the map $\Phi$. For a
MOTS, the variation is moreover linear in the sense that
\begin{equation}
\label{linvar}
\Der_{a q_1  + q_2} \thl = a \Der_{q_1} \thl +  \Der_{q_2} \thl
\end{equation}
for any {\it constant} $a$, while in general $\Der_{\psi q} \thl \neq \psi \Der_{q}
\thl$ for functions $\psi$.

It should be noted that  in the context of trapping and dynamical horizons,
derivatives of $\thl$ have been  employed frequently (for instance in the
definition of outer trapping horizon by Hayward \cite{SHay}  or in the
uniqueness results by Ashtekar and Galloway \cite{AG}). These are {\it not}
the variations  we are considering in this paper. The former are derivatives
of a scalar function defined in a neighbourhood of the  horizon by extending
$\thl$ off the horizon, using the Raychaudhuri equation. In this case, the
derivative is obviously  linear with respect to multiplication by scalar
functions, unlike the geometric variation employed here.

The variation vector $q^{\alpha}$ can  be decomposed into a tangential part
$q^{\,\|\alpha}$ and an orthogonal part $q^{\bot\alpha}$   with respect to
$\Sc$, i.e.  $q^{\alpha}=q^{\bot\alpha} + q^{\,\|\alpha}$. The normal
component can in turn be decomposed in terms of the null basis as
$q^{\bot\alpha} = \B l^{\alpha} - \frac{\C}{2} k^{\alpha}$ where $\B$, $\C$ are
functions on $\Sc$. The following result, whose proof we have shifted to 
appendix \ref{sec:defproof}, 
gives the explicit expression for the variation of $\thl$ along
$q^{\alpha}$.

\begin{lemma}
\label{gv} 
Let $\Sc$, $l^{\alpha}$,  $\thl$ and $q^{\alpha}= {q}^{\, \| \alpha}+ \B
{l}^{\alpha} - \frac{\C}{2} {k}^{\alpha}$ be as before. Then, the variation of
$\thl$ along ${q}$ is
\begin{align*} 
\delta_{{q}} \thl &= \A \thl + {q}^{\, \|} \left ( \thl \right ) - \B \left (
K^{\mu}_{AB} K^{\nu \, AB} l_{\mu} l_{\nu} + G_{\mu\nu} l^{\mu} l^{\nu}
\right ) - \Delta_\Sc \C + 2 s^A D_A \C 
\\  
&\quad + \frac{\C}{2}  \left (
R_\Sc - {H}^2 - G_{\mu\nu} l^{\mu} k^\nu - 2 s^A s_A + 2 D_A s^A \right ),
\end{align*} 
where $\Delta_{\Sc} = D_A D^A$ is the Laplacian on $(\Sc,h)$ and $\A =  -
\frac{1}{2} k_{\alpha} \partial_{\ts} l^{\alpha}_{\ts} |_{\ts =0}$.
\end{lemma}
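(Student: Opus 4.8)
The plan is to differentiate a first-order expression for $\thl$ along the deformation and then reorganise the result with the Gauss, Codazzi and normal-bundle equations. First I would record that, since $H^{\alpha}=-h^{AB}(\nabla_{e_{A}}e_{B})^{\bot}$ and $g(e_{B},l)=0$, contracting with $l$ removes the projection and one use of $e_{A}(g(e_{B},l))=0$ gives
\[
\thl_{\ts}\;=\;h^{AB}_{\ts}\,g\big(\nabla_{e_{A}}l_{\ts},\,e_{B}\big),
\]
an expression in which only \emph{first} derivatives of the null field $l_{\ts}^{\alpha}$ appear. Here $e_{A}=\Phi_{\star}(\partial_{A})$ for coordinate fields $\partial_{A}$ on $\Sc$, $h_{\ts}=\Phi_{\ts}^{\star}g$, and $X=\Phi_{\star}(\partial_{\ts})$ with $X|_{\ts=0}=q$; because $[\partial_{\ts},\partial_{A}]=0$ and $\nabla$ is torsion free, $\nabla_{X}e_{A}=\nabla_{e_{A}}X$ along the immersion.

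Applying $\partial_{\ts}|_{\ts=0}$ produces three groups of terms: from $\partial_{\ts}h^{AB}_{\ts}$, using $\partial_{\ts}h_{AB}=g(\nabla_{e_{A}}X,e_{B})+g(e_{A},\nabla_{e_{B}}X)$; from $\partial_{\ts}\big(\nabla_{e_{A}}l_{\ts}\big)=\nabla_{e_{A}}(\nabla_{X}l)+R(X,e_{A})l$ via the Ricci identity, which separates off the spacetime curvature and the first variation $\nabla_{X}l|_{0}$ of the null normal; and from $\partial_{\ts}e_{B}$ through $\nabla_{X}e_{B}=\nabla_{e_{B}}X$. A crucial auxiliary step is to exploit that each $l_{\ts}$ stays \emph{null and normal} to $\Sc_{\ts}$: differentiating these two constraints pins down $\nabla_{X}l|_{0}$ up to the boost freedom — its $k$-component vanishes, its $l$-component is exactly $\A=-\tfrac12 k_{\alpha}\partial_{\ts}l^{\alpha}|_{\ts=0}$, and its tangential component is determined by $q$ (for the $-\tfrac{\C}{2}k$ part it equals $(-D^{B}\C+\C\,s^{B})e_{B}$, with $s_{A}=-\tfrac12 k_{\alpha}\nabla_{e_{A}}l^{\alpha}$). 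This is precisely why $\delta_{q}\thl$ depends only on $q$ and on $\nabla_{X}l|_{0}$ — and, once $\thl=0$, only on $q$ — as asserted before the lemma.

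Next I would substitute $q=q^{\,\|}+\B l-\tfrac{\C}{2}k$ and expand every occurrence of $\nabla_{e_{A}}q$ in the basis $\{e_{B},l,k\}$, using $\nabla_{e_{A}}l=\chi_{A}{}^{B}e_{B}+s_{A}l$ with $\chi_{AB}=K^{\mu}_{AB}l_{\mu}$, and the analogous decomposition of $\nabla_{e_{A}}k$ (whose $k$-component is $-s_{A}$). The tangential vector $q^{\,\|}$ reproduces $q^{\,\|}(\thl)$ — to first order a tangential deformation is a reparametrisation — and the $l$-part of $\nabla_{X}l|_{0}$ assembles into $\A\,\thl$. The $\B$-terms combine — through $\partial_{\ts}h^{AB}$, the $\nabla_{X}e_{B}$ piece and the curvature contraction, using $\chi_{AB}\chi^{AB}=K^{\mu}_{AB}K^{\nu\,AB}l_{\mu}l_{\nu}$ — into exactly the Raychaudhuri expression $-\B\,(K^{\mu}_{AB}K^{\nu\,AB}l_{\mu}l_{\nu}+G_{\mu\nu}l^{\mu}l^{\nu})$. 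The $\C$-terms split into $-\Delta_{\Sc}\C+2s^{A}D_{A}\C$ (the Laplacian arising from $\nabla_{e_{A}}$ of the forced tangential part of $\nabla_{X}l|_{0}$, and one of the two $s^{A}D_{A}\C$ contributions coming from $\nabla_{X}e_{B}$) plus a zeroth-order piece; feeding the remaining curvature and $\chi$-contractions into the Gauss equation (producing $R_{\Sc}-H^{2}$) and the normal-bundle (Ricci) equation for $s_{A}$ (producing $-G_{\mu\nu}l^{\mu}k^{\nu}-2s^{A}s_{A}+2D_{A}s^{A}$) collects this into $\tfrac{\C}{2}\big(R_{\Sc}-H^{2}-G_{\mu\nu}l^{\mu}k^{\nu}-2s^{A}s_{A}+2D_{A}s^{A}\big)$. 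Linearity \eqref{linvar} then follows at once, since in the final formula $q^{\,\|}$ and $\B$ enter algebraically while only $\C$ is differentiated.

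I expect the main obstacle to be organisational rather than conceptual: after the two $\partial_{\ts}$-differentiations one is holding a sizeable collection of terms, quadratic in the extrinsic data and laden with derivatives of an \emph{a priori} arbitrarily normalised, non-parallel null frame, and the real work is to recognise the exact Gauss/Codazzi/Ricci combinations and to do the bookkeeping so that every frame–gauge dependent term either cancels outright or is absorbed into $\A\,\thl$. A secondary point requiring vigilance is the consistent handling of the $\bot$-projections, since $\nabla_{e_{A}}l$ and $\nabla_{e_{A}}k$ each carry tangential and normal parts that must be expanded in $\{e_{B},l,k\}$ before any contraction is performed.
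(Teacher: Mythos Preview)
Your proposal is sound and would yield the formula, but the paper takes a genuinely different route. Rather than differentiating the surface expression $\thl=h^{AB}g(\nabla_{e_A}l,e_B)$ and tracking the three groups $\partial_\ts h^{AB}$, $\partial_\ts(\nabla_{e_A}l)$, $\partial_\ts e_B$, the paper first \emph{extends} $l_\ts,k_\ts$ off each $\Sc_\ts$ to a spacetime neighbourhood by a specific geometric construction (affine null geodesics along $k$ and $l$, plus parallel transport), so that $l$ becomes a geodesic field and the null expansion is literally a spacetime divergence, $\thl_\ts(p)=(\nabla_\alpha l^\alpha_\ts)|_{\Phi(p,\ts)}$. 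Then $\delta_q\thl=\nabla_\alpha U^\alpha+q^\beta\nabla_\beta\nabla_\alpha l^\alpha$ with $U=\partial_\ts l|_0$, and one Ricci-identity swap gives the $G_{\mu\nu}l^\mu l^\nu$ and $R_{\mu\nu}l^\mu k^\nu$ terms directly; the Gauss identity enters only at the end to rewrite $R_{\alpha\beta}l^\alpha k^\beta$. Your approach avoids the off-surface extension entirely and is closer in spirit to standard second-variation computations in submanifold geometry, at the price of heavier on-surface bookkeeping (exactly the organisational obstacle you anticipate). The paper's trick buys a shorter calculation with fewer frame components to chase, but it front-loads the work into constructing the extension with the right properties ($l$ geodesic, $k$ parallel along $l$, $\nabla_k l|_\Sc=0$), which are then used repeatedly to kill terms. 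Both determine the tangential part of $U$ (your $\nabla_X l|_0$) from the same constraint, and both isolate the boost freedom in the single scalar $\A$.
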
 

The decomposition of $q^{\bot\alpha}$ in the null basis
$\{l^{\alpha},k^{\alpha} \}$ is natural for  the surface $\Sc$ as a
codimension-two submanifold in spacetime, and Lemma \ref{gv} gives the
general variation with respect to arbitrary vectors on $\Sc$.  However, we
will later refer the variations of $\Sc$  to some foliation of $\M$ by
hypersurfaces  $\Sigma_t$ with $\Sc \subset \Sigma_0$ (c.f. section
\ref{barrier}), and for this we will employ a natural  alternative
decomposition of $q^{\bot\alpha}$ adapted to the foliation. We consider an
arbitrary normal vector field $v^{\alpha}$ to $\Sc$ which is, at each point,
linearly independent of $l^{\alpha}$, and we impose the normalization
$v^{\alpha} l_{\alpha} = 1$  which does not restrict the causal character of
$v^{\alpha}$ anywhere on $\Sc$.  $v^{\alpha}$ is uniquely defined by a scalar
function $V$ on  $\Sc$ according to
\begin{equation}
v^{\alpha} = - \half k^{\alpha} + V l^{\alpha}. \label{vectorv}
\end{equation}

We use $\{v^{\alpha},l^{\alpha}\}$ as a basis in the normal space.  Inverting
(\ref{vectorv})  we get $k^{\alpha} = 2 ( V l^{\alpha} - v^{\alpha} )$.  We
next define a vector $u^{\alpha} = \half k^{\alpha} + V l^{\alpha}$,  which
is orthogonal to $v^{\alpha}$ and satisfies  $u^{\alpha} u_{\alpha} = -
v^{\alpha} v_{\alpha} = - 2 V$.  The quantities
\begin{align}
W  &=  K^{\mu}_{AB} K^{\nu \, AB} l_{\mu} l_{\nu}  + G_{\mu\nu} l^{\mu}
l^{\nu}, \label{EqW} \\  
Y  &=   V K^{\mu}_{AB} K^{\nu \, AB} l_{\mu} l_{\nu}
+  G_{\mu \nu} l^{\mu} u^{\nu}
\label{eqY}
\end{align}
will appear frequently below. Clearly 
$W$ is non-negative provided the null energy
condition holds and $Y$ is non-negative if $u^{\alpha}$ is causal and the
dominant energy condition holds.

The following definition introduces an object which plays a key role in this
paper. 
\begin{definition}
The {\it stability operator}  is defined by
\begin{equation}
L_{v} \psi   =  - \Delta_\Sc \psi + 2 s^A D_A \psi + \left ( \half R_\Sc  -
Y  -  s^A s_A +  D_A s^A  \right ) \psi.
\label{Lv}
\end{equation}
\end{definition}
The following lemma is a trivial specialization of 
Lemma \ref{gv}.
\begin{lemma}
\label{varth} 
Let $\Sc$ be a MOTS, i.e.  $\thl = 0$.  The variation of the expansion $\thl$
on $\Sc$ with respect to the null vector $\psi l^{\alpha}$,  and with respect
to any vector $\psi v^{\alpha}$ orthogonal to $\Sc$ with
$l^{\alpha}v_{\alpha} = 1$, respectively, are given by
\begin{align}
\label{Ray} 
\delta_{\psi l} \thl & =  - \psi  W, \\
\label{varv}
\delta_{\psi v} \thl & =  L_v \psi,
\end{align}
\end{lemma}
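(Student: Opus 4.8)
The plan is to obtain both formulae by direct substitution into the general variation formula of Lemma \ref{gv}, using the MOTS condition $\thl = 0$ to discard the inhomogeneous terms. First I would observe that for a variation vector proportional to $l^\alpha$ or to $v^\alpha$ the tangential part $q^{\,\|\alpha}$ vanishes, so the term $q^{\,\|}(\thl)$ is absent; moreover the term $\A\thl$ vanishes identically because $\Sc$ is a MOTS. This is precisely why, for a MOTS, all dependence on the first variation of $l^\alpha$ (which enters only through $\A$) drops out, as anticipated in Section \ref{variation}; consequently the remaining expression is genuinely a geometric object on $\Sc$.

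For the first identity, take $q^\alpha = \psi l^\alpha$, which is purely normal, so in the notation of Lemma \ref{gv} one has $q^{\,\|\alpha} = 0$, $\B = \psi$ and $\C = 0$. Every term containing $\C$ then drops, and the only surviving contribution is $-\psi\bigl(K^\mu_{AB}K^{\nu\,AB}l_\mu l_\nu + G_{\mu\nu}l^\mu l^\nu\bigr)$, which by (\ref{EqW}) equals $-\psi W$; this is (\ref{Ray}), essentially the Raychaudhuri equation. For the second identity, decompose $\psi v^\alpha = -\tfrac{\psi}{2}k^\alpha + \psi V l^\alpha$ using (\ref{vectorv}) and match with $q^{\,\|\alpha} + \B l^\alpha - \tfrac{\C}{2}k^\alpha$, obtaining $q^{\,\|\alpha}=0$, $\B = \psi V$ and $\C = \psi$. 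Substituting into Lemma \ref{gv} with $\thl = 0$ yields
\[
\delta_{\psi v}\thl = -\psi V W - \Delta_\Sc\psi + 2 s^A D_A\psi + \tfrac{\psi}{2}\bigl(R_\Sc - H^2 - G_{\mu\nu}l^\mu k^\nu - 2 s^A s_A + 2 D_A s^A\bigr).
\]

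The one point that is not purely mechanical, and which I expect to be the only mild obstacle, is the cancellation of the $H^2$ term: on a MOTS $\thl = 0$, so the mean curvature vector reduces to $H^\alpha = -\tfrac12\thk\, l^\alpha$, which is null, hence $H^2 \equiv H^\alpha H_\alpha = 0$. With this, it remains to identify the combination $V W + \tfrac12 G_{\mu\nu}l^\mu k^\nu$ as $Y$: inserting $u^\alpha = \tfrac12 k^\alpha + V l^\alpha$ into (\ref{eqY}) and using (\ref{EqW}) gives exactly $Y = V W + \tfrac12 G_{\mu\nu}l^\mu k^\nu$, so that $-\psi V W - \tfrac{\psi}{2}G_{\mu\nu}l^\mu k^\nu = -\psi Y$. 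Collecting the remaining terms reproduces $L_v\psi$ exactly as in (\ref{Lv}), establishing (\ref{varv}). Beyond the $H^2$ observation, the proof is just careful bookkeeping of the null-basis decomposition, so no substantive difficulty is expected.
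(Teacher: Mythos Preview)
Your proof is correct and follows exactly the route the paper indicates: the paper simply states that this lemma is ``a trivial specialization of Lemma~\ref{gv}'', and you have carried out that specialization in full, including the one nontrivial observation that $H^2=0$ on a MOTS and the identification $VW+\tfrac12 G_{\mu\nu}l^\mu k^\nu = Y$.
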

We note that (\ref{Ray}) is the Raychaudhuri equation.
For arbitrary vectors $w^{\alpha}$ orthogonal to $\Sc$ and linearly
independent of $l^{\alpha}$, not necessarily normalized to satisfy
$w^{\alpha} l_{\alpha} =1$, we can define another elliptic operator $L_{w}
\psi = \delta_{\psi w} \thl$. In terms of the normalized vector $v \equiv
F^{-1} w$, where $F = w^{\alpha} l_{\alpha}$ we obviously have
\begin{equation}
\label{resc}
L_{w} \psi = L_{Fv} \psi = L_{v} (F \psi ).
\end{equation}
Hence, $L_v$ and $L_{Fv}$ contain essentially the same information.
To see the  dependence of $L_v$ on the vector $v$, i.e. on the function $V$
we calculate, from linearity (\ref{linvar}), and (\ref{Ray}),
\begin{equation}
L_v = \half L_{- k} - V W.
\label{LvLk}
\end{equation}

Due to the presence of the first order term in (\ref{Lv}), $L_v$ is not
self-adjoint in general.  However, self-adjoint extensions still exist in
special cases. For example, if  $s_A$ is a gradient,  i.e. $s_A = D_A \zeta$
for some $\zeta$, $L_v$ is self-adjoint with respect to a suitable measure
depending on $\zeta$, c.f.  section \ref{elliptic}.

Since $L_v$ is linear and elliptic, it has discrete eigenvalues and 
the corresponding eigenfunctions are regular. However, in general, the
eigenvalues and eigenfunctions are complex. 
Nevertheless, the principal eigenvalue, i.e. the 
eigenvalue with smallest real part, and its
corresponding eigenfunction behave in the same manner as for self-adjoint
operators. In particular, they can be used to give 
a very efficient reformulation of the maximum
principle.  In the next section we collect some material on linear elliptic
operators which will be the key tools in the subsequent discussion of
stability.

\section{Properties of linear elliptic operators}
\label{elliptic}
The results of this section hold for connected compact smooth manifolds $\Sc$
without boundary, and for arbitrary smooth, linear, second order, elliptic
operators on $\Sc$, which can be written as
\begin{equation}
\label{ell}
L = - \Delta_h + 2 t^A D_A + c
\end{equation}
where  $\Delta_h \psi= D_A \left(h^{AB} D_B \psi \right)$,
$h^{AB}$ is positive definite and smooth, $D_A$ is the 
corresponding Levi-Civita covariant derivative 
and  $t^A$ and $c$ are smooth.

\begin{lemma}  
\label{principal}
Let $L$ be an elliptic operator of the form (\ref{ell}) on a compact
manifold. Then, the following holds. 
\begin{enumerate} 
\item There is a real eigenvalue $\lambda$, called the principal eigenvalue,
such that for any other eigenvalue $\mu$ the inequality $\mbox{Re}
(\mu) \geq \lambda$ holds.  The corresponding eigenfunction $\phi$, $L
\phi= \lambda \phi$ is unique up to a multiplicative constant and can be
chosen to be real and everywhere positive.
\item \label{point:princ-ii} 
The adjoint $L^{\dagger}$ (with respect to the $L^2$ inner
product)  has the same principal eigenvalue $\lambda$ as $L$.
\end{enumerate}
\end{lemma}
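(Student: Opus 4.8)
The plan is to prove Lemma~\ref{principal} by reducing the non-self-adjoint operator $L$ to a setting where the Krein--Rutman theorem applies, and then to handle the adjoint by a duality argument.

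\textbf{Step 1: Solvability and a positivity-improving resolvent.} First I would observe that for $\beta$ a sufficiently large positive constant, the operator $L + \beta$ satisfies the hypotheses of the maximum principle: if $(L+\beta)\psi \ge 0$ (with $\beta$ large enough that $c + \beta > 0$), then $\psi \le 0$ is impossible at an interior maximum unless $\psi$ is constant, and on a closed manifold this forces $\psi \le 0$ everywhere; more precisely $(L+\beta)\psi \ge 0$ with $\psi$ somewhere positive leads to a contradiction, so $(L+\beta)$ satisfies a strong minimum principle. Standard elliptic theory (Schauder estimates plus the Fredholm alternative, using that the kernel is trivial by the maximum principle) then gives that $(L+\beta): C^{2,\alpha}(\Sc) \to C^{0,\alpha}(\Sc)$ is invertible, and its inverse $G_\beta = (L+\beta)^{-1}$ is a compact operator on $C^{0,\alpha}(\Sc)$ (or on $C^0(\Sc)$, by elliptic regularity) which is \emph{positivity improving}: if $f \ge 0$, $f \not\equiv 0$, then $G_\beta f > 0$ everywhere. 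This last fact again follows from the strong maximum principle applied to $(L+\beta)(G_\beta f) = f \ge 0$ together with the Hopf lemma / interior positivity.

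\textbf{Step 2: Apply Krein--Rutman.} The operator $G_\beta$ is a compact, positive, positivity-improving operator on the Banach space $C^0(\Sc)$ ordered by the cone of non-negative functions, whose interior is nonempty. The Krein--Rutman theorem (in the strong form for positivity-improving operators, cf.\ appendix~\ref{sec:appendB}) then yields that the spectral radius $r(G_\beta) > 0$ is an eigenvalue of $G_\beta$, it is algebraically simple, the corresponding eigenfunction $\phi$ can be chosen strictly positive, and $r(G_\beta)$ is the \emph{only} eigenvalue with a positive eigenfunction; moreover every other eigenvalue $\nu$ of $G_\beta$ satisfies $|\nu| \le r(G_\beta)$, with $|\nu| = r(G_\beta)$ only for $\nu = r(G_\beta)$ in the positivity-improving case. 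Setting $\lambda = 1/r(G_\beta) - \beta$, the eigenvalue relation $G_\beta \phi = r(G_\beta)\phi$ translates into $L\phi = \lambda\phi$. Since the eigenvalues $\mu$ of $L$ correspond bijectively to eigenvalues $(\mu + \beta)^{-1}$ of $G_\beta$, the bound $|(\mu+\beta)^{-1}| \le (\lambda + \beta)^{-1}$ gives $|\mu + \beta| \ge \lambda + \beta$, and a short computation with $\lambda + \beta > 0$ shows $\mathrm{Re}(\mu) \ge \lambda$, which is claim~(i). Uniqueness of $\phi$ up to scaling follows from the algebraic simplicity of $r(G_\beta)$ (hence geometric simplicity).

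\textbf{Step 3: The adjoint.} For part~(ii) I would note that $L^\dagger + \beta = (L+\beta)^\dagger$ is again an elliptic operator of the form (\ref{ell}) (the first-order coefficient changes sign and picks up divergence terms, the zeroth-order coefficient changes, but the structural hypotheses are preserved), so Step~1 and Step~2 apply verbatim to $L^\dagger$ and produce its principal eigenvalue $\lambda^\dagger$. On the other hand, $(L^\dagger + \beta)^{-1} = (G_\beta)^\dagger$ on $L^2(\Sc)$, and a compact operator and its Banach-space adjoint have the same spectrum, hence the same spectral radius: $r((G_\beta)^\dagger) = r(G_\beta)$. Therefore $\lambda^\dagger = 1/r((G_\beta)^\dagger) - \beta = 1/r(G_\beta) - \beta = \lambda$.

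\textbf{Main obstacle.} The routine parts are the Schauder/Fredholm input and the translation between eigenvalues of $L$ and of $G_\beta$. The real content — and the step I would be most careful about — is verifying the hypotheses of the strong Krein--Rutman theorem, namely that $G_\beta$ is genuinely positivity \emph{improving} (not merely positivity preserving), since this is what delivers simplicity of the principal eigenvalue, strict positivity of $\phi$, and the statement that $\lambda$ is isolated with strictly smallest real part among eigenvalues sharing its modulus. This in turn rests on the strong maximum principle and the Hopf boundary-point lemma for $L + \beta$ on the closed manifold $\Sc$; here one must be slightly attentive because $\Sc$ has no boundary, so the role usually played by the Hopf lemma is instead played by the observation that a nonnegative supersolution that vanishes at one point must vanish identically. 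I would make sure that argument is stated cleanly, as it is the crux of the whole lemma.
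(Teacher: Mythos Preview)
Your overall strategy---shift $L$ by a large constant, invert, and apply Krein--Rutman to the compact positivity-improving resolvent---is exactly the approach taken in the paper (Appendix~\ref{sec:appendB}), and your discussion of the positivity-improving property via the strong maximum principle is correct and matches the paper's reasoning.

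There is, however, a genuine gap in your Step~2. From Krein--Rutman you obtain $|(\mu+\beta)^{-1}|\le(\lambda+\beta)^{-1}$, i.e.\ $|\mu+\beta|\ge\lambda+\beta$, for your \emph{fixed} choice of~$\beta$. This inequality does \emph{not} imply $\mathrm{Re}(\mu)\ge\lambda$: writing $\mu=a+ib$, it says only that $(a+\beta)^2+b^2\ge(\lambda+\beta)^2$, which is compatible with $a<\lambda$ provided $|b|$ is large (e.g.\ $\lambda=0$, $\beta=1$, $\mu=-\tfrac12+i$). The ``short computation'' you allude to does not exist for a single~$\beta$. You can repair the argument by observing that the Krein--Rutman eigenpair $(\lambda,\phi)$ is independent of~$\beta$ (since $L\phi=\lambda\phi$ gives $G_\beta\phi=(\lambda+\beta)^{-1}\phi$ for every admissible~$\beta$), so the inequality $|\mu+\beta|\ge\lambda+\beta$ holds for \emph{all} large~$\beta$; expanding and letting $\beta\to\infty$ then forces $\mathrm{Re}(\mu)\ge\lambda$.

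The paper avoids this detour entirely: it proves $\mathrm{Re}(\mu)\ge\lambda$ by a direct PDE argument, setting $u=\phi^{-1}\psi$ for an eigenfunction $\psi$ with eigenvalue~$\mu$, deriving an elliptic inequality for $|u|^2$, and invoking the strong maximum principle. Your treatment of the adjoint via equality of spectral radii is correct and slightly different from the paper's, which instead pairs the positive principal eigenfunctions of $L$ and $L^\dagger$ to obtain $(\lambda^\dagger-\lambda)\langle\phi^\dagger,\phi\rangle=0$.
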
 
Applications of Lemma \ref{principal} to the stability operator (\ref{Lv})
in a spacetime will be described in the next section. 
Note that our terminology 
follows Evans \cite{LE}; in particular, regarding the sign of
$L$ it is  opposite to other references \cite{DV1,BNV} cited below.  This
entails that when comparing with these references one always has to
interchange ``sup'' and ``inf'' when acting on expressions containing $L$.

The existence of the principal eigenvalue and eigenfunction was first proven
by Donsker and  Varadhan \cite{DV1} using parabolic theory and the
Krein-Rutman theorem \cite{KR}.  For uniqueness of the principal
eigenfunction, c.f. Berestycki, Nirenberg and Varadhan \cite{BNV}.  All these
papers actually deal with the Dirichlet problem for bounded domains in
$\mathbb{R}^n$.  However, the proof is easily adapted  to the case of compact
connected manifolds without boundary (it is in fact simpler).  For
completeness and since this result is not widely known, we provide in
appendix \ref{sec:appendB} a sketch of the proof in the closed manifold case.
The sketch
follows the argument in Smoller \cite{Smoller} and Evans \cite{LE}.

For self-adjoint operators $L_0 = - \Delta_h + c$, 
the principal eigenvalue
$\lambda_0$ is given by the Rayleigh-Ritz formula 
\begin{equation}
\lambda_0 = \inf_u { \la u, L_0 u \ra}=  \inf_u \int_\Sc \left (D_A u D^A u +
c u^2 \right ) \eta_\Sc, \label{RR}
\end{equation} 
where $\eta_\Sc$ is the surface element on $(\Sc,h)$
and the infimum is taken
 over smooth functions $u$ on $\Sc$ with $||u||_{L^2} = 1$.

For non-self-adjoint operators such a characterization is no longer true.
However, Donsker and Varadhan \cite{DV1} have given  alternative variational
representations of the eigenvalue, namely
\begin{align}
\label{var1}
\lambda  & =   \inf_{\mu_{\Sc}} \sup_{\psi} \int_{\Sc} \psi^{-1}{L \psi} \mu_\Sc,\\
\label{var2}
\lambda & =  \sup_{\psi} \inf_{x \in \Sc} \, \psi^{-1}(x){L \psi(x)}.
\end{align}
Here the infimum in (\ref{var1}) is taken over all probability measures
$\mu_\Sc$ on $\Sc$, while the suprema are over all smooth, positive functions
$\psi$ on $\Sc$.

To get (\ref{var2}) from (\ref{var1}) we first note that the ``inf'' and 
``sup'' in  (\ref{var1}) can be interchanged (which is non-trivial but follows
from a min-max theorem of Sion \cite{MS}, c.f. \cite{DV1}). This done, the
infimum  of the integral with respect to all probability measures is achieved
by a Dirac delta measure concentrated at the location where the integrand
takes its infimum.

In order to approach a characterization closer to a Rayleigh-Ritz expression,
we note that, since probability measures can be approximated by smooth
positive functions, we can assume $\mu_\Sc = u^2 \eta_\Sc$ with smooth $u >
0$ and $||u ||_{L^2} = 1$.  Following Donsker and Varadhan, $\psi$ can be
decomposed as $\psi = u e^{\omega}$. Direct substitution in (\ref{var1}) and
rearrangement gives
$$
\lambda =  \inf_{u} \sup_{\omega} \int_\Sc \left ( D_A u D^A u +  \left ( c +
t_A t^A - D_A t^A \right ) u^2  -  \left ( D_A \omega + t_A \right )^2
u^2\right ) \eta_\Sc.
$$
To reformulate this expression we use the Hodge decomposition $t_A =  D_A
\zeta + z_A$, where $\zeta$ is a function and $z_A $ is divergence-free.
This decomposition is unique except for a constant additive term in $\zeta$.
The supremum over $\omega$ only affects the last term, and it only depends on
$z_A$ and not on  $\zeta$. Thus, we need to determine  $\inf_{\omega}
\int_\Sc \left ( D_A \omega + z_A \right )^2 u^2 \eta_\Sc$, for each $u$.  A
standard argument shows that the infimum is achieved and is given by the
solution of the linear elliptic equation
\begin{equation}
- D_A D^A \omega[u] - 2u^{-2} D_A \omega[u] D^A u = 2 u^{-2} z^A D_A u,
\label{eqom}
\end{equation}
where we write $\omega[u]$ to emphasize that the solution depends on $u$.  A
trivial Fredholm argument shows that this equation has a unique solution
satisfying
\begin{equation}
\int_\Sc u^2 \omega[u] \eta_\Sc = 0. \label{intCond}
\end{equation}
It therefore follows that the Donsker-Varadhan characterization of the
principal eigenvalue can be rewritten as
\begin{equation}
\lambda = \inf_{u} \int_\Sc \left ( D_A u D^A u + Q u^2 - \left ( D_A
\omega[u] + z_A \right )^2 u^2 \right ) \eta_\Sc,
\label{DV2}
\end{equation}
where $Q = c + t_A t^A - D_A t^A$ and the infimum is taken over smooth
functions of unit $L^2$ norm. The symmetrized operator $L_s = - \Delta_h +
Q$ is self-adjoint and has a principal eigenvalue $\lambda_s$ given by the
Rayleigh-Ritz formula, as in (\ref{RR}). Since the last term in (\ref{DV2})
is non-positive, the inequality 
\begin{equation}\label{eq:lams}
\lambda_s \geq \lambda
\end{equation} 
follows immediately.
This inequality has recently been demonstrated 
by Galloway and Schoen \cite{GS} using
direct estimates. It is interesting that (\ref{eq:lams}) is a trivial
consequence of the 
Rayleigh-Ritz type formula (\ref{DV2}) 
for the principal eigenvalue.

As a second application of the characterization (\ref{DV2}),  we note that the
last term can be rewritten as
$$
- \left ( D_A \omega [u] + z_A \right )^2 u^2  = \left ( D_A \omega[u] D^A
\omega[v] - z_A z^A \right ) u^2  -2 D_A \left ( \omega[u] u^2 \left (z^A +
D^A \omega[u] \right) \right )
$$
after using the equation (\ref{eqom}) and the fact that $z^A$ is
divergence-free.  Integration on $\Sc$ gives the alternative representation
for $\lambda$
\begin{equation}
\lambda = \inf_{u} \int_\Sc \left ( D_A u D^A u + \left ( Q - z_A z^A  \right
) u^2 + u^2 D_A \omega[u] D^A \omega[u] \right ) \eta_\Sc. \label{DV3}
\end{equation}
Since the last term is now non-negative, dropping it decreases the integrand
and therefore also the infimum. Thus, defining the alternative symmetrized
operator $L_z \equiv L_s - z_A z^A$ and denoting by $\lambda_z$ the
corresponding principal eigenvalue, it follows that
$$
\lambda \geq \lambda_z.
$$
Therefore the principal eigenvalue of a non self-adjoint operator is always
sandwiched between  the principal eigenvalues of two canonical symmetrized
elliptic operators, and we also note that $\inf_{\Sc} (z_A z^A)\le \lambda_s
- \lambda_z \le \sup_{\Sc} (z_A z^A)$.  Obviously, when $L$ is self-adjoint
(w.r.t. to the $L^2$ norm) i.e. $t_A \equiv 0$, the two symmetrized operators
$L_s$ and $L_z$ coincide with it.  More generally, when $t _A$ is a gradient
(so that $L$ is self-adjoint with a suitable measure) the characterization of
$\lambda$ given by (\ref{DV2}) coincides  with the Rayleigh-Ritz expression
because whenever $z_A=0$, the unique solution to (\ref{eqom}) and
(\ref{intCond}) is just $\omega[u]=0$ for all $u$.

An important tool in the analysis of the properties of the stability operator
will be the maximum principle.  The textbook formulations normally require
that the coefficient of the zero order term of the elliptic operator  is
non-negative, at least if a source term is present (see for example 
\cite[Sect. 3]{GT}).  We give here a reformulation, used in several places below,
which instead requires non-negativity or positivity  of the principal
eigenvalue.
   
\begin{lemma}
\label{MP}
Let $L$ be a linear elliptic operator of the form (\ref{ell}) on a compact
manifold. Let $\lambda$ and $\phi > 0$ be the principal eigenvalue and
eigenfunction of $L$, respectively, and
let $\psi$ be a smooth solution of $L \psi = f$ for some smooth function  $f
\ge 0$. Then the following holds. 
\begin{enumerate}
\item \label{point:MPi} 
If $\lambda = 0$, then $f \equiv 0$ and $\psi=C \,\phi$ for some constant
$C$.
\item \label{point:MPii} 
If $\lambda > 0$ and $f \not\equiv 0$, then $\psi > 0$.
\item \label{point:MPiii} 
If $\lambda > 0$ and $f \equiv 0$, then $\psi \equiv 0$.
\end{enumerate}
\end{lemma}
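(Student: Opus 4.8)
The plan is to use the principal eigenfunction $\phi > 0$ of $L$ as a comparison function and to convert the equation $L\psi = f$ into a statement about the ratio $\psi/\phi$. Concretely, I would look for the extrema of $w := \psi/\phi$ on the compact manifold $\Sc$. At an interior maximum point $p$ of $w$ one has $D_A w(p) = 0$ and $-\Delta_h w(p) \ge 0$. The key computation is to expand $L(\phi w)$ in terms of $w$, $\phi$ and their derivatives: writing $L = -\Delta_h + 2t^A D_A + c$, a direct calculation gives
\begin{equation*}
L(\phi w) = \phi\left(-\Delta_h w + 2\tilde t^A D_A w\right) + w\, L\phi,
\end{equation*}
where $\tilde t^A = t^A - \phi^{-1} D^A \phi$ absorbs the cross term from $\Delta_h(\phi w)$. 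Since $L\phi = \lambda\phi$ and $L(\phi w) = L\psi = f$, this reads
\begin{equation*}
\phi\left(-\Delta_h w + 2\tilde t^A D_A w\right) + \lambda\phi\, w = f.
\end{equation*}
Evaluating at the maximum point $p$ of $w$ kills the first-order term and makes $-\Delta_h w(p) \ge 0$, so $\lambda\,\phi(p)\,w(p) \ge f(p) \ge 0$, hence $\lambda\, w(p) \ge 0$ (using $\phi > 0$); similarly at a minimum point $p'$ one gets $\lambda\, w(p') \le 0$ (since there $-\Delta_h w(p') \le 0$ and $f(p') \ge 0$ gives $\lambda\phi w(p') \ge f(p') \ge 0$ — so actually $\lambda w \ge 0$ everywhere at critical points; I'll need to be a little careful and rather argue $w(p') \le 0$ would force a sign, see below).

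From this the three cases follow by elementary arguments. For (ii), if $\lambda > 0$ then $w(p) \ge 0$ at the maximum, so $w \ge 0$ everywhere, i.e. $\psi \ge 0$; if $\psi$ vanished somewhere it would attain an interior minimum value $0$, and then the strong maximum principle (or: plugging $w(p') = 0$ into the displayed identity gives $-\Delta_h w(p') \le 0$ forcing, via Hopf's lemma applied to $-\Delta_h + 2\tilde t^A D_A + \lambda$, that $w$ is constant $\equiv 0$, contradicting $f \not\equiv 0$). Hence $\psi > 0$. For (iii), $f \equiv 0$: the same comparison at both max and min of $w$ gives $\lambda w(p) \ge 0$ and $\lambda w(p') \le 0$, so $w(p) \ge 0 \ge w(p')$, forcing $\max w \ge 0 \ge \min w$; but applying the argument to $-\psi$ as well gives $\max(-w) \ge 0 \ge \min(-w)$, i.e. $\min w \ge 0 \ge \max w$, whence $w \equiv 0$ and $\psi \equiv 0$. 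For (i), $\lambda = 0$: the inequalities at max and min degenerate and instead one observes that $\psi$ cannot change sign — if it did, consider $C$ the largest constant with $\psi \ge C\phi$ (or smallest with $\psi \le C\phi$); then $\psi - C\phi \ge 0$ solves $L(\psi - C\phi) = f \ge 0$ and vanishes somewhere, so by case (ii)-type reasoning (strong maximum principle for $-\Delta_h + 2\tilde t^A D_A$, which has principal eigenvalue $0$) it must vanish identically, giving $\psi = C\phi$ and then $f = L\psi = C\lambda\phi = 0$.

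The main obstacle I anticipate is making the strong maximum principle / Hopf lemma step rigorous in the form needed: on a closed manifold there is no boundary, so the relevant statement is that a nonnegative supersolution of $-\Delta_h w + 2\tilde t^A D_A w + \lambda w \ge 0$ (here with the appropriate sign of $\lambda$) that attains the value $0$ must be constant — this is the interior strong maximum principle of E.~Hopf, valid for operators with no zeroth-order term or with the zeroth-order coefficient handled by the eigenfunction trick, which is exactly why conjugating by $\phi$ is the right move (it removes $c$ and replaces it by the constant $\lambda$, and when $\lambda = 0$ there is genuinely no zeroth-order term). I would cite \cite{GT} for the strong maximum principle in this interior form and emphasize that the conjugation $L \mapsto \phi^{-1} L(\phi \,\cdot\,)$ is what reduces everything to the classical statements. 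The computation of $L(\phi w)$ is routine and I would present it in one line; everything else is the case analysis above.
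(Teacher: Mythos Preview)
Your approach is essentially the paper's: write $\psi = \chi\phi$ and apply the strong maximum principle to the conjugated equation. The paper records the conjugated operator in divergence form,
\[
-D_A(\phi^2 h^{AB} D_B\chi) + 2\phi^2 t^A D_A\chi + \lambda\phi^2\chi = \phi f,
\]
which is your identity multiplied by $\phi$, and then simply cites \cite[Theorem~3.5]{GT} for all three cases; the divergence form makes it transparent that the zero-order coefficient is the constant $\lambda\ge 0$, so the textbook hypotheses apply directly without further argument.

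One genuine slip to fix: you have the inequalities at maxima and minima reversed. At a \emph{minimum} $p'$ of $w$ one has $-\Delta_h w(p')\le 0$, so the identity gives $\lambda\phi(p')w(p')\ge f(p')\ge 0$, hence $\min w\ge 0$ when $\lambda>0$; at a maximum one gets $\lambda w(p)\le f(p)/\phi(p)$, which when $f\equiv 0$ yields $\max w\le 0$. With these corrected, (ii) and (iii) follow cleanly (for (iii) you then have $\max w\le 0\le\min w$ directly, no need to pass to $-\psi$). For (i), note that with $\lambda=0$ the conjugated operator has no zero-order term, so on a closed manifold the strong maximum principle forces $w$ to be constant in one stroke; your more elaborate argument via ``largest $C$ with $\psi\ge C\phi$'' is unnecessary.
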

\begin{rmk}
Clearly  analogous results hold for $f \le 0$.
\end{rmk}  
\begin{proof}
For a positive principal eigenfunction $\phi$ of $L$,  we define
$\chi$ by $\psi = \chi \phi$  and an operator $\Gamma[\phi]$ by the first
equation in
\begin{equation}
\label{delchi}
\Gamma[\phi] \chi =  - D_A \left (\phi^2 h^{AB} D_A \chi \right) +  2 \phi^2 t^A
D_A \chi + \lambda \phi^2 \chi = \phi f
\end{equation}
while the second equality follows by computation.  The strong maximum
principle \cite[Theorem 3.5]{GT}, applied to  (\ref{delchi}) gives the desired
results. 
\end{proof}

We end this section with a result which is essentially a straightforward
application of linear algebra. 
\begin{lemma} \label{lem:linalg}
Let $L$ be a second order elliptic operator on $\Sc$ of the  form (\ref{ell}). 
Let $\lambda, \phi, \phi^\dagger$ be the real principal eigenvalue, 
and the corresponding real eigenfunctions of $L$ and its adjoint $L^\dagger$.  
Let $\PP$ be the projection operator 
defined by  
$$
\PP f = \phi \frac{\la \phi^\dagger, f \ra} {\la \phi^\dagger, \phi \ra }
$$
and let $\QQ = \Id - \PP$. Then 
\begin{enumerate} 
\item \label{point:lam-i}
$L = \lambda \PP + A$, where $A$ has spectrum $\sigma(A)$ such that for some
  constant $c_0 > 0$, 
\begin{equation}\label{eq:Amubound}
\min_{\mu \in \sigma(A) , \mu \ne 0} | \mu - \lambda |  > c_0 
\end{equation}
\item \label{point:lam-ii}
For any $s \geq 2$, $p > 1$, there is a constant $C$ 
such that the inequality 
\begin{equation}\label{eq:Aest}
|| \QQ u ||_{W^{s,p}} \leq C ||(A - \lambda \QQ) u||_{W^{s-2,p}}
\end{equation}
holds, where $W^{s,p}$ is the usual Sobolev space of functions
with $s$ derivatives in $\mbox{L}^p (\Sc)$.
\end{enumerate} 
\end{lemma}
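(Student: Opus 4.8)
The plan is to construct the splitting in (i) by hand, since $L$ is not self-adjoint and no spectral theorem is available. The essential input is that, by Lemma~\ref{principal}, the adjoint $L^\dagger$ has the \emph{same} real principal eigenvalue $\lambda$, with a positive eigenfunction $\phi^\dagger$; because $\phi>0$ and $\phi^\dagger>0$ one has $\la\phi^\dagger,\phi\ra>0$, so $\PP$ is well defined, and a direct computation gives $\PP^2=\PP$, $\PP\phi=\phi$, and (using $L^\dagger\phi^\dagger=\lambda\phi^\dagger$) the commuting relations $L\PP=\PP L=\lambda\PP$. Hence $A:=L-\lambda\PP$ satisfies $A\PP=\PP A=0$. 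Moreover $\PP$ is a bounded rank-one projection on every $W^{s,p}(\Sc)$, since $f\mapsto\la\phi^\dagger,f\ra$ is a bounded linear functional there and $\phi$ is smooth; so we obtain a topological direct sum $W^{s,p}(\Sc)=\mathrm{range}\,\PP\oplus\mathrm{range}\,\QQ$ with respect to which $A$ is block diagonal: it is the zero operator on the line $\mathrm{range}\,\PP=\mathbb{C}\phi$, and it equals $L$ on the closed invariant subspace $\mathrm{range}\,\QQ=\ker\PP=\{w:\la\phi^\dagger,w\ra=0\}$.

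To finish (i) I would identify $\sigma\big(L|_{\mathrm{range}\,\QQ}\big)$. Since $L$ is elliptic on the closed manifold $\Sc$ it has compact resolvent, so $\sigma(L)$ is a discrete set of eigenvalues of finite multiplicity with no finite accumulation point; the same holds for the restriction of $L$ to the closed invariant subspace $\mathrm{range}\,\QQ$. By Lemma~\ref{principal} the $\lambda$-eigenspace of $L$ is exactly the line $\mathbb{C}\phi$, and $\phi\notin\mathrm{range}\,\QQ$ (indeed $\QQ\phi=0$); also $\lambda$ cannot be a generalized eigenvalue inside $\mathrm{range}\,\QQ$, because pairing a relation $(L-\lambda)w=\phi$ against $\phi^\dagger$ would force $0=\la\phi^\dagger,\phi\ra>0$. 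Hence $\lambda\notin\sigma\big(L|_{\mathrm{range}\,\QQ}\big)$ and therefore $\sigma(A)=\{0\}\cup\big(\sigma(L)\setminus\{\lambda\}\big)$. Discreteness of $\sigma(L)$ then makes $\mathrm{dist}\big(\lambda,\,\sigma(L)\setminus\{\lambda\}\big)$ strictly positive, and any $c_0$ below this distance satisfies \eqref{eq:Amubound}.

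For (ii), observe first that $A-\lambda\QQ$ vanishes on $\mathrm{range}\,\PP$ and equals $L-\lambda$ on $\mathrm{range}\,\QQ$, so $(A-\lambda\QQ)u=(L-\lambda)\QQ u$ for every $u$. By the Fredholm theory for elliptic operators on a closed manifold, $L-\lambda\colon W^{s,p}\to W^{s-2,p}$ has index $0$, kernel $\mathbb{C}\phi$, and closed range; since the range is the annihilator of $\ker(L^\dagger-\lambda)=\mathbb{C}\phi^\dagger$, it equals $\{g:\la\phi^\dagger,g\ra=0\}=\mathrm{range}\,\QQ$. Thus $L-\lambda$ restricts to a continuous bijection of the closed subspace $\mathrm{range}\,\QQ\cap W^{s,p}$ onto the closed subspace $\mathrm{range}\,\QQ\cap W^{s-2,p}$, and the bounded inverse theorem yields a constant $C$ with $\|w\|_{W^{s,p}}\le C\,\|(L-\lambda)w\|_{W^{s-2,p}}$ for all such $w$; applying this to $w=\QQ u$ gives \eqref{eq:Aest}. (Equivalently, one can combine the a priori elliptic estimate $\|w\|_{W^{s,p}}\le C(\|(L-\lambda)w\|_{W^{s-2,p}}+\|w\|_{L^p})$ from \cite{ADN} with a Rellich compactness argument to absorb the zero-order term on the subspace $\mathrm{range}\,\QQ$, on which $L-\lambda$ is injective.)

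I expect the only genuinely delicate point to be the one forced by the failure of self-adjointness: without the adjoint principal eigenfunction $\phi^\dagger$ there is no projection onto the $\lambda$-eigenspace, so the whole scheme rests on Lemma~\ref{principal} (simplicity of $\lambda$ for both $L$ and $L^\dagger$, positivity of $\phi$ and $\phi^\dagger$) and on the non-degeneracy $\la\phi^\dagger,\phi\ra\neq 0$. Once the commuting rank-one projection $\PP$ and the induced invariant splitting are in hand, both (i) and (ii) follow from standard facts: discreteness of the spectrum of an elliptic operator on a compact manifold, and the closed-range/Fredholm properties that make $L-\lambda$ boundedly invertible off its one-dimensional kernel.
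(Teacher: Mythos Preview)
Your proof is correct and follows essentially the same route as the paper's: both establish the commutation relations $L\PP=\PP L=\lambda\PP$, define $A=L-\lambda\PP=L\QQ$, identify $\sigma(A)=\{0\}\cup(\sigma(L)\setminus\{\lambda\})$ via the orthogonality $\la\phi^\dagger,\psi\ra=0$ for non-principal eigenvectors, and then invoke Fredholm theory for $L-\lambda$ on the invariant subspace $\mathrm{range}\,\QQ$ to get the Sobolev estimate. Your version is in fact more careful than the paper's (which simply says ``it follows easily'' for the spectrum identity and ``by the Fredholm alternative'' for (ii)): you explicitly rule out a Jordan block at $\lambda$, identify the range of $L-\lambda$ as the annihilator of $\phi^\dagger$, and apply the bounded inverse theorem, all of which are implicit in the paper's sketch.
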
 
\begin{rmk} We refer to the constant $c_0$ as the spectral gap.
\end{rmk} 

\begin{proof} 
We shall consider complex eigenvalues and eigenvectors, 
and use the sesquilinear $L^2$ pairing 
$$
\la u, v \ra = \int_\Sc \bar u v.
$$
It is straightforward to check that $L\PP = \PP L = \lambda \PP$,  and if we
define the operator $A$ by $A = L \QQ$, so that  
$$
L = \lambda \PP + A, 
$$
then 
\begin{equation}\label{eq:Aprop} 
A \PP  = \PP A = 0, \quad A \QQ = \QQ A = A, \quad A^\dagger \PP^\dagger =
 0, \quad 
A^\dagger \QQ^\dagger  = A^\dagger,
\end{equation}  
where we used that  $\PP^{\dagger}$ is the projector onto $\phi^{\dagger}$ i.e.
$\PP^{\dagger} f = \phi^{\dagger}  \la \phi, f \ra / \la \phi^\dagger, \phi \ra $.
It follows that the range of $A$ lies in the space orthogonal to
$\phi^{\dagger}$.

Let $\psi$ be an eigenvector for $L$ corresponding to a non-principal
eigenvalue $\mu \ne \lambda$. We find that 
$$
\mu \la \phi^\dagger, \psi \ra = 
\la \phi^\dagger, L \psi \ra = 
\la L^\dagger \phi^\dagger, \psi \ra = 
\lambda \la \phi^\dagger, \psi \ra. 
$$
Since $\mu \ne \lambda$ by assumption, we conclude that 
$\la \phi^\dagger , \psi \ra = 0$. Combining the previous  facts, it 
follows easily that the spectrum $\sigma(A)$ of $A$ satisfies 
$$
\sigma(A) = ( \sigma(L) \setminus \{\lambda\} ) \cup \{ 0 \}.
$$
Since $\lambda$ is the principal eigenvalue of $L$, so that $\lambda \leq
\Real (\mu)$ for any non-principal eigenvalue $\mu$, 
(\ref{eq:Amubound}) follows from the fact that $L$ has discrete spectrum. 

To prove point \ref{point:lam-ii}, we first note that
by construction, $A - \lambda \QQ$ is a Fredholm
operator which maps the space orthogonal
to $\phi^{\dagger}$ into itself. 
By point \ref{point:lam-i}, 
the spectrum of 
$A - \lambda \QQ$ is bounded from below by $c_0 > 0$. 
Thus, by the Fredholm alternative
$A - \lambda \QQ : \la \phi^{\dagger}
\ra^{\bot} \rightarrow \la \phi^{\dagger}\ra^{\bot}$
is invertible on Sobolev spaces.
\end{proof} 

\section{Stability of MOTS}
\label{stability}

The concept of stability of a marginally outer trapped surface 
with respect to a given slice $\Sigma$ is
a central issue of our paper and crucial for the application to existence of
marginally outer trapped tubes. 
We first briefly comment on stability of minimal surfaces embedded in
Riemannian  manifolds $(\Sigma,\gamma)$, disregarding any embeddings in
spacetimes.  Letting $m^{i}$ be the unit normal to $\Sc$, the stability
operator $L_m (\zeta) \equiv \delta_{\psi m} p$ where $p$ is the mean
curvature of $\Sc$ reads (e.g. \cite{CM}) $L_m \zeta = - \Delta_{\Sc} \zeta -
( R_{ij} m^i m^j + K_{ij} K^{ij} ) \zeta$, where and $R_{ij}$ and $K_{ij}$
are the Ricci tensor of $(\Sigma,\gamma)$ and second fundamental form of
$\Sc$ respectively.  This expression also follows from (\ref{Lv}) by taking
$\Sc$ immersed in a time symmetric spacelike hypersurface $(\Sigma,\gamma)$
and choosing $v = m$.  As $L_m$ is self-adjoint its  principal eigenvalue
$\lambda$ can be represented as the Rayleigh-Ritz formula (\ref{RR}). In
terms of the latter, the standard formula for the variation of the area $A$
of a minimal surface along a vector  $v^{i} = \psi m^{i}$  gives
\begin{equation}
\label{delA}
\delta^2_v A = \delta^2_v \int_\Sc\eta_\Sc =  \int_\Sc\psi \delta_v p
\eta_\Sc = \int_\Sc \psi L_m \psi \eta_\Sc \ge \lambda \int_\Sc \psi^2
\eta_\Sc.
\end{equation}

The minimal surface is called {\it stable} if $\lambda \geq 0$, and
(\ref{delA}) together with (\ref{RR}) shows that this is equivalent to
$\delta^2_v A \ge 0$, for  all smooth variations $\delta_v$.

We wish to characterize stable MOTS embedded in arbitrary hypersurfaces in
spacetime  in a similar way as stable minimal surfaces in Riemannian
manifolds.  In the general case we lose the connection between stability and
the second variation of the area (except if the variation is directed along
$l^{\alpha}$).  Using the discussion in the previous section, we can now
define stability either in terms of a  sign condition on suitable first
variations of $\thl$ at the MOTS or,  in view of (\ref{Lv}) by a sign
condition on the principal eigenvalue of the stability operator.  In either
case, we wish to establish a relation between both concepts.  Here we choose
the variational formulation as definition, from which we show the properties
of the principal eigenvalue.

We could obtain a variational definition by replacing (\ref{RR}) by one of
(\ref{var1}), (\ref{var2}) or (\ref{DV2}) but the resulting definition does
not seem very illuminating. Therefore, instead of defining stability in terms
of sign requirements on {\it integrals over} $\Sc$ for {\it all} positive
variations of $\thl$, we now require the existence of {\it at least one}
variation  along which $\thl$ has a sign {\it everywhere} on $\Sc$.  This
definition also enables us to introduce an  important refinement, namely to
distinguish whether this preferred variation of $\thl$ is just  non-negative
everywhere, or even positive somewhere. (We slightly expand our earlier
presentation \cite{AMS}). 
We recall that $v^{\alpha}$ is linearly independent
of $l^{\alpha}$ everywhere on $\Sc$.

\begin{definition}\label{def:stablyouter} 
A marginally outer trapped surface  $\Sc$ is called {\bf stably outermost}
with respect to the direction $v$ iff there exists a function $\psi \geq 0$,
$\psi \not\equiv 0$, on $\Sc$ such that  $\Der_{\psi v} \thl \geq 0$.  $\Sc$
is called {\bf strictly stably outermost} with respect to the direction v if,
moreover, $\Der_{\psi v} \thl \ne 0$  somewhere on $\Sc$.
\end{definition} 
We will omit the phrase ``with respect to the direction $v$'' when this  is
clear from the context.  We now establish the connection between stability
and the sign of the principal eigenvalue.

\begin{proposition} \label{lem:eigen} 
Let $\Sc \subset \Sigma$ be a MOTS and   let $\lambda$ be the principal
eigenvalue of the corresponding operator $L_v$. Then $\Sc$ is stably
outermost iff $\lambda \geq 0$ and strictly stably outermost iff $\lambda >0$.
\end{proposition}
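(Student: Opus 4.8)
The plan is to prove both equivalences by relating the variational condition $\delta_{\psi v}\thl = L_v\psi$ to the principal eigenvalue $\lambda$ of $L_v$, using Lemma \ref{varth} to identify $\delta_{\psi v}\thl$ with $L_v\psi$, and then applying the maximum principle in the form of Lemma \ref{MP} together with the properties of the principal eigenfunction from Lemma \ref{principal}.

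\medskip\noindent\textbf{Sufficiency ($\lambda \geq 0 \Rightarrow$ stably outermost, $\lambda > 0 \Rightarrow$ strictly).} This direction is the easy one. Let $\phi > 0$ be the principal eigenfunction of $L_v$, so $L_v\phi = \lambda\phi$. Taking $\psi = \phi$, we have $\delta_{\phi v}\thl = L_v\phi = \lambda\phi$. If $\lambda \geq 0$ then $\lambda\phi \geq 0$ with $\phi \not\equiv 0$, so $\Sc$ is stably outermost by Definition \ref{def:stablyouter}. If $\lambda > 0$, then $\lambda\phi > 0$ everywhere, in particular $\lambda\phi \not\equiv 0$, so $\Sc$ is strictly stably outermost.

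\medskip\noindent\textbf{Necessity (stably outermost $\Rightarrow \lambda \geq 0$, strictly $\Rightarrow \lambda > 0$).} Suppose $\Sc$ is stably outermost, so there is $\psi \geq 0$, $\psi\not\equiv 0$, with $f \equiv L_v\psi = \delta_{\psi v}\thl \geq 0$. Argue by contradiction: suppose $\lambda < 0$. Here I would use the adjoint operator. By Lemma \ref{principal}\ref{point:princ-ii}, $L_v^\dagger$ has the same principal eigenvalue $\lambda$, with positive principal eigenfunction $\phi^\dagger > 0$. Pairing, $\la \phi^\dagger, f\ra = \la \phi^\dagger, L_v\psi\ra = \la L_v^\dagger\phi^\dagger, \psi\ra = \lambda\la\phi^\dagger,\psi\ra$. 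The left side is $\geq 0$ since $\phi^\dagger > 0$ and $f \geq 0$; in fact it is $> 0$ unless $f \equiv 0$. The right side equals $\lambda\la\phi^\dagger,\psi\ra < 0$ since $\lambda < 0$, $\phi^\dagger > 0$, and $\psi \geq 0$, $\psi\not\equiv 0$ force $\la\phi^\dagger,\psi\ra > 0$. This is a contradiction, so $\lambda \geq 0$. For the strict statement, suppose $\Sc$ is strictly stably outermost, so additionally $f \not\equiv 0$; by contradiction suppose $\lambda = 0$. Then by Lemma \ref{MP}\ref{point:MPi} applied to $L_v\psi = f \geq 0$, the hypothesis $\lambda = 0$ forces $f \equiv 0$, contradicting $f \not\equiv 0$. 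Combined with $\lambda \geq 0$ (already established), this gives $\lambda > 0$.

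\medskip\noindent The main subtlety is the non-self-adjointness of $L_v$: one cannot simply use a Rayleigh-Ritz argument as for minimal surfaces, and must instead route the necessity direction either through the adjoint eigenfunction (as above) or directly through the maximum principle Lemma \ref{MP}. I expect the cleanest writeup uses Lemma \ref{MP}\ref{point:MPi} for both necessity sub-cases: if $\lambda = 0$ then $f \equiv 0$ and $\psi = C\phi$ (handling the strict case immediately), while if one wanted to rule out $\lambda < 0$ one can instead note that the hypotheses of Lemma \ref{MP} combined with $\psi \geq 0$, $\psi \not\equiv 0$ and the strong maximum principle force $\psi > 0$, and then pair against $\phi^\dagger$ as above. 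The only genuine point requiring care is verifying $\la\phi^\dagger,\psi\ra > 0$, which follows since $\phi^\dagger$ is strictly positive and $\psi$ is nonnegative and not identically zero.
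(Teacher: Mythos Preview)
Your proof is correct and follows essentially the same approach as the paper: both directions use the principal eigenfunction $\phi$ for sufficiency and the adjoint eigenfunction $\phi^\dagger$ for necessity via the pairing $\lambda\la\phi^\dagger,\psi\ra = \la\phi^\dagger, L_v\psi\ra$. The only cosmetic difference is that the paper handles the strict case directly from this pairing (since $L_v\psi \geq 0$, $\not\equiv 0$, and $\phi^\dagger > 0$ give $\la\phi^\dagger, L_v\psi\ra > 0$, hence $\lambda > 0$), whereas you detour through Lemma~\ref{MP}\ref{point:MPi} to rule out $\lambda = 0$; both are valid, but the paper's route is slightly more uniform and avoids invoking the maximum principle.
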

\begin{proof}
If $\lambda \geq 0 (>0)$, choose $\psi$ in the definition of
(strictly) stably outermost as a positive eigenfunction $\phi$ corresponding
to $\lambda$.  Then $\Der_{\phi v} \thl =  L_{v} \phi = \lambda \phi \geq 0
(>0)$.  For the converse, we note that from Lemma \ref{principal} the
adjoint $L^{\dagger}_v$   (with respect to the standard $L^2$ inner product
$\la \, , \ra$ on $\Sc$) has the same principal eigenvalue as $L_v$, and a
positive principal eigenfunction $\phi^{\dagger}$.  Thus, for $\psi$ as in
the definition of (strictly) stably outermost,
$$
\lambda \la \phi^{\dagger}, \psi \ra = \la L_v^{\dagger} \phi^{\dagger}, \psi
\ra = \la \phi^{\dagger}, L_v \psi  \ra \geq 0 (>0),
$$
Since $\la \phi^{\dagger}, \psi \ra > 0$, the lemma follows. 
\end{proof}
The stability properties discussed above always refer to a MOTS $\Sc$ with
respect to a fixed direction $v^{\alpha}$ normal to $\Sc$ or a fixed
hypersurface to which $v^{\alpha}$ is tangent. This raises the question as to
how the stability properties depend on this direction. Using definition
(\ref{vectorv}) we can change $v^{\alpha}$ by adjusting the function $V$,
which can take any value, and study the dependence of the principal
eigenvalue $\lambda$ on this function.  This yields the following

\begin{proposition}
\label{varlam}
Let $\Sc$ be a MOTS and let $\lambda_v$, $\lambda_{v'}$ be the principal
eigenvalues of the stability operators in the directions $v^{\alpha}$ and $
v'^{\alpha}$ defined by (\ref{vectorv}), with some functions $V$ and $ V'$
respectively. Let $\phi'$ and $\phi^{\dagger}$ be principal eigenfunctions of
$L_{v'}$ and $L^{\dagger}_{v}$ respectively. Then
\begin{equation}
\label{equlam}
(\lambda_v - \lambda_{ v'}) \la \phi^{\dagger}, \ \phi' \ra =  \la
 \phi^{\dagger}, ( V' - V) W \phi' \ra.
\end{equation}
\end{proposition}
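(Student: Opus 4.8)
The plan is to exploit the relation \eqref{LvLk}, which expresses both $L_v$ and $L_{v'}$ in terms of the single $V$-independent operator $L_{-k}$; subtracting the two instances gives
\begin{equation}
\label{eq:Ldiff}
L_v - L_{v'} = (V' - V)\, W,
\end{equation}
i.e. the two stability operators differ only by multiplication by the smooth function $(V'-V)W$. This is the algebraic heart of the statement. Everything else is the same pairing trick already used in the proof of Proposition \ref{lem:eigen}: one pairs an eigenfunction of one operator against an eigenfunction of the adjoint of the other, uses that the adjoint shares the principal eigenvalue (Lemma \ref{principal}, point \ref{point:princ-ii}), and reads off the identity.

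Concretely, I would start from $L_{v'}\phi' = \lambda_{v'}\phi'$ and $L^\dagger_v \phi^\dagger = \lambda_v \phi^\dagger$, where by Lemma \ref{principal} the eigenvalues $\lambda_v,\lambda_{v'}$ are real and the eigenfunctions $\phi',\phi^\dagger$ may be taken real and positive. Then compute $\la \phi^\dagger, L_{v'}\phi'\ra$ in two ways. On one hand it equals $\lambda_{v'}\la\phi^\dagger,\phi'\ra$. On the other hand, writing $L_{v'} = L_v - (V'-V)W$ from \eqref{eq:Ldiff} and moving $L_v$ onto $\phi^\dagger$ via the adjoint,
\begin{equation}
\la\phi^\dagger, L_{v'}\phi'\ra = \la\phi^\dagger, L_v\phi'\ra - \la\phi^\dagger, (V'-V)W\phi'\ra = \lambda_v\la\phi^\dagger,\phi'\ra - \la\phi^\dagger, (V'-V)W\phi'\ra.
\end{equation}
Equating the two expressions and rearranging yields exactly \eqref{equlam}. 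Since $\phi'$ and $\phi^\dagger$ are positive, $\la\phi^\dagger,\phi'\ra > 0$, so the identity can also be solved for $\lambda_v - \lambda_{v'}$ if desired, but the stated form needs no such division.

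There is essentially no obstacle here: the only thing to be careful about is that \eqref{LvLk} was derived for $L_v$ in terms of $L_{-k}$ using linearity \eqref{linvar} and the Raychaudhuri equation \eqref{Ray}, and that the same derivation applies verbatim to $L_{v'}$, so that the difference \eqref{eq:Ldiff} is legitimate — in particular $W$ from \eqref{EqW} is manifestly independent of the choice of $V$ (it depends only on $l^\alpha$, the second fundamental form and the Einstein tensor). Once \eqref{eq:Ldiff} is in hand, the argument is the same one-line pairing computation as in Proposition \ref{lem:eigen}, and one should remark that this immediately recovers the self-adjoint/minimal-surface monotonicity phenomenon: if $W \geq 0$ (null energy condition) and $V' \geq V$ pointwise, then the right-hand side of \eqref{equlam} is non-negative, hence $\lambda_v \geq \lambda_{v'}$, so moving the direction "inward" cannot destroy (strict) stability.
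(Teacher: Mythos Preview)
Your proof is correct and follows essentially the same approach as the paper: derive the operator identity $L_v - L_{v'} = (V'-V)W$ from \eqref{LvLk}, then pair eigenfunctions using the adjoint relation from Lemma \ref{principal}\ref{point:princ-ii}. The paper's version is terser---it simply writes $0 = \la \phi^\dagger, (L_v - L_{v'} - (V'-V)W)\phi' \ra$ and expands---but the content is identical; your final remark about the monotonicity consequence under the null energy condition is exactly what the paper records immediately after the proposition in inequality \eqref{lam1}.
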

\begin{proof} 
From (\ref{LvLk}) it follows that,
\begin{equation}
\label{LvLvprime}
L_v  = L_{v'}  + ( V' - V) W
\end{equation}
Hence, $0 =  \la \phi^{\dagger}, \left( L_v  -L_{v'}  - (V' - V) W \right)
 \phi' \ra$ and the proposition follows using point \ref{point:princ-ii} 
of Lemma \ref{principal}.  
\end{proof} 
Applying some trivial estimates to (\ref{equlam}) we find
\begin{align}
\label{lam1}
\lambda_{ v'} + \inf_{\Sc}[( V' - V)W] & \le \lambda_v \le \lambda_{ v'} +
\sup_{\Sc}[( V' - V)W].
\end{align}
The inequality (\ref{lam1}) implies in particular that, in spacetimes
satisfying the null energy condition, a  MOTS which is stable or strictly
stable with respect to $v^{\alpha}$ will have the same property  with respect
to all directions ``tilted away'' from the null direction $l^{\alpha}$
defining the MOTS, and it puts a limit to the allowed amount of  tilting of
$v^{\alpha}$  ``towards'' $l^{\alpha}$ which preserves stability or strict
stability.   In particular, if a MOTS $\Sc$ is (strictly) stable with respect
to some spacelike  direction, then it is also (strictly) stable with respect
to the null direction $- k^{\alpha}$  complementary to $l^{\alpha}$ in the
normal basis $\{l^{\alpha},k^{\alpha} \}$.

\section{The graph representation of MOTS}
\label{graph}

We now assume that $({\cal M},g)$ is foliated by  hypersurfaces $\Sigma_t$
defined as the level sets of a smooth function $t$.  Assume also  that one of
the  elements of the foliation, say $\Sigma_0$, contains a MOTS $\Sc_0$.  We
further assume that $\Sigma_0$ is transverse to the null normal vector
$l^{\alpha}$ on $\Sc$, i.e. $T_p M = T_p \Sigma_0 \oplus ( l^{\alpha} |_p
)$ for all $p \in \Sc_0$.  However, for most results we do not assume any
specific causal character for the foliation $\Sigma_{t}$ and we allow leaves
which are  spacelike, timelike, null or which change their causal character.
The transversality above allows us to fix $l^{\alpha}$ so that $l^{\mu}
\partial_{\mu} (t) = 1$.

The main goal of this paper is to examine under which conditions there is a
MOTT $\TT$ such that  $\Sc_t = \TT \cap \Sigma_{t}$ is a MOTS for all $t$.
For this purpose it is useful to define an abstract copy  $\widehat\Sc$ of
$\Sc_0$ detached from spacetime. We are looking for a one-parameter family of
smooth immersions $\chi_{t} : \widehat\Sc \longrightarrow \Sigma_{t}$  for
$t$ in some open interval  $I \ni 0$ such that $\Sc_t \equiv \chi_{t}
(\widehat\Sc)$ is a MOTS and the map
\begin{align*}  
\Psi : I \times \widehat\Sc & \longrightarrow
 {\cal M} \\  
(t,p) & \longrightarrow  i_{\Sigma_{t}} \circ \chi_{t} (p),
\end{align*} 
is smooth, where $i_{\Sigma_{t}}: \Sigma_t \hookrightarrow {\cal M}$ is the
natural inclusion.  In other words, the collection of MOTS should depend
smoothly on $t$.  Using arbitrary local coordinates $y^i$ on $\Sigma_t$,
$x^A$ in $\widehat\Sc$,  and writing $y^{\mu} = \{ t, y^i \}$, the immersion
$\Psi$ takes the local form
\begin{equation}
\label{emb}
\Psi^{\mu} (t, x^A ) = (t, \chi^i_t (x^A)).
\end{equation}
 For the null expansion we obtain
\begin{equation} 
\thl = l^{t}_{\mu} h^{AB}_t K^{\mu}_{AB} = h^{AB}_t l^t_{\mu} \left (
\Gamma^{(\Sc) \,C}_{t \, \, \,AB}  \frac{\partial \Psi^{\mu}}{\partial x^C}
- \frac{\partial^{2} \Psi^{\mu}}{\partial x^A \partial x^B} - \left
. \Gamma^{\mu}_{\nu\rho} \right |_{x=\Psi^{\alpha}} \frac{\partial
\Psi^{\nu}}{\partial x^A} \frac{\partial \Psi^{\rho}}{\partial x^B} \right ),
\label{thl}
\end{equation} 
where $h_{t \, AB}$ is the induced metric on $\Sc_t$, $\Gamma^{(\Sc) \,C}_{ t
\, \, \,AB}$ are the corresponding Christoffel symbols and
$\Gamma^{\mu}_{\nu\rho}$ are the spacetime Christoffel symbols. The vector
$l^t_{\alpha}(x^A)$ satisfies $l^{t=0}_{\alpha} = l_{\alpha}$ and solves
\begin{equation}
\label{null} 
l^{t}_{\mu} l^{t \, \mu} = 0,  \quad l^t_{\mu} \frac{\partial
\Psi^{\mu}}{\partial x^A} =0, \qquad l^{t \, \mu} \partial_{\mu} t  = 1.
\end{equation}
under the condition that $l^t_{\alpha}$ depends continuously (and hence
smoothly) on $t$.  The condition we want to solve is $\thl=0$ which is a
scalar condition on the immersion $\chi_t$.  Since the problem is
diffeomorphism invariant, we need to choose coordinates in order to convert
the  geometric problem into a PDE problem. The idea is to construct a
suitable coordinate system adapted to the initial MOTS $\Sc_0$ and to
restrict the class of allowed MOTS to be suitable graphs in this coordinate
system.  Since  $\Sc_0$ may have self-intersections, the coordinate system
and the graphs we will use are only local in the sense that for any point $p
\in \widehat{\Sc}$, there is a spacetime  neighbourhood ${\cal V}_p$ of its
image $\Phi(p)$ such that the spacetime metric takes a special form in
suitable coordinates adapted  to the surface (strictly speaking, to the
connected component of $\Sc_0 \cap {\cal V}_p$ containing
$\Phi(p))$. In the following lemma we introduce this adapted coordinate
system.  In order to avoid cumbersome notation, we will assume $\Sc_0$ to be
embedded for this lemma. Since the applications of this result below  will
always be local on $\Sc_0$ in the sense just described (and all immersions
are locally embeddings) this suffices for our purposes.

The following lemma does not require $\Sc_0$ to be a MOTS.
\begin{lemma} 
\label{coordinates}
Let $(\M,g)$, $\Sigma_t$ and $t$ be as before.  There exists a spacetime
neighbourhood ${\cal V}$ of a smooth, embedded closed spacelike 2-surface
$\Sc_0 \subset \Sigma_0 \subset {\cal M}$, local coordinates $\{t,r,x^A\}$ on
${\cal V}$ adapted to the foliation  $\{\Sigma_{t}\}$ and functions
$Z,\varphi,\eta^A$ and $h_{AB}$ such that the metric can be written as
\begin{equation}
\label{met}
g_{\mu\nu} dx^{\mu} dx^{\nu} = 2 e^Z dt dr + \varphi dr^2 + h_{AB} \left (
dx^A + \eta^A dr \right ) \left ( dx^B + \eta^B dr \right ),
\end{equation}
where $\Sc_0 \cap {\cal V} = \{t = 0, r=0 \}$, $Z(t=0,r=0,x^A)=0$ and
$h_{AB}$ is a positive definite $(n-2)$-matrix.  The function $\varphi$ may
have any sign (even a changing one) reflecting  the fact that the
hypersurfaces ${\Sigma_t}$, which coincide with the $t=const$ surfaces, are
of any causal character.
\end{lemma}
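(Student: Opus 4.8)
The plan is to construct a Gaussian-null-type coordinate system built from the null congruence issuing from $\Sc_0$ in the direction $l^\alpha$. Observe first that the distinctive feature of the target metric (\ref{met}) is simply that $g_{tt}=0$ and $g_{tA}=0$; equivalently, in these coordinates $\partial_t$ is null and $g$-orthogonal to the coordinate $2$-surfaces $\{t=\mathrm{const},\,r=\mathrm{const}\}$, so the level sets $\{r=\mathrm{const}\}$ are null hypersurfaces ruled by the null curves $t\mapsto\Psi(t,r,x^A)$. Once one has a chart $\{t,r,x^A\}$ with $g_{tt}=g_{tA}=0$ in which $t$ is the given foliation function, the rest is bookkeeping: set $e^{Z}:=g_{tr}$, $h_{AB}:=g_{AB}$, $\eta^A:=h^{AB}g_{rB}$, $\varphi:=g_{rr}-h_{AB}\eta^A\eta^B$, and a direct expansion reproduces $g$ in the form (\ref{met}); moreover $h_{AB}=g_{AB}$ reduces on $\Sc_0$ to the first fundamental form of $\Sc_0$, hence is positive definite near $\Sc_0$, while $\varphi$ is unconstrained in sign.

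To produce such a chart I would proceed in four steps. First, foliate a neighbourhood of $\Sc_0$ in $\Sigma_0$ by spacelike $2$-surfaces: choose a vector field $\xi$ on $\Sigma_0$ near $\Sc_0$, transverse to $\Sc_0$, normalised so that $g(l,\xi)=1$ along $\Sc_0$ --- possible because the hypothesis $T_p\M=T_p\Sigma_0\oplus(l^\alpha|_p)$ forces $g(l,\cdot)$ to be non-zero on any vector transverse to $\Sc_0$ inside $\Sigma_0$ --- and let $r$ be the flow parameter of $\xi$ and $x^A$ coordinates on $\Sc_0$ dragged along $\xi$; for small $r$ this gives coordinates on a neighbourhood of $\Sc_0$ in $\Sigma_0$ with $\Sc_0=\{r=0\}$ and the surfaces $\Sc_{0,r}=\{r=\mathrm{const}\}$ spacelike. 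Second, along $\Sigma_0$ define the null vector field $\ell$ by requiring, at each point, that $\ell$ be null, orthogonal to $\Sc_{0,r}$, satisfy $dt(\ell)=1$, and lie on the same sheet of the null cone as $l$; this is smooth near $\Sc_0$ by the implicit function theorem, since the orthogonal complement of $T\Sc_{0,r}$ is a Lorentzian $2$-plane on which $dt$ is non-zero ($dt(l)=1$ on $\Sc_0$), and $\ell=l$ on $\Sc_0$. Third, flow along null geodesics: for $p$ near $\Sc_0$ in $\Sigma_0$ let $\gamma_p$ be the null geodesic with $\gamma_p(0)=p$, $\dot\gamma_p(0)=\ell(p)$, and put $\Psi(s,r,x^A)=\gamma_{p(r,x^A)}(s)$; at a point of $\Sc_0$ the differential of $\Psi$ sends $\partial_s,\partial_r,\partial_{x^A}$ to $l,\xi$ and $T\Sc_0$, which span $T\M$ precisely by the transversality hypothesis, so $\Psi$ is a local diffeomorphism onto a spacetime neighbourhood $\V$ in which $\mathcal N_r:=\{r=\mathrm{const}\}$ are null hypersurfaces with null generators $\dot\gamma$. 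Fourth, since $\partial_s t=dt(\dot\gamma)$ equals $1$ along $\Sc_0$ and so is non-zero near $\Sc_0$, one may replace $s$ by $t$ as a coordinate on a smaller $\V$; then $\{t=\mathrm{const}\}\cap\V$ are pieces of the leaves $\Sigma_t$ and $\Sc_0\cap\V=\{t=0,\,r=0\}$.

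It then remains to verify the metric components in the chart $\{t,r,x^A\}$. Here $\partial_t=(dt(\dot\gamma))^{-1}\dot\gamma$ is a non-zero multiple of a null vector, so $g_{tt}=0$. For $g_{tA}=0$ one checks that $g(\dot\gamma,\partial_{x^A})$ vanishes along each generator: it vanishes at $s=0$ since $\ell\perp\Sc_{0,r}$, and $\frac{d}{ds}\,g(\dot\gamma,\partial_{x^A})=g(\nabla_{\dot\gamma}\dot\gamma,\partial_{x^A})+g(\dot\gamma,\nabla_{\partial_{x^A}}\dot\gamma)=0$ --- the first term because $\gamma$ is a geodesic, the second because $2\,g(\dot\gamma,\nabla_{\partial_{x^A}}\dot\gamma)=\partial_{x^A}(g(\dot\gamma,\dot\gamma))=0$; since passing from $s$ to $t$ only adds multiples of $\dot\gamma$ to the coordinate vectors and $g(\dot\gamma,\dot\gamma)=0$, the vector $\partial_A$ of the new chart remains $g$-orthogonal to $\partial_t$. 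Next $g_{tr}=g(\partial_t,\partial_r)$ cannot vanish, since otherwise $\partial_t$ would be $g$-orthogonal to the basis $\{\partial_t,\partial_A,\partial_r\}$ of $T\M$; evaluating at $\Sc_0$ gives $g_{tr}=g(l,\xi)=1$, which yields positivity near $\Sc_0$ and the normalisation $Z=0$ on $\Sc_0$. Defining $e^Z,h_{AB},\eta^A,\varphi$ as above completes the construction.

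I expect the main obstacle to be showing that the null-geodesic flow genuinely yields a coordinate chart around all of $\Sc_0$: one must restrict the flow parameter to stay below the first caustics of the congruence, check that the transversality hypothesis is exactly what makes $d\Psi$ invertible along $\Sc_0$, and keep careful track of how the reparametrisation $s\mapsto t$ changes the coordinate vector fields (this is what makes the otherwise ``obvious'' identities $g_{tt}=g_{tA}=0$ require the small computation above). Secondary delicate points are the smooth selection of the correct sheet of the null cone in the definition of $\ell$ and fixing the sign of $g_{tr}$; both are controlled in a small enough neighbourhood of $\Sc_0$, and since $\Sc_0$ is compact these local constructions can be patched together with $\V$ shrunk uniformly.
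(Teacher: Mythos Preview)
Your proof is correct and follows essentially the same strategy as the paper: (i) foliate a neighbourhood of $\Sc_0$ inside $\Sigma_0$ by spacelike $2$-surfaces via a transverse flow, (ii) extend $l$ along $\Sigma_0$ as the null normal to these leaves with $dt(l)=1$, (iii) flow off $\Sigma_0$ along null geodesics, and (iv) verify $g_{tt}=g_{tA}=0$ and $g_{tr}>0$. The differences are cosmetic. First, the paper takes the transverse vector on $\Sigma_0$ to be \emph{orthogonal} to $\Sc_0$ (your $\xi$ is merely transverse with $g(l,\xi)=1$); your weaker choice still suffices for the lemma as stated. Second, the paper parametrises the null geodesics by $t$ from the outset, so that $\partial_t=l$ is geodesic but non-affine, $l^\alpha\nabla_\alpha l^\beta=b\,l^\beta$; then $g_{tA}=0$ and $g_{tr}>0$ follow from the linear ODE $\partial_t g_{\mu t}=b\,g_{\mu t}$ together with the initial values on $\Sigma_0$. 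You instead use an affine parameter $s$ first, prove $g(\dot\gamma,\partial_{x^A})=0$ by the standard Gaussian-null computation, and then reparametrise $s\mapsto t$ --- which obliges you to check (as you do) that the reparametrisation only shifts $\partial_A$ by a multiple of $\dot\gamma$. Either route works; the paper's avoids the reparametrisation bookkeeping, yours is closer to the textbook Gaussian-null construction.
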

\begin{proof} 
Let $x^A$ be a local coordinate system on $\Sc_0$ and let
$v^{\alpha}$ be a smooth, nowhere zero vector field on $\Sc_0$ orthogonal to
this surface, tangent to $\Sigma_0$, and satisfying $v^{\alpha}l_{\alpha} =
1$.  We extend $v^{\alpha}$ to a smooth vector field (still denoted by $v$)
in a  neighbourhood of $\Sc_0$ within $\Sigma_0$ and define coordinates $(r,
x^A)$ on this  neighbourhood by solving the ODE (again with slight abuse of
notation)
$$
\left. \begin{array}{cc} v (r)=1, & r|_{\Sc_0} = 0,  \\ 
v ( x^A ) = 0, & x^A
       |_{\Sc_0} = x^A 
       \end{array} \right \}.
$$
In some tubular neighbourhood of $\Sc_0$ within $\Sigma_0$, this solution
defines a smooth  coordinate system in which we have $v^{\alpha}
\partial_{\alpha} =
\partial_r$.  Furthermore, for small enough $r_0$ the sets $\Sc_{r_0} = \{ r
= r_0 \}$  define spacelike, closed, codimension-two surfaces embedded in
$\Sigma_0$ and diffeomorphic to $\Sc_0$.  After restricting the range of $r$
if necessary, we can also extend $l^{\alpha}$ to a nowhere vanishing null
vector field on $\Sigma_0$ orthogonal to each level surface $\{\Sc_r \}$ and
satisfying $l^{\alpha} \partial_{\alpha} t =1$.

We finally consider the null geodesics with tangent vector $l^{\alpha}$
whose ``length'' is fixed by $l^{\mu} \partial_{\mu} t = 1$ everywhere in a
suitably small neighbourhood of $\Sc_0$ in $\M$. On this neighbourhood ${\cal
V}$ define functions $\{r, x^A \}$ by solving $l^{\alpha} (r)=0$,
$l^{\alpha}(x^A)=0$ and so that $r |_{\Sigma_0}$ and $x^A |_{\Sigma_0}$
coincide with the functions with the same name defined above.  The functions
$\{ t, r, x^A \}$ define a coordinate system on ${\cal V}$.  Since
$l^{\alpha} \partial_{\alpha} = \partial_t$
everywhere we  immediately have $g_{tt}=0$ for the
metric. On  $\Sigma_0 = \{t=0\}$ we have $(\partial_t, \partial_{x^A})=0$,
where  $(\,, \,)$ denotes scalar product with $g$. The geodesic equation
$l^{\alpha} \nabla_{\alpha} l^{\beta} = b l^{\beta}$ becomes $\partial_t
g_{\mu t} = b g_ {\mu t}$ in this coordinate system ($b$ need not be zero as
the null vector $l^{\alpha}$ has already been chosen). Hence $g_{t x^A} =0$
and $g_{tr} > 0$ on this neighbourhood (because $g_{tr} = 1$ on $\Sc_0$).
\end{proof} 
In terms of the coordinates (\ref{met}), we can consider local graphs $\Sc_t$
given by $r = f(x^A)$ on the surface $\{r = 0\} \subset \Sigma_t$. 
Restricting the allowed immersions (\ref{emb}) to those local
graphs,   (\ref{thl}) becomes an operator on $f$, which we call $\tht[f]$.
We formulate the ellipticity property of this operator in the subsequent
lemma.  We use the shorthands $f_A = \partial_{x^A} f$ and $\rho =  [\varphi
f^A f_A + \left ( 1 + \eta^A f_A \right )^2 ]_{r=f}$, where $\varphi$ and
$\eta$ are the metric coefficients in (\ref{met}) and capital Latin indices
are moved with $h_{AB}$ and its inverse $h^{AB}$.

\begin{lemma}
\label{unif}
Consider codimension-two immersions defined locally by a smooth function
$f(x^A)$ according to
\begin{align*}
\chi_{t}[f] :    \widehat\Sc & \longrightarrow \Sc_t \subset {\cal M}\\ 
           x^A &\longrightarrow \left (t=t, r = f(x^A), x^A = x ^A \right).
\end{align*}

If  $f^C f_C < C$  for a suitable positive constant  $C$,
the operator $\tht[f]$ is
quasilinear and uniformly elliptic.
\end{lemma}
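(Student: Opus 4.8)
The plan is to substitute the graph immersion $\chi_{t}[f]$, i.e.\ $\Psi^\mu = (t, f(x^A), x^A)$, into the formula (\ref{thl}) for $\thl$ and read off the second-order part in $f$. The key simplification is that the term in (\ref{thl}) carrying the intrinsic Christoffel symbols $\Gamma^{(\Sc)C}_{tAB}$ contributes nothing once it is contracted with $l^t_\mu$: by the orthogonality condition in (\ref{null}),
\[
l^t_\mu\,\Gamma^{(\Sc)C}_{tAB}\,\partial_C\Psi^\mu = \Gamma^{(\Sc)C}_{tAB}\,\big(l^t_\mu\,\partial_C\Psi^\mu\big) = 0 ,
\]
so that $\tht[f] = - h^{AB}_t\, l^t_\mu\big(\partial_A\partial_B\Psi^\mu + \Gamma^\mu_{\nu\rho}|_{r=f}\,\partial_A\Psi^\nu\,\partial_B\Psi^\rho\big)$. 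For a graph, $\partial_A\partial_B\Psi^\mu = \delta^\mu_r\, f_{AB}$, while $\partial_A\Psi^\nu$, the induced metric $h_{t\,AB}$, and --- being fixed by the purely algebraic system (\ref{null}) once $\Psi$ and the $\partial_A\Psi$ are known --- the null normal $l^t_\mu$ depend only on $x^A$, $f$ and $f_A$. Hence $\tht[f] = a^{AB}(x,f,\partial f)\, f_{AB} + b(x,f,\partial f)$ with $a^{AB} = -\,l^t_r\, h^{AB}_t$, which exhibits $\tht[f]$ as a quasilinear second-order operator.

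For ellipticity I would compute $l^t_r$ explicitly from (\ref{null}) using the metric (\ref{met}). The orthogonality equations give $l^t_A = - f_A\, l^t_r$; inserting this, together with the normalization $l^{t\,\mu}\partial_\mu t = 1$, into the null condition and simplifying yields $e^{-2Z}(l^t_r)^2\,\rho = 1$, where $\rho = [\varphi f^A f_A + (1+\eta^A f_A)^2]_{r=f}$ is exactly the quantity appearing in the statement. Thus $l^t_r = \pm e^{Z}/\sqrt{\rho}$, the sign being fixed by requiring $l^t$ to be future-directed; in particular $l^t_r \neq 0$. Positivity of $\rho$ is what makes this work, and it is consistent with $\Sc_t$ being spacelike, since a direct computation with (\ref{met}) gives $\det h_{t\,AB} = \rho\,\det h_{AB}$. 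Consequently $a^{AB} = \mp\, (e^{Z}/\sqrt{\rho})\, h^{AB}_t$ is a nonzero scalar multiple of the positive-definite inverse induced metric, hence definite, and $\tht[f]$ is elliptic.

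For the uniform ellipticity, note that ${\cal V}$ is a fixed, relatively compact coordinate domain, so $Z$, $\varphi$, $\eta^A$, $h_{AB}$ and their derivatives are bounded there and $h_{AB}$ is uniformly positive definite. Imposing $f^C f_C < C$ with $C$ small enough keeps $\rho$, and with it $e^{Z}/\sqrt{\rho}$, between two positive constants and keeps $h_{t\,AB} = h_{CD}(\delta^C_A + f_A\eta^C)(\delta^D_B + f_B\eta^D) + \varphi f_A f_B$ uniformly positive definite and bounded --- uniformly in $x^A$ and over all admissible $f$. This bounds the eigenvalues of $a^{AB}$ away from $0$ and $\infty$ uniformly, which is the assertion.

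The only genuinely delicate step is the first one: recognizing that contraction with $l^t_\mu$ annihilates the intrinsic-Christoffel term, so that the whole second-order dependence sits in the single term $\partial_A\partial_B\Psi^\mu = \delta^\mu_r f_{AB}$ and no hidden second derivatives of $f$ enter through $h_{t\,AB}$ or $l^t_\mu$; the remainder is bookkeeping with (\ref{met}). It is worth remarking that the gradient bound is essentially sharp when the reference leaves $\Sigma_t$ are timelike ($\varphi<0$): then $\rho$ can vanish for large $f^C f_C$, at which point $\Sc_t$ ceases to be spacelike and $\tht[f]$ degenerates.
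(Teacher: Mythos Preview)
Your proposal is correct and follows essentially the same route as the paper: you identify the principal part of $\tht[f]$ as $-\,l^t_r\, h_t^{AB} f_{AB}$, compute $l^t_r = e^{Z}/\sqrt{\rho}$ (the paper's $\nu$), and conclude uniform ellipticity from $\rho > \epsilon$ once $f^C f_C$ is small. Your observation that the intrinsic Christoffel term in (\ref{thl}) is killed by the orthogonality condition in (\ref{null}), and your closing remark on the sharpness of the gradient bound when $\varphi<0$, are nice additions not made explicit in the paper.
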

\begin{proof} 
We give explicitly some steps of this straightforward
calculation, which requires some care if $\Sigma_t$ is not spacelike.  We
define the quantities
\begin{equation}
\gamma   =  \varphi^{-1} e^{2Z} \left. \left( \rho^{-\frac{1}{2}}(1 + \eta^A
f_A) - 1 \right) \right|_{r=f}, \qquad \nu  =  \rho^{-\frac{1}{2}} e^{Z}|_{r=f},
\end{equation}
which are smooth even at points where $\varphi$ vanishes  (i.e. where
$\Sigma_t$ becomes null).  The one-form $l^t_{\alpha}$ satisfying
(\ref{null}) is
$$
l^t_{\alpha} \, dx^\alpha = \gamma dt + \nu \left ( dr - f_A dx^A \right),
$$
For the metric induced on the $\Sc_t$ and its inverse we have
\begin{align*}
h_{t\, AB} & =  \left. \left(h_{AB} + \varphi f_A f_B + f_A \eta_{B} + f_B
\eta_A + \eta^{C}\eta_{C} f_A f_B \right) \right|_{r=f}, \\ 
h_{t}^{AB} & = 
h^{AB} + \rho^{-1} \left.  \left[ f^D f_D \eta^A \eta^B - \varphi f^A f^B -
\left ( 1 + \eta^D f_D \right ) \left ( \eta^A f^B + f^A \eta^B \right)
\right] \right|_{r=f}.
\end{align*}
Note that $h_t^{AB}$ is positive definite wherever $\rho > 0$.  The operator
$\tht[f]$ can now be written explicitly
\begin{equation}
\tht [f] = - \nu h_t^{AB} f_{AB}  - \left. \left( l^t_{\mu}
(\Gamma^{\mu}_{rr} h_t^{AB})  f_A f_B  + 2 l^{t}_{\mu} \Gamma^{\mu}_{rA}
h^{AB}   f_B + l^{t}_{\mu} \Gamma^{\mu}_{AB} h^{AB} \right) \right|_{r=f} \,,
\label{ql} 
\end{equation}
where $f_{AB} = \partial_{x^A} \partial_{x^B} f$. Choosing $C$  small enough
it follows that $\rho > \epsilon$ for a positive $\epsilon$.  The assertion
of the lemma is verified easily by estimating the eigenvalues of $\nu
h^{AB}_t$.  
\end{proof} 
\begin{rmk}
We note that for quasilinear elliptic equations such as
$\tht[f] = 0$ there hold regularity results. In particular, if we required
$C^{2,\alpha}$ for the function $f$  instead of smooth as in the definitions
above,  the latter differentiability would in fact follow (see e.g. Sect. 8.3
of Evans \cite{LE}).
\end{rmk} 
We also remark that the stability operator (\ref{Lv}) coincides with the
linearization  of the quasilinear operator (\ref{ql}) and can be  obtained
from the latter by making $f$  infinitesimal. However, the expression
(\ref{ql}) requires a choice of coordinates. On the other hand (\ref{Lv}) was
derived in a covariant manner and therefore holds independently of the
coordinates on $\Sc$ or on the spacetime.

\section{Barrier properties of stable MOTS}
\label{barrier}

Let $\TT$ be a marginally outer trapped tube, i.e. a 3-surface foliated by
marginally outer trapped surfaces. 
Then there exists a positive variation along $\TT$ such that
$\delta_{\psi v} \thl = 0$. According to the definition in  section
\ref{stability} the MOTS foliating a MOTT are stable with respect to
the MOTT, but they are not strictly stable as  $\delta_{\psi v} \thl = L_v
\psi = 0$  implies that the principal eigenvalue vanishes. In general
situations,  since the variation of $\thl$ is essentially a derivative of the
$\thl$'s of adjacent surfaces, we expect that embedded strictly stable MOTS
are ``barriers'',  i.e. local boundaries separating regions containing weakly
outer trapped and weakly outer untrapped surfaces. 
The difficulty in showing this lies
again in the fact that strictly stable MOTS had to be defined in terms  of a
sign condition on a {\it single} variation, while they should be barriers for
{\it all} weakly outer  trapped surfaces in a neighbourhood.  

Below we give a result on this issue, 
which in fact requires the full machinery developed in
the preceding sections, namely the properties of the principal eigenvalue and
eigenfunction of the  stability operator as well as the maximum principle
applied to the quasilinear elliptic equation (\ref{ql}) representing the MOTS
as a graph. The maximum principles for non-linear operators normally  require
{\it uniform ellipticity} (c.f. eg. \cite[Sect. 10]{GT}) which is {\it not}
the case for operators of prescribed mean  curvature such as (\ref{ql})
when $f^C f_C$ is not small (c.f. Lemma \ref{unif}). For
this reason we prove the following theorem ``from scratch''. 
To state the
result, we recall from \cite{AMS} the definition of ``locally outermost''.
Note that in the definition and in the theorem below we require that the MOTS
is  embedded rather than immersed as in the previous sections.

\begin{definition} 
\label{locout}
Let $\Sc$ be a MOTS embedded in a hypersurface $\Sigma$.  $\Sc$ is called
{\bf locally outermost} in $\Sigma$, iff there exists a two-sided
neighbourhood of $\Sc$ in $\Sigma$ such that its exterior part does not
contain any weakly outer trapped surface.
\end{definition} 

\begin{theorem}\label{barr}
\begin{enumerate}
\item \label{point:ss}
An embedded, strictly stably outermost surface $\Sc$  is locally
outermost.  Moreover, $\Sc$ has a two-sided neighbourhood $\U$ such that no
weakly outer trapped surface  contained in $\U$ enters the exterior of $\Sc$
and no weakly outer untrapped surface contained  in $\U$ enters the interior
of $\Sc$.
\item \label{point:ss2}
A locally outermost surface $\Sc$ is stably outermost.
\end{enumerate}
\end{theorem}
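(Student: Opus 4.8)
The plan is to treat the two parts separately, with part~\ref{point:ss} being the substantial one. For part~\ref{point:ss}, I would start from the graph representation of Lemma~\ref{coordinates} and Lemma~\ref{unif}: work in a tubular neighbourhood of $\Sc$ in $\Sigma$ in which nearby surfaces are graphs $r = f(x^A)$ over $\Sc = \{r=0\}$, and recall that the expansion becomes the quasilinear operator $\tht[f]$ whose linearization at $f=0$ is the stability operator $L_v$. Strict stable outermost-ness gives, by Proposition~\ref{lem:eigen}, that the principal eigenvalue $\lambda$ of $L_v$ is strictly positive, and by Lemma~\ref{principal} both $L_v$ and $L_v^\dagger$ have positive principal eigenfunctions $\phi$ and $\phi^\dagger$. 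The key idea is to use $\phi$ to build a strict barrier: set $f_\eta = \eta \phi$ for a small parameter $\eta > 0$. Since $\tht[0] = 0$ and the linearization is $L_v$, a Taylor expansion gives $\tht[\eta\phi] = \eta L_v\phi + O(\eta^2) = \eta\lambda\phi + O(\eta^2)$, which is strictly positive everywhere on $\Sc$ for $\eta$ small enough (here one uses that $\phi$ is bounded below by a positive constant on the compact $\Sc$, and that the $O(\eta^2)$ remainder is controlled uniformly in $C^0$ since we only move a bounded $C^2$-distance, so Lemma~\ref{unif} applies). Similarly $\tht[-\eta\phi] < 0$ everywhere. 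Thus the graphs $r = \pm\eta\phi$ are respectively strictly weakly outer untrapped and strictly weakly outer trapped surfaces bounding a two-sided neighbourhood $\U$ of $\Sc$.

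The barrier conclusion then follows from the maximum principle, but applied carefully because $\tht$ need not be uniformly elliptic for graphs that are not $C^1$-small. Given any weakly outer trapped surface $\Sc'$ contained in $\U$ that enters the exterior $\{r > 0\}$ part, consider the function $r$ restricted to (the graph-portions of) $\Sc'$; since $\Sc'$ is compact and contained in $\U$, and the exterior boundary surface $r = \eta\phi$ is strictly outer untrapped, one can slide the comparison graph $r = c\,\phi$ from $c = \eta$ down toward $c = 0$ until it first touches $\Sc'$ from outside at a point $p$. Near $p$, both $\Sc'$ and the comparison surface are graphs over $\Sc$, and at $p$ we have a one-sided contact with $\tht[\Sc'] \le 0 < \tht[c\phi]$, where now the two graphs agree to first order at $p$ so the ellipticity that matters is that of the \emph{linearized} operator at the common tangent plane — which is the genuinely elliptic $L_v$-type operator, independent of how large $f^Cf_C$ is globally. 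Applying the strong maximum principle (Hopf) at $p$ yields a contradiction, so no such $\Sc'$ exists; the untrapped case is symmetric, using $r = -c\phi$ sliding up. Local outermost-ness is the special case $\Sc' \subset \{r\ge 0\}$.

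For part~\ref{point:ss2}, suppose $\Sc$ is locally outermost; I want $\lambda \ge 0$ for $L_v$. Argue by contradiction: if $\lambda < 0$, take the positive eigenfunction $\phi$ of $L_v$ and form $f_\eta = \eta\phi$ with $\eta>0$ small. Then $\tht[\eta\phi] = \eta\lambda\phi + O(\eta^2) < 0$ everywhere for small $\eta$, so the graph $r = \eta\phi$ is a weakly (in fact strictly) outer trapped surface lying in the exterior part $\{r>0\}$ of an arbitrarily small two-sided neighbourhood — contradicting local outermost-ness. Hence $\lambda \ge 0$, and by Proposition~\ref{lem:eigen} $\Sc$ is stably outermost.

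The main obstacle I expect is making the maximum-principle step in part~\ref{point:ss} fully rigorous despite the failure of uniform ellipticity of $\tht$ for graphs with large gradient: one must phrase the comparison so that the touching point analysis only invokes ellipticity of the operator \emph{linearized at the contact configuration}, which is always the well-behaved $L_v$-type operator, rather than uniform ellipticity of $\tht$ on all of $\U$. A secondary technical point is controlling the $O(\eta^2)$ remainder in $\tht[\eta\phi]$ uniformly on the compact surface, and handling the immersed-versus-embedded bookkeeping — but as the theorem explicitly restricts to embedded $\Sc$, and the neighbourhood can be taken small enough that everything is a genuine graph, this is routine. The authors' remark that they prove this "from scratch" signals exactly that the standard textbook comparison theorems do not apply off the shelf, so the self-contained sliding-barrier argument above is the intended route.
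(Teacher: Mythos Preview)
Your proposal is correct and follows essentially the same route as the paper: build barriers using the positive principal eigenfunction $\phi$, then obtain a contradiction via a touching-point argument for the quasilinear expansion operator; part~\ref{point:ss2} is identical to the paper's argument. The only cosmetic difference is that the paper absorbs $\phi$ into the radial coordinate by flowing $\Sc$ along $\phi v^{\alpha}$, so that the barrier surfaces become the level sets $\{\sigma = \text{const}\}$; at the maximum of $\sigma|_{\BB}$ one then has $f_A=0$, and the pointwise comparison reduces immediately to the second-derivative test plus equation~(\ref{thref}), bypassing the sliding and any appeal to a Hopf-type lemma --- but this is exactly the simplification you anticipate in your final paragraph.
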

\begin{proof} 
The first statement of \ref{point:ss} is in fact contained in the second
one.  To show the latter, let $\phi$ be the positive principal eigenfunction
of $L_v$.  Since $L_v \phi > 0$ by assumption, flowing $\Sc$ in $\Sigma$
along any extension of $\phi v^{\alpha}$ within $\Sigma$ produces a family
$\Sc_\sigma$, $\sigma \in (-\eps,\eps)$ for some $\eps > 0$. By choosing
$\eps$ small enough, the $\Sc_\sigma$ have $\thl |_{\Sc_\sigma} < 0$  for
$\sigma \in (-\eps, 0)$ and $\thl |_{\Sc_\sigma} > 0$ for  $\sigma \in
(0,\eps)$. We can now take $\U$ to be the neighbourhood of $\Sc$ given by
$\U = \cup_{\sigma \in (-\eps,\eps)} \Sc_\sigma$.  We first express the
expansion of the level sets of $\sigma$ in terms of connection
coefficients. Setting $f = \sigma = const. > 0$ in (\ref{ql}), only the last
term survives and we have
\begin{equation}
\label{thref}
0 <  \tht [\sigma] = - \left. l^t_{\mu} \Gamma^{\mu}_{AB}
h^{AB}\right|_{f = \sigma}
\end{equation}
and analogously for $\sigma < 0$.

Now let $\BB$ be a weakly outer trapped surface ( i.e. $\thl|_{\BB} \le 0$)
contained in $\U$ which enters the exterior part of $\U$. Let $p$
be the point where $\sigma |_{\BB}$ achieves
a maximum value. In a small neighbourhood of $p$, $\BB$ is 
a graph given by a function $f$ which achieves its maximum at $p$.
From (\ref{ql}),
 \begin{equation}
\label{thf}
0 \ge  \thl |_{\BB} = \tht[f] =
- \nu h_t^{AB} f_{AB} - \left. \left( l^t_{\mu}
(\Gamma^{\mu}_{rr} h_t^{AB})  f_A f_B + 2 l^{t}_{\mu} \Gamma^{\mu}_{rA}
h^{AB}   f_B + l^{t}_{\mu} \Gamma^{\mu}_{AB} h^{AB} \right) \right|_{r=f}.
\end{equation}
At $p$, $f_A|_p = 0$ and
 $h^{AB} f_{AB}|_p \le 0$. Since $\sigma = const$ and $\BB$ have a common
 tangent plane at $p$,  the last term in (\ref{thf}) coincides with
 (\ref{thref}), which yields the contradiction (we also use
$\nu |_p = e^Z |_p$)
\begin{equation}
  \tht [f]_p = - e^Z h^{AB} f_{AB}|_p -  \left. l^{t}_{\mu} \Gamma^{\mu}_{AB}
h^{AB} \right|_p > 0.
\end{equation}  
Hence $\BB$ cannot enter the exterior part of $\U$, and analogously  weakly
outer untrapped surfaces cannot enter the interior.

To show \ref{point:ss2}, 
assume $\Sc$ is locally outermost but not stably outermost.
From Proposition \ref{lem:eigen}, the principal  eigenvalue $\lambda$ is
then negative.  Arguing as above one constructs a foliation outside $\Sc$
with leaves which are outer trapped near $\Sc$, contradicting the
assumption. 
\end{proof} 

Returning to the example of a MOTT mentioned above, it follows that a MOTS
within a MOTT is not only not strictly stable 
with respect to the direction tangent to the MOTT,
as we already knew, but moreover not locally outermost.

The barrier arguments suggest that a region of a slice $\Sigma_t$ 
bounded by an outer trapped surface $\Sc_1$ and 
outer untrapped surface $\Sc_2$ contains at least one MOTS. 
This has been proven recently by Andersson and Metzger \cite{AM2}, based on
an argument proposed by Schoen \cite{RS}. 

\section{Symmetries}
\label{symmetries}

As examples one normally considers spacetimes with symmetries, i.e.
invariant under the action of some symmetry group.  If the symmetry is
spacelike, it is natural to consider spacelike foliations with the same
symmetry.  In order to locate the MOTS contained in the leaves and to compute
the principal eigenfunctions and eigenvalues, it is useful to know whether
the MOTS and the eigenfunctions inherit the symmetries from the ambient
geometry. In this section we address these questions, starting with the MOTS.

\begin{theorem} 
\label{iso}
Let $\Xi$ be a local isometry of $({\cal M},g)$  (i.e.  ${\cal
L}_{\xi}g_{\mu\nu} = 0$ for the Lie-derivative ${\cal L}_{\xi}$ w.r. to the
corresponding Killing field $\xi^{\alpha}$), and let  $\cal S$ be a MOTS
which is stable with respect 
to a normal direction $v^{\alpha}$ such that the normal
component $\xi^{\bot \alpha}$ of $\xi^{\alpha}$ satisfies  $\xi^{\bot \alpha}
= \psi v^{\alpha}$ for some function $\psi$.  Then either $\Xi$ leaves the
MOTS invariant (i.e. $\xi^{\alpha}$  is tangential to the MOTS), or
$\Xi(\Sc)$ is a MOTT.
\end{theorem}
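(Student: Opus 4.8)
The plan is to flow $\Sc$ along the Killing field $\xi^\alpha$: this produces a one-parameter family of MOTS, so the corresponding variation of $\thl$ vanishes; hence $\psi$ solves the homogeneous equation $L_v\psi = 0$, and the maximum principle of Lemma~\ref{MP} then forces $\psi$ to be \emph{either} identically zero (the tangential case) \emph{or} nowhere zero (the MOTT case).

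First I would set up the flow. Let $\{\phi_s\}$ be the local one-parameter group of isometries generated by $\xi^\alpha$; since $\Sc$ is compact, each $\phi_s$ is defined on a neighbourhood of $\Sc$ for $|s| < \eps$, and $\Sc_s := \phi_s(\Sc)$ is again a compact, immersed, codimension-two spacelike surface. Because $\phi_s$ is an isometry (continuously connected to the identity, hence preserving time orientation), it carries the future-directed null normal $l^\alpha$ and the mean curvature vector $H^\alpha$ of $\Sc$ to the corresponding objects of $\Sc_s$; since on the MOTS $\Sc$ the vector $H^\alpha$ is proportional to $l^\alpha$, the pushforward $\phi_{s\,*}H^\alpha$ is proportional to $l^\alpha_s := \phi_{s\,*}l^\alpha$, so each $\Sc_s$ is a MOTS with respect to the null normal field $l^\alpha_s$, which depends smoothly on $s$. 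Taking $\Phi(p,s) = \phi_s(p)$ as the variation map with distinguished null section $l^\alpha_s$, the expansions $\thl_s$ of the $\Sc_s$ all vanish identically, so $\delta_\xi\thl = \partial_s\thl_s|_{s=0} = 0$.

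Next I would identify this variation with the stability operator. By Lemma~\ref{gv}, on a MOTS the variation of $\thl$ depends only on the normal part of the variation vector, since the terms $\A\thl$ and $q^{\,\|}(\thl)$ drop out when $\thl \equiv 0$; hence $\delta_\xi\thl = \delta_{\xi^\bot}\thl = \delta_{\psi v}\thl = L_v\psi$ by Lemma~\ref{varth}, and therefore $L_v\psi = 0$. By the stability hypothesis and Proposition~\ref{lem:eigen}, the principal eigenvalue $\lambda$ of $L_v$ satisfies $\lambda \geq 0$. Applying Lemma~\ref{MP} with $f \equiv 0$: if $\lambda > 0$ then $\psi \equiv 0$; if $\lambda = 0$ then $\psi = C\phi$, where $\phi > 0$ is the principal eigenfunction and $C$ is a constant. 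In every case $\psi$ is either identically zero, or (only when $\lambda = 0$ and $C \ne 0$) nowhere zero on $\Sc$.

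Finally I would read off the dichotomy. If $\psi \equiv 0$ then $\xi^{\bot\alpha} = 0$, so $\xi^\alpha$ is tangent to $\Sc$ everywhere, the flow $\phi_s$ maps $\Sc$ onto itself, and $\Xi$ leaves the MOTS invariant. If instead $\psi$ is nowhere zero, then $\xi^{\bot\alpha} = \psi v^\alpha$ is nowhere zero, so $\xi^\alpha$ is everywhere transverse to the leaves $\Sc_s$; consequently $\Phi$ is an immersion with $\Phi_*(\partial_s) = \xi^\alpha$ nowhere zero, and $\{\Sc_s\}_{|s| < \eps}$ genuinely forms a MOTT containing $\Sc = \Sc_0$ --- i.e. the orbit of $\Sc$ under $\Xi$ is a MOTT. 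I expect the only delicate point to be the bookkeeping with the null normal: verifying that $\phi_{s\,*}l^\alpha$ is a legitimate, smoothly varying ``outer'' null section so that the $\Sc_s$ qualify as MOTS in the required sense, and that the reduction $\delta_\xi\thl = \delta_{\xi^\bot}\thl$ is applied correctly on the MOTS. Once these are in place, the conclusion is an immediate consequence of Lemma~\ref{MP}.
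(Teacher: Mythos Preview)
Your proof is correct and follows essentially the same line as the paper's: both flow $\Sc$ by the Killing field to obtain $\delta_\xi\thl = 0$, deduce $L_v\psi = 0$, and then use stability plus the spectral structure of $L_v$ to conclude that $\psi$ is either identically zero or a nonzero multiple of the positive principal eigenfunction. The only cosmetic difference is that you invoke the maximum principle (Lemma~\ref{MP}) to reach this dichotomy, whereas the paper argues directly that $0$ must then be the principal eigenvalue and cites the uniqueness clause of Lemma~\ref{principal}; these are equivalent, and your version is in fact more explicit about why the flow yields a genuine MOTT.
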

\begin{proof} 
Clearly $\xi^{\alpha}$ leaves the MOTS invariant iff $\xi^{\bot
\alpha}$ and hence $\psi$ are identically zero. Assume this is not the case.
From (\ref{varv}), $0 = \delta_{\xi} \thl = \delta_{\psi v} \thl  =  L_v
\psi$ shows that $\psi$ is an eigenfunction of $L_v$ with eigenvalue zero.
Since the MOTS is stable with respect to $v^{\alpha}$,  $\psi$ is the unique (up to a
constant) principal eigenfunction of $L_v$ and has therefore a constant sign,
which after reversing $\xi^{\alpha}$ if necessary can be taken to be
positive. This implies that $\Xi$  in fact generates a MOTT as stated.
\end{proof} 

Note that this proof shows in particular that if $\Sc$ is strictly stable
then  $\Sc$ must remain invariant under the isometry.

If $\Sc$ lies in some hypersurface $\Sigma$ invariant under the isometry,
then Theorem \ref{iso} implies that either $\Xi$ leaves $\Sc$ invariant, or
$\Xi(\Sc) \subset \Sigma$. Clearly, in the second case $\cal S$ is not
strictly stable. Moreover, it is not locally outermost in the sense of
Definition \ref{locout}.

We also remark that the range of validity of Theorem \ref{iso} can be
extended with the help of Proposition \ref{varlam}. It suffices to require
that the MOTS is stable with respect to any direction normal to $\Sc$ which
lies ``between  $l^{\mu}$ and  $\xi^{\bot \mu}$''  (which is in fact a
conical segment).

The following theorem on the symmetry of the principal eigenfunction (``ground
state'') is well-known for self-adjoint operators, with numerous physical
applications.  We have formulated it here for the general linear elliptic
operator (\ref{ell}),  and therefore it holds in particular for the stability
operator (\ref{Lv}).  
\begin{theorem} 
\label{invariance}
If $L$ is invariant under a 1-parameter group of isometries 
generated by $\eta^{\alpha}$,
 the principal eigenfunction $\phi$ is invariant as well.
\end{theorem}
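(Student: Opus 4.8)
The plan is to exploit the uniqueness of the principal eigenfunction (Lemma~\ref{principal}) together with the fact that the flow of $\eta^\alpha$ commutes with $L$. First I would fix some value $\tau$ of the group parameter and let $\Xi_\tau$ denote the corresponding isometry, with $\phi$ the positive principal eigenfunction normalised so that $\|\phi\|_{L^2} = 1$. Because $L$ is invariant under $\Xi_\tau$, i.e. $L(f \circ \Xi_\tau) = (Lf)\circ \Xi_\tau$ for all smooth $f$, applying $L$ to the pulled-back function $\phi\circ\Xi_\tau$ gives $L(\phi\circ\Xi_\tau) = (L\phi)\circ\Xi_\tau = \lambda\,(\phi\circ\Xi_\tau)$. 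Hence $\phi\circ\Xi_\tau$ is again an eigenfunction of $L$ with eigenvalue $\lambda$; moreover it is everywhere positive since $\Xi_\tau$ is a diffeomorphism of $\Sc$ and $\phi > 0$. (One should note that $\Xi_\tau$ restricts to an isometry of $(\Sc,h)$ — or at least maps $\Sc$ to itself — which is the hypothesis implicit in ``$L$ is invariant under a $1$-parameter group of isometries''; otherwise the pullback of functions on $\Sc$ is not even defined.)

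Next I would invoke the uniqueness clause of Lemma~\ref{principal}: the principal eigenfunction is unique up to a multiplicative constant, so $\phi\circ\Xi_\tau = c(\tau)\,\phi$ for some constant $c(\tau)$. Positivity of both sides forces $c(\tau) > 0$, and comparing $L^2$ norms (the isometry preserves the measure $\eta_\Sc$, so $\|\phi\circ\Xi_\tau\|_{L^2} = \|\phi\|_{L^2} = 1$) yields $c(\tau) = 1$. Therefore $\phi\circ\Xi_\tau = \phi$ for every $\tau$, which is exactly the assertion that $\phi$ is invariant under the group. Differentiating in $\tau$ at $\tau = 0$ gives the infinitesimal version $\eta^\alpha D_\alpha \phi = 0$ if a pointwise-derivative statement is preferred, but the finite statement is already the cleanest form.

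The only delicate point is making precise the hypothesis ``$L$ is invariant under the isometry group'' and checking that the generator $\eta^\alpha$ genuinely induces a flow on $\Sc$ (so that pullback of functions, and the identity $L\circ\Xi_\tau^* = \Xi_\tau^*\circ L$, make sense). In the intended application, where $L = L_v$ is the stability operator of a MOTS $\Sc$ and $\eta^\alpha$ is the restriction to $\Sc$ of a Killing field tangent to $\Sc$ (cf. the situation of Theorem~\ref{iso} when $\Sc$ is left invariant), one must verify that all the geometric ingredients of $L_v$ in (\ref{Lv}) — namely $R_\Sc$, $Y$, $s_A$, and the metric $h$ entering $\Delta_\Sc$ — are preserved by the flow; this follows because the ambient metric, the foliation data, and the null normal $l^\alpha$ can all be chosen equivariantly when $\xi^\alpha$ is a symmetry. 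Once that bookkeeping is in place, the argument above is essentially a two-line consequence of eigenfunction uniqueness, and I would expect the write-up to be short. The main obstacle is thus not analytic but a matter of correctly formulating the invariance hypothesis; the eigenvalue/uniqueness machinery from Lemma~\ref{principal} does all the real work.
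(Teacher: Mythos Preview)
Your argument is correct and rests on the same key ingredient as the paper's proof: the uniqueness clause of Lemma~\ref{principal}. The execution differs slightly. The paper works infinitesimally: from $[L,{\cal L}_\eta]=0$ it gets $L({\cal L}_\eta\phi)=\lambda\,{\cal L}_\eta\phi$, invokes uniqueness to write ${\cal L}_\eta\phi=\alpha\phi$, and then fixes $\alpha=0$ by observing that ${\cal L}_\eta\phi=D_A(\phi\eta^A)$ integrates to zero on $\Sc$ while $\phi$ has a sign. You instead work with the finite flow $\Xi_\tau$, obtain $\phi\circ\Xi_\tau=c(\tau)\phi$, and pin down $c(\tau)=1$ via preservation of the $L^2$ norm. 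Both conclusions ultimately use that the Killing flow preserves the volume element, so the two arguments are really the same fact in integrated versus differentiated form.

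One small advantage of your route: $\phi\circ\Xi_\tau$ is manifestly positive, so the uniqueness statement of Lemma~\ref{principal} applies verbatim. In the paper's version one must implicitly use that the $\lambda$-eigenspace is one-dimensional (since ${\cal L}_\eta\phi$ need not have a sign), which follows from the uniqueness clause but is not literally its statement. Your discussion of what ``$L$ invariant under the isometry'' must mean for the stability operator is also apt and matches the intended application.
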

\begin{proof} 
The invariance of $L$ under the isometry implies that this
operator commutes with the action of the corresponding  Lie derivative ${\cal
L}_{\eta}$. Hence, if $\phi$ is an eigenfunction with principal eigenvalue
$\lambda$,  it follows that
\begin{equation}
L {\cal L}_{\eta} \phi = {\cal L}_{\eta} L \phi = \lambda {\cal L}_{\eta} \phi
\end{equation}
which means that ${\cal L}_{\eta} \phi$ is a principal eigenfunction as
well. But from Lemma \ref{principal}(i),  the latter is unique up to a
factor, i.e. ${\cal L}_{\eta} \phi = \alpha \phi$ for some real number
$\alpha$.  This expression integrates to zero on $\Sc$ because
${\cal L}_{\eta} \phi  = D_{A} ( \phi \eta^{A} )$.
Since $\phi$ has constant sign,
it follows that $\alpha=0$. This proves the assertion. 
\end{proof} 
As an example we consider axially symmetric data on $\Sigma$  with Killing
vector $\eta^{\alpha}=\partial/\partial \varphi$,   which has in addition a
$(t,\varphi)$ symmetry, i.e. invariance under simultaneous sign reversal of
$\varphi$ and $t$.  This is the case, in particular, for data on a $t=const.$
slice of the Kerr metric in Boyer-Lindquist coordinates.  It turns out that
the vector $s_A$ introduced in Sect. \ref{variation} is proportional to the axial Killing vector and
divergence-free, i.e.  $s_A = z_A$ in the notation of Sect. \ref{elliptic}.  By Theorem
\ref{invariance} the principal eigenfunction is axially symmetric and
therefore satisfies a second order ODE. Finding the principal eigenvalue and
eigenfunction is therefore equivalent to solving  a one-dimensional
Sturm-Liouville problem.  Moreover, the first order term in the stability
operator vanishes when acting on this eigenfunction.  However, this does not
imply that the stability operator is self-adjoint in this situation or that
the principal eigenvalue coincides with any of the symmetrized eigenvalues
$\lambda_s$ or $\lambda_z$.

\section{Existence and properties of MOTTs}
\label{mott}

The main objective of this paper is to show that MOTS ``propagate'' from a
given slice to adjacent leaves of a given foliation to form a MOTT.  Setting
$\thl = 0$ in (\ref{thl}) determines the location of a MOTT in terms of
immersions. These may be viewed as defining 
a family of quasilinear elliptic equations which do not
contain any derivatives along the presumptive ``evolution''
direction. Since we have assumed that $\Sc_0$ is marginally outer
trapped, the elliptic equation is satisfied for $t=0$ and we can adopt a
perturbational approach.  In section \ref{graph} we have derived the
``graph'' representation (\ref{ql}) in a special coordinate system,  
which will be used in the existence proof. The proof makes 
use of a general existence
result \cite{ADN} which shows  the existence of solutions  near a known
solution of a one-parameter-family of non-linear  differential equations (or
systems)  of arbitrary order and  arbitrary type whose linearization is
elliptic.  Since we will apply this result to a 1-parameter family of
quasilinear elliptic equations for which regularity results hold, we
will formulate the result here for smooth functions only. Equations of
general type require some care  regarding differentiability.

\begin{lemma}[\cite{ADN,AB}] \label{Nonlinear}
Let
\begin{equation}
\label{gen}
F(x, u, \partial u, \partial^2 u; t) = 0
\end{equation}
be a 1-parameter family of quasilinear, second-order differential equations,
where F is a smooth function of all arguments.  Assume that $u_0$ is a smooth
solution of (\ref{gen}) for $t=t_0$, and that the linearized equation around
$u_0$, $Lv = f$ is elliptic and has a unique solution for any smooth $f$.
Then (\ref{gen}) has a unique smooth solution for $t \in ( t_0 -\epsilon, t_0
+ \epsilon)$ for some $\epsilon > 0$.
\end{lemma}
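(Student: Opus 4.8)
\textbf{Proof plan for Lemma \ref{Nonlinear}.}
The plan is to reduce the statement to a standard application of the implicit function theorem on Banach spaces of H\"older (or Sobolev) functions, which is exactly the route of \cite{ADN,AB}. First I would set up the functional-analytic framework: fix a H\"older exponent $\alpha \in (0,1)$ and regard $F$ as defining, for each $t$ near $t_0$, a nonlinear map
\begin{equation*}
\Phi : C^{2,\alpha}(\Sc) \times (t_0 - \delta, t_0 + \delta) \longrightarrow C^{0,\alpha}(\Sc), \qquad \Phi(u,t) = F(x,u,\partial u, \partial^2 u; t).
\end{equation*}
Smoothness of $F$ in all its arguments guarantees that $\Phi$ is a $C^1$ (indeed $C^\infty$) map between these Banach spaces, and by hypothesis $\Phi(u_0, t_0) = 0$. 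The partial Fr\'echet derivative $D_u\Phi(u_0,t_0)$ is precisely the linearized operator $L$ appearing in the statement, viewed as a bounded linear map $C^{2,\alpha}(\Sc) \to C^{0,\alpha}(\Sc)$.

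The key step is then to observe that, by assumption, $L$ is elliptic and the equation $Lv = f$ has a \emph{unique} solution for every smooth $f$. Since $\Sc$ is compact without boundary, elliptic regularity (Schauder theory) upgrades this: $L : C^{2,\alpha}(\Sc) \to C^{0,\alpha}(\Sc)$ is a bounded bijection, hence by the open mapping theorem a Banach space isomorphism. (Concretely, $L$ is Fredholm of index zero on these spaces; uniqueness of the solution means the kernel is trivial, so the cokernel is trivial too, giving surjectivity; and the Schauder estimate $\|v\|_{C^{2,\alpha}} \le C\|Lv\|_{C^{0,\alpha}}$ together with triviality of the kernel gives a bounded inverse.) With $D_u\Phi(u_0,t_0)$ invertible, the implicit function theorem yields a $\delta' \in (0,\delta]$ and a unique $C^1$ curve $t \mapsto u(t) \in C^{2,\alpha}(\Sc)$ with $u(t_0) = u_0$ and $\Phi(u(t),t) = 0$ for $|t - t_0| < \delta'$; uniqueness in a $C^{2,\alpha}$ neighbourhood of $u_0$ is part of the conclusion of the theorem.

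It remains to promote each $u(t)$ from $C^{2,\alpha}$ to smooth. This is where the quasilinearity and ellipticity are used once more: for fixed $t$, $u(t)$ solves a quasilinear second-order elliptic equation with smooth coefficients, so the interior (here, global, since $\Sc$ is closed) regularity theory for such equations — bootstrapping through the Schauder scale, as in Sect. 8.3 of Evans \cite{LE}, cf. the Remark after Lemma \ref{unif} — gives $u(t) \in C^\infty(\Sc)$. A further bootstrap in the $t$-variable, differentiating the relation $\Phi(u(t),t)=0$ and again inverting the (still invertible, for $t$ near $t_0$) linearization, shows that $t \mapsto u(t)$ is smooth jointly in $(x,t)$. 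I expect the only genuinely substantive point to be the passage from ``$Lv = f$ has a unique solution for smooth $f$'' to ``$L$ is a Banach-space isomorphism on H\"older spaces'', i.e. the combination of the Fredholm alternative with Schauder estimates on the closed manifold $\Sc$; everything else is bookkeeping around the implicit function theorem and standard elliptic regularity. Since the paper defers to \cite{ADN,AB} for the hard analysis, it would suffice here to record this reduction and cite those works for the details.
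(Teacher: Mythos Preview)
Your proposal is correct and matches the paper's approach: the paper does not give a detailed proof of this lemma but simply remarks that it ``is an easy application of the implicit function theorem'' and defers to \cite{ADN,AB}, which is exactly the reduction you have spelled out. Your added detail on the Fredholm/Schauder step and the regularity bootstrap is a faithful unpacking of what those references provide.
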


This result is an easy application of the implicit function theorem;
alternatively, the latter theorem can also be applied directly to (\ref{ql}),
as we suggested in \cite{AMS}. The result is, in view of the above discussion,
that a  {\it smooth} horizon
exists in some  {\it open} neighbourhood of the given slice. More precisely, 
we have the following theorem and a corollary.
\begin{theorem} 
\label{ex1} 
Let $(\M, g_{\alpha\beta})$ be a spacetime foliated by smooth hypersurfaces
$\Sigma_t$, $t \in [0,T]$, and assume 
that $\Sigma_0$ contains a smooth, immersed,
strictly stable MOTS $\Sc_0$. Then for some $\tau \in (0,T]$ there is a smooth
adapted MOTT 
$$
\TT_{[0,\tau)} = \Phi(\Sc_0 \times [0,\tau))
$$
such that for each $t \in [0,\tau)$, $\Sc_t = \Phi(\Sc_0,t)$ is a smooth,
  immersed, strictly stable MOTS with $\Sc_t \subset \Sigma_t$. 
\end{theorem}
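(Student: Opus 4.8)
The strategy is to reduce the geometric propagation problem to the abstract perturbation result Lemma~\ref{Nonlinear}, applied to the quasilinear elliptic operator $\tht[f]$ from Lemma~\ref{unif}. First I would invoke Lemma~\ref{coordinates} to set up, locally around each point of $\Sc_0$, the adapted coordinate system $\{t,r,x^A\}$ in which the metric takes the form (\ref{met}). In these coordinates the condition $\thl = 0$ on a graph $r = f(x^A)$ over $\{r=0\}\subset\Sigma_t$ becomes the equation $\tht[f] = 0$, which by the remark following Lemma~\ref{unif} is quasilinear and uniformly elliptic for $f$ with $f^Cf_C$ small. Since $\Sc_0$ is a MOTS in $\Sigma_0$, the zero function $f_0 \equiv 0$ solves $\theta_{t=0}[f_0] = 0$.

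The key point is that the linearization of $\tht[f]$ at $f_0 = 0$ is exactly the stability operator $L_v$ (acting on the abstract copy $\widehat\Sc$), as noted in Section~\ref{graph}: it is obtained from (\ref{ql}) by making $f$ infinitesimal, and it agrees with the covariant expression (\ref{Lv}). Strict stability of $\Sc_0$ means, by Proposition~\ref{lem:eigen}, that the principal eigenvalue $\lambda$ of $L_v$ is strictly positive; in particular $\lambda \ne 0$ is not in the spectrum, so $L_v$ is invertible and the linearized equation $L_v \psi = f$ has a unique solution for every smooth $f$ (Lemma~\ref{MP}, point~\ref{point:MPiii}, gives injectivity, and ellipticity with trivial kernel gives surjectivity via the Fredholm alternative). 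Thus all hypotheses of Lemma~\ref{Nonlinear} are met with $u_0 = f_0$, $t_0 = 0$, and we obtain, for some $\epsilon > 0$, a unique smooth family $f(\cdot\,;t)$ solving $\theta_t[f(\cdot;t)] = 0$ for $t\in[0,\epsilon)$, with $f(\cdot;0) = 0$. Setting $\tau = \epsilon$ (or $\min(\epsilon,T)$) and letting $\Sc_t$ be the graph of $f(\cdot;t)$ gives a smooth one-parameter family of immersed MOTS with $\Sc_t\subset\Sigma_t$, and assembling these yields the immersion $\Phi:\Sc_0\times[0,\tau)\to\M$ and the MOTT $\TT_{[0,\tau)}=\Phi(\Sc_0\times[0,\tau))$; it is adapted to $\{\Sigma_t\}$ by construction since each leaf lies in $\Sigma_t$.

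Two loose ends remain. First, one must check that the constructed $\TT$ genuinely satisfies the definition of a MOTT, in particular that $\Phi_\star(\partial_t)$ is nowhere zero: this is immediate because the $t$-component of $\Phi$ is $t$ itself (see (\ref{emb})), so $\Phi_\star(\partial_t)$ always has nonzero $\partial_t$-part. Second, one must verify that strict stability persists: by continuity of the principal eigenvalue of $L_v$ on $\Sc_t$ under smooth variation of the surface and its geometric data (the coefficients of the elliptic operator (\ref{ell}) depend continuously on $t$, and the principal eigenvalue depends continuously on these coefficients), $\lambda_t > 0$ for $t$ in a possibly smaller interval, so after shrinking $\tau$ each $\Sc_t$ is strictly stable. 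I expect this continuity-of-eigenvalue step, together with the bookkeeping needed to patch the local graph constructions into a single global immersion of $\Sc_0$ (handling the possible self-intersections of $\Sc_0$ as described before Lemma~\ref{coordinates}), to be the main technical obstacle; the core existence argument is a direct citation of Lemma~\ref{Nonlinear}.
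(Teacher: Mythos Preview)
Your proposal is correct and follows essentially the same route as the paper: set up the adapted coordinates of Lemma~\ref{coordinates}, write the MOTS condition as the quasilinear elliptic graph equation $\tht[f]=0$, identify its linearization at $f=0$ with the stability operator $L_v$, use strict stability to get invertibility, and invoke Lemma~\ref{Nonlinear}. You are in fact slightly more careful than the paper in explicitly addressing the persistence of strict stability for nearby $t$ (by continuity of the principal eigenvalue) and the nowhere-vanishing of $\Phi_\star(\partial_t)$; the paper leaves the former implicit and treats the latter only in the proof of the subsequent corollary.
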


\begin{corollary}
\label{tang1}
The  MOTT $\TT_{[0,\tau)}$ constructed in Theorem \ref{ex1}  is nowhere tangent
to the leaves $\Sigma_t$ of the foliation.
\end{corollary}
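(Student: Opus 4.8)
The plan is to read the claim off directly from the graph construction underlying Theorem~\ref{ex1}. Recall that in the coordinates $\{t,r,x^A\}$ of Lemma~\ref{coordinates} the leaves of the reference foliation are the level sets $\Sigma_{t_0}=\{t=t_0\}$, and that the MOTT produced in Theorem~\ref{ex1} is, locally on $\Sc_0$, the image of the map
\[
(t,x^A)\longmapsto \big(t,\, f_t(x^A),\, x^A\big),
\]
where, for each fixed $t$, the function $f_t$ is the (smoothly $t$-dependent) solution of the quasilinear elliptic equation $\tht[f_t]=0$ furnished by Lemma~\ref{Nonlinear}; this is just the graph representation (\ref{emb})--(\ref{ql}) of Section~\ref{graph}. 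Equivalently, the composition of $\Phi$ with the coordinate function $t$ is the projection $\Sc_0\times[0,\tau)\to[0,\tau)$ onto the second factor.

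First I would compute the tangent space of $\TT_{[0,\tau)}$ in these coordinates at a point $p=\Phi(p_0,s)$. It is spanned by the ``evolution vector'' $\Phi_\star(\partial_t)=\partial_t+(\partial_t f_t)\,\partial_r$ together with the $n-2$ vectors $\partial_{x^A}+(\partial_{x^A}f_t)\,\partial_r$ tangent to $\Sc_t$. Since $dt$ evaluated on $\Phi_\star(\partial_t)$ equals $1$, the pullback of $dt$ to $\TT_{[0,\tau)}$ vanishes nowhere; put differently, $t\circ\Phi$ is a submersion while $\Phi$ is an immersion, so $dt$ is nonzero on $T_p\TT_{[0,\tau)}=\Phi_\star\big(T_{(p_0,s)}(\Sc_0\times[0,\tau))\big)$ at every $p$.

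I would then conclude by observing that $T_p\Sigma_{t(p)}=\ker dt|_p$, so the non-vanishing of $dt|_{T\TT_{[0,\tau)}}$ is exactly the statement that $T_p\TT_{[0,\tau)}\not\subseteq T_p\Sigma_{t(p)}$, i.e.\ $\TT_{[0,\tau)}$ is nowhere tangent to the leaves $\Sigma_t$. Equivalently, $\TT_{[0,\tau)}$ meets each $\Sigma_t$ transversally, the intersection being the MOTS $\Sc_t$.

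As the phrase ``easy corollary'' in the introduction suggests, there is no real obstacle here: the single point worth checking is that the construction of Theorem~\ref{ex1} genuinely realises $\TT_{[0,\tau)}$ as a stack of graphs over the distinct level surfaces $\{r=0\}\subset\Sigma_t$, which is immediate from Section~\ref{graph} together with the fact that $\Phi$ is an immersion with $\Phi_\star(\partial_t)$ nowhere zero (condition~\ref{point:phistar} of the definition of MOTT). The genuinely delicate counterpart --- that near a limiting leaf with vanishing principal eigenvalue the MOTT may instead become \emph{everywhere} tangent to $\Sigma_\tau$ --- is the content of Theorem~\ref{thm:lars-ex2} and Corollary~\ref{tang}, and does not arise on the open interval $[0,\tau)$.
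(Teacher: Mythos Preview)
Your argument is correct and is essentially the paper's own: in the adapted coordinates of Lemma~\ref{coordinates} the evolution vector $\Phi_\star(\partial_t)=\partial_t+(\partial_t f)\,\partial_r$ has $dt$-component equal to $1$, hence is nowhere tangent to the level sets $\Sigma_t=\{t=\mathrm{const}\}$. The paper states this in a single sentence at the end of the combined proof of Theorem~\ref{ex1} and Corollary~\ref{tang1}; your version merely spells out the same observation via the pullback of $dt$.
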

\begin{proof}
Let $\{t,r,x^A\}$ be the local coordinate system introduced in
Lemma \ref{coordinates} and restrict the allowed MOTT to be local graphs on
${\cal S}_0$ as described in section \ref{graph}. Thus, we are looking for
functions $f(t,x^A)$ which satisfy $\tht[f] =0$ for all $t$ near $t=0$ and
satisfying $f(0,x^A)=0$.  In order to apply Lemma \ref{Nonlinear} we  only
need to check that the linearization of $\tht[f]$ is elliptic and invertible.
The immersion $\Phi$ defining the MOTT is given in this coordinate system by
$(t,r=f(t,x^A),x^A)$.  Recall that 
$l^{\alpha} \partial_{\alpha} = \partial_t$ 
and that  on
$\Sc_0$ the vector $v^{\alpha} \partial_{\alpha} = \partial_r$ satisfies $l^{\alpha}
v_{\alpha}=1$.  Linearizing the operator around $(t=0,f=0)$ corresponds to
fixing $t=0$ and making $f$ infinitesimal, or more precisely to evaluating 
$L (f') \equiv \partial_{\epsilon} \theta_{t=0}[f_{\epsilon}] |_{\epsilon=0}$
where $f_{\epsilon}$ depends smoothly on $\epsilon$ and satisfies
$f_{\epsilon} |_{\epsilon=0} =0$, 
$\partial_{\epsilon} f_{\epsilon} |_{\epsilon=0}=f'$.  It
follows that $L$ corresponds to  performing a geometric variation of $\thl$
along the vector $f' \partial_r$.  The general variation formula in
section \ref{variation} gives $L (f') = \delta_{f' v} = L_v (f')$.  Since by
assumption, $\Sc_0$ 
is strictly stably outermost with respect to  $v$, it
follows that $L_v$ and therefore $L$, is invertible. Existence of $f(t,x^A)$
for small $t$ solving $\tht[f]=0$ follows readily from Lemma
\ref{Nonlinear}. Moreover, the ``evolution'' vector to the MOTT, $q^{\alpha}
\equiv \Phi_{\star}(\partial_t) = \partial_t +  
f' \partial_r$ 
is by construction nowhere tangent to $\Sigma_t$, which proves
the corollary. 
\end{proof} 

As a complement to this Corollary we now discuss the "tangency" property of the 
MOTT when the principal eigenvalue $\lambda_{\tau}$ goes to zero, 
assuming smooth convergence of the $\Sc_t$ to a stable MOTS  $\Sc_\tau$. 

\begin{theorem}
\label{thm:lars-ex2} 
Let $(\M, g_{\alpha\beta})$ be a spacetime containing a smooth reference
foliation  
$\{\Sigma_t\}_{t\in[0,T]}$. 
Assume that $\Sigma_0$
contains a smooth, immersed, strictly stable MOTS $\Sc_0$.  Assume
furthermore that the adapted 
MOTT ${\TT}_{[0,\tau)}$  through $\Sc_0$ constructed in
Theorem \ref{ex1} is such that as $t \to \tau$, the surfaces $\Sc_t$ converge
to a smooth, compact, stable MOTS $\Sc_\tau$. 
Let $\lambda_t$ be the principal eigenvalue of the stability operator of
$\Sc_t$ and $\phi^{\dagger}_{t}$ the principal eigenfunction of its
adjoint. Assume that $\lambda^{-1}_t \langle \phi^{\dagger}_t,W |_{\Sc_{t}} \rangle$ 
has a limit (finite or infinite) as $t \to \tau$.

Then the closure 
$\TT_{[0,\tau]} = \TT_{[0,\tau)} \cup \Sc_\tau$ is an adapted MOTT.  
If in addition $\lambda_\tau = 0$ and 
$\la \phi^\dagger_{\tau},
W  |_{\Sc_\tau} \ra  \neq 0$,
then $\TT_{[0,\tau]}$ 
is tangent to
    $\Sigma_\tau$ everywhere on $\Sc_\tau$. 

\end{theorem}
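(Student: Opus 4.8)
The plan is to work in the graph representation of Lemma~\ref{coordinates}, where the adapted MOTT ${\TT}_{[0,\tau)}$ is described by a function $f(t,x^A)$ solving $\theta_t[f]=0$ with $f(0,x^A)=0$, and to track carefully how the normal derivative $\dot f \equiv \partial_t f$ behaves as $t\to\tau$. First I would establish that $\TT_{[0,\tau]}$ is an adapted MOTT: since $\Sc_t \to \Sc_\tau$ smoothly and $\Sc_\tau$ is itself a smooth compact MOTS, the immersion $\Phi$ extends continuously (and in fact smoothly, by the convergence hypothesis) to $\Sc_0 \times [0,\tau]$; the only thing to check is condition~\ref{point:phistar} of the MOTT definition at $t=\tau$, namely that $\Phi_\star(\partial_t)$ is nonzero there. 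In the graph coordinates $\Phi_\star(\partial_t) = \partial_t + \dot f\,\partial_r$, which is automatically nonzero regardless of the value of $\dot f$ because of the $\partial_t$ component; so this part is essentially immediate once smooth convergence is granted, with $\sigma$ the identity.

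The substantive claim is the tangency statement, and the key is an equation governing $\dot f$. Differentiating $\theta_t[f]=0$ in $t$ gives $L_{v_t}(\dot f) = -\,\partial_t \theta_t[f]\big|_{\text{explicit }t\text{-dependence}}$, i.e. the linearized operator (which is precisely the stability operator $L_{v_t}$ of $\Sc_t$ by the remark following Lemma~\ref{unif}) applied to $\dot f$ equals minus the variation of $\theta$ coming from moving the reference leaf $\Sigma_t$ itself. The geometric content of that inhomogeneous term, via the variation formula of Section~\ref{variation} and the Raychaudhuri-type identity (\ref{Ray}), is that it is proportional to $W|_{\Sc_t}$: moving off $\Sc_t$ along $l^\alpha$ changes $\theta$ by $-W$ times the relevant scalar. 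Concretely I expect to arrive at a relation of the schematic form $L_{v_t}(\dot f) = (\text{something involving }W)$, and then pairing with the adjoint principal eigenfunction $\phi^\dagger_t$ and using Lemma~\ref{principal}\ref{point:princ-ii} gives
\begin{equation*}
\lambda_t \,\la \phi^\dagger_t, \dot f \,\ra = \la \phi^\dagger_t, \, (\text{const})\, W|_{\Sc_t}\,\ra,
\end{equation*}
so that the component of $\dot f$ along the principal direction is controlled by $\lambda_t^{-1}\la \phi^\dagger_t, W|_{\Sc_t}\ra$ — exactly the quantity hypothesized to have a limit. The complementary component $\QQ \dot f$ is controlled uniformly by the spectral-gap estimate (\ref{eq:Aest}) of Lemma~\ref{lem:linalg}, since $A - \lambda_t\QQ$ is uniformly invertible near $\Sc_\tau$ (the coefficients and hence the spectral data converge). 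Hence $\dot f$ stays bounded, or blows up, precisely according to the behavior of $\lambda_t^{-1}\la \phi^\dagger_t, W|_{\Sc_t}\ra$.

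Now specialize to $\lambda_\tau = 0$ with $\la \phi^\dagger_\tau, W|_{\Sc_\tau}\ra \neq 0$. Then $\lambda_t \to 0$ while the numerator tends to a nonzero limit, so $\la \phi^\dagger_t, \dot f\,\ra \to \infty$; since the $\QQ$-part of $\dot f$ remains bounded, $\|\dot f\|\to\infty$. Rescaling $\dot f$ by its norm, the rescaled functions converge (using the estimates above, after passing to a subsequence if needed) to a multiple of the principal eigenfunction $\phi_\tau$ of $L_{v_\tau}$, which has a sign. Geometrically, as $t\to\tau$ the ``evolution'' vector $\partial_t + \dot f\,\partial_r$, after normalization, becomes dominated by its $\partial_r$ component, i.e. by $v^\alpha$, which is tangent to $\Sigma_\tau$ — so the MOTT is tangent to $\Sigma_\tau$ at every point of $\Sc_\tau$. \textbf{The main obstacle} I anticipate is making the limiting/tangency argument rigorous: $\dot f$ itself is blowing up, so ``$\TT_{[0,\tau]}$ is tangent to $\Sigma_\tau$'' must be phrased in terms of the limit of the \emph{direction} of $\Phi_\star(\partial_t)$ (equivalently, reparametrizing the tube so that the evolution vector stays bounded and nonzero, as permitted by the freedom in $\sigma$), and one must verify that the rescaled $\dot f$ genuinely converges rather than merely being bounded away from the $\phi^\dagger_\tau$-orthogonal complement — this is where Lemma~\ref{lem:linalg}\ref{point:lam-ii} and the smooth convergence $\Sc_t\to\Sc_\tau$ (which gives convergence of all coefficients, of $\lambda_t$, and of $\phi^\dagger_t$) do the real work.
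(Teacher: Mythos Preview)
Your overall strategy---derive $L_{v_t}\dot f = W$ (this is exactly the paper's equation (\ref{LC}) with $u=\dot f$), decompose $\dot f$ using the projectors of Lemma~\ref{lem:linalg}, control $\QQ\dot f$ by the spectral gap, and track the $\PP$-part via $\lambda_t^{-1}\la\phi^\dagger_t,W\ra$---is precisely what the paper does. The tangency conclusion and its proof match.

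There is, however, a genuine gap in your treatment of the first conclusion. You write that the immersion $\Phi$ ``extends continuously (and in fact smoothly, by the convergence hypothesis)'' to $\Sc_0\times[0,\tau]$, and hence that the MOTT property is ``essentially immediate \dots\ with $\sigma$ the identity''. This is not correct: the hypothesis is that the \emph{surfaces} $\Sc_t$ converge smoothly to $\Sc_\tau$, i.e.\ $f(t,\cdot)\to f(\tau,\cdot)$ in $C^\infty(\Sc_0)$; it says nothing about differentiability of $t\mapsto f(t,\cdot)$ at $t=\tau$. Your own subsequent analysis shows that in the main case ($\lambda_\tau=0$, $\la\phi^\dagger_\tau,W\ra\neq0$) the derivative $\dot f$ \emph{blows up} as $t\to\tau$, so $\Phi$ does not extend smoothly with $\sigma$ the identity, and $\Phi_\star(\partial_t)$ does not even exist at $\tau$. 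The reparametrization you mention at the end as a device for phrasing tangency is not a cosmetic convenience---it is exactly what is needed to establish that $\TT_{[0,\tau]}$ is an adapted MOTT at all. The paper makes this explicit: one normalizes the evolution vector to $\hat q^\alpha=z^{-1}q^\alpha$ (with $z$ the coefficient in $\PP\dot f=z\phi_t$), shows $\hat q^\alpha$ extends smoothly to $\Sc_\tau$ as a nonzero vector proportional to $\partial_r$, and then uses the flow of $\hat q^\alpha$ to produce the immersion $\Phi:\Sc_0\times I\to\M$ together with the strictly monotone reparametrization $\sigma:I\to[0,\tau]$. In short, the analysis you carry out for the tangency claim has to come \emph{first}, and the MOTT conclusion is a consequence of it, not a preliminary triviality.
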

\begin{corollary}
\label{tang}
If the null energy condition holds on $\Sc_{\tau}$, $\lambda_{\tau}=0$
and $W |_{\Sc_{\tau}} \neq 0$ somewhere, then $\TT_{[0,\tau]}$
is tangent to $\Sigma_\tau$ everywhere on $\Sc_\tau$. 
\end{corollary}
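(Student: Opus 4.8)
The plan is to deduce Corollary~\ref{tang} directly from Theorem~\ref{thm:lars-ex2} by verifying its hypotheses, so the only work is to check that the quantities $\lambda_t^{-1}\la\phi^\dagger_t,W|_{\Sc_t}\ra$ have a limit and that the limiting inner product $\la\phi^\dagger_\tau,W|_{\Sc_\tau}\ra$ is nonzero. First I would note that the null energy condition on a neighbourhood of $\Sc_\tau$ gives $W\ge 0$ pointwise (by the definition (\ref{EqW}) of $W$, since the term $K^\mu_{AB}K^{\nu\,AB}l_\mu l_\nu$ is a sum of squares and $G_{\mu\nu}l^\mu l^\nu\ge 0$ under the null energy condition). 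The principal eigenfunction $\phi^\dagger_\tau$ of the adjoint operator is strictly positive by Lemma~\ref{principal}(i) (applied to $L^\dagger_v$). Hence $\la\phi^\dagger_\tau,W|_{\Sc_\tau}\ra=\int_{\Sc_\tau}\phi^\dagger_\tau\, W\,\eta_{\Sc_\tau}$ is a nonnegative integral of a nonnegative function against a strictly positive weight, and it is strictly positive precisely when $W|_{\Sc_\tau}$ is not identically zero --- which is exactly the hypothesis ``$W|_{\Sc_\tau}\neq 0$ somewhere.''

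Next I would handle the limit hypothesis. Since $\lambda_\tau=0$ by assumption and $\la\phi^\dagger_\tau,W|_{\Sc_\tau}\ra>0$, the ratio $\lambda_t^{-1}\la\phi^\dagger_t,W|_{\Sc_t}\ra$ behaves like (positive constant)$/\lambda_t$ as $t\to\tau$. Because the $\Sc_t$ converge smoothly to $\Sc_\tau$, the coefficients of the stability operators $L_{v}$ on $\Sc_t$ converge smoothly to those on $\Sc_\tau$; standard continuous dependence of the principal eigenvalue and eigenfunction on the operator (which follows from the spectral-gap estimate in Lemma~\ref{lem:linalg} together with a perturbation argument) shows $\lambda_t\to\lambda_\tau=0$ and $\phi^\dagger_t\to\phi^\dagger_\tau$ in a suitable norm, so $\la\phi^\dagger_t,W|_{\Sc_t}\ra\to\la\phi^\dagger_\tau,W|_{\Sc_\tau}\ra>0$. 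Consequently $\lambda_t^{-1}\la\phi^\dagger_t,W|_{\Sc_t}\ra\to +\infty$ if $\lambda_t\downarrow 0$, or $-\infty$ if $\lambda_t\uparrow 0$; in either case the limit exists in the extended sense required by Theorem~\ref{thm:lars-ex2}. (If $\lambda_t$ were to vanish identically on an interval the statement is vacuous in the relevant sense, or one handles it by the same tangency conclusion directly; I would remark on this边 case briefly.)

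With both hypotheses of Theorem~\ref{thm:lars-ex2} verified, its conclusion applies verbatim: the closure $\TT_{[0,\tau]}=\TT_{[0,\tau)}\cup\Sc_\tau$ is an adapted MOTT, and since additionally $\lambda_\tau=0$ and $\la\phi^\dagger_\tau,W|_{\Sc_\tau}\ra\neq 0$, the MOTT $\TT_{[0,\tau]}$ is tangent to $\Sigma_\tau$ everywhere on $\Sc_\tau$. This is exactly the assertion of Corollary~\ref{tang}.

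The only genuinely substantive point, and the one I would expect to require the most care, is the continuity claim $\phi^\dagger_t\to\phi^\dagger_\tau$ and $\lambda_t\to 0$ under smooth convergence $\Sc_t\to\Sc_\tau$: one must argue that the stability operators, transported to a fixed model surface via the graph representation of Section~\ref{graph}, depend continuously (indeed smoothly) on $t$ in operator norm between the appropriate Sobolev spaces, and then invoke analytic perturbation theory for the isolated principal eigenvalue --- the spectral gap from Lemma~\ref{lem:linalg}(i) is precisely what guarantees the principal eigenvalue and eigenprojection vary continuously and that the eigenfunction can be normalized to converge. Everything else is a direct sign check and a citation of Theorem~\ref{thm:lars-ex2}.
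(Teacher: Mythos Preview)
Your argument is correct and follows the paper's intended route: the Corollary is an immediate consequence of Theorem~\ref{thm:lars-ex2} once one observes that under the null energy condition $W\ge 0$, the adjoint principal eigenfunction $\phi^\dagger_\tau$ is strictly positive (Lemma~\ref{principal}), and hence $\la\phi^\dagger_\tau,W|_{\Sc_\tau}\ra>0$ whenever $W|_{\Sc_\tau}\not\equiv 0$. The paper gives no separate proof of the Corollary; this positivity check is the whole content.

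One remark on scope: as the Corollary is stated as a corollary to Theorem~\ref{thm:lars-ex2}, it inherits all of that theorem's hypotheses, including the assumption that $\lambda_t^{-1}\la\phi^\dagger_t,W|_{\Sc_t}\ra$ has a limit. Your discussion verifying this limit is therefore additional work beyond what is formally required. That said, your observation is correct and worth knowing: since the $\Sc_t$ are strictly stable for $t<\tau$ (Theorem~\ref{ex1}) one has $\lambda_t>0$, and smooth convergence $\Sc_t\to\Sc_\tau$ together with the spectral gap gives $\la\phi^\dagger_t,W|_{\Sc_t}\ra\to\la\phi^\dagger_\tau,W|_{\Sc_\tau}\ra>0$, so the ratio diverges to $+\infty$. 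This shows that in the situation of the Corollary the limit hypothesis of the Theorem is automatically satisfied, which is a mild strengthening. (Your aside about $\lambda_t\uparrow 0$ cannot occur here, and the ``$\lambda_t\equiv 0$ on an interval'' case is excluded by the strict stability built into Theorem~\ref{ex1}.)
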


\begin{proof}
If $\lambda_\tau > 0$, then by uniqueness the MOTT through
$\Sc_\tau$ constructed using Theorem \ref{ex1} agrees with $\TT_{[0,\tau)}$
for $t \in [0,\tau)$, and hence the closure $\TT_{[0,\tau]}$ is an
adapted MOTT. It remains to consider the case where $\lambda_\tau = 0$. 
    
By construction, $\Sc_{t} = \Sigma_{t} \cap \TT_{[0,\tau)}$ is a MOTS for
$t \in [0,\tau)$. 
Let $v^\alpha$ be the normal to $\Sc_t$ tangent to $\Sigma_t$
(we drop the subindex $t$ for simplicity).
Consider the ``evolution'' vector $q^{\alpha}$
of the MOTS within $\TT_{[0,\tau)}$, given by $q^{\alpha} \partial_\alpha =
\Psi_{\star}(\partial_t)  = \partial_t + (\partial_t f)
\partial_r$. Recalling that $l^{\alpha} \partial_{\alpha} = \partial_t$ 
it follows that the
normal  component of $q^{\alpha}$ can be decomposed as
\begin{equation}
\label{qdec1}
q^{\bot \alpha}  =   l^{\alpha} + u v^{\alpha} 
\end{equation}
for some function $u$. The variation of $\thl$ along $q^{\alpha}$ must vanish
(this is precisely the condition that $q^{\alpha}$ is tangent to the MOTT).
Hence, $u$ must satisfy
 \begin{equation}
\label{LC}
\Der_q \thl = \Der_{l + uv} \thl =   - W + L_v u =  0,
\end{equation}
c.f. Lemma \ref{varth}. 

Consider the equation $L_v u = W$. Referring to Lemma \ref{lem:linalg}, 
we decompose $u = \PP u + \QQ u$ and write
this as $u = z \phi + w$ for some real $z$. We have  
$L_v u = \lambda z \phi + A w$, where $A = L_v \QQ$ as in Lemma
\ref{lem:linalg}. 
It follows that 
\begin{equation}\label{eq:zlam} 
 z  = \lambda^{-1} \frac{\la \phi^\dagger, W \ra }{\la \phi^\dagger, \phi \ra}
\end{equation} 
and 
$$
A w = \QQ W .
$$
Since  for $t \in [0,\tau]$,  $L_v$ is uniformly elliptic and its
coefficients are uniformly smooth, \cite[Theorem 3.1, p. 208]{Kato}
applies to show that the spectral gap is lower semi-continuous in $t$, i.e.
$\liminf c_0(t) \geq c_0(\tau)$.
At $\tau$, the spectral gap is
positive. Thus, it follows that
there exists $\epsilon >0$ and a constant $c_0 > 0$
such that the spectral gap estimate
(\ref{eq:Amubound}) holds  
for $t \in (\tau - \epsilon,\tau]$. 
Recall that by construction $\QQ w = w$.  
Since by assumption $\lambda > 0$ 
for $t < \tau$,
we may apply (\ref{eq:Aest}) with $\lambda = 0$, to conclude that 
$\QQ w$, and hence also $w$, is 
uniformly bounded in Sobolev spaces.

Suppose now that
$\la \phi^\dagger_{\tau}, W |_{\Sc_\tau} \ra \neq 0$
and $\lambda_\tau = 0$. Then it follows that $z
\to \infty$, as $t \to \tau$, and hence that $u$ diverges. We see further
that in this case, 
$$
\lim_{t \to \tau} z^{-1} u  = \phi_{\tau} > 0,
$$
where $\phi_{\tau}$ is the eigenfunction of $\lambda_{\tau}$.
Define a normalized evolution
vector $\hatq^\alpha = z^{-1} q^\alpha$. 
Then 
$$
\hatq^\alpha \partial_\alpha = z^{-1} \partial_t + z^{-1} u
\partial_r. 
$$
It follows from the above that as $t \nearrow \tau$, the vectors
$\hatq^\alpha$ converge smoothly to a limit, which is proportional to
$\partial_r$.  By construction $\hatq^\alpha$
is a smooth vectorfield tangent to the MOTT $\TT_{[0,\tau)}$, and we have now
  shown that $\hatq^\alpha$ extends smoothly to the closure 
$\TT_{[0,\tau]} = \TT_{[0,\tau)} \cup \Sc_\tau$. 
In order to demonstrate that the closure is indeed a MOTT,
    it remains to renormalize the time parameter. 

Thus, let $\phi_s: \Sc_0 \to \TT_{[0,\tau]}$ be defined by the flow of
  the vector field $\hatq^\alpha$, and let $I = [0,s_*]$ denote the parameter
  interval required for this flow to sweep out $\TT_{[0,\tau]}$. This
  construction defines a smooth monotone map $\sigma: I \to [0,\tau]$. We can
  now extend the immersion defining the MOTT $\TT_{[0,\tau)}$ to an
  immersion $\Phi: \Sc_0 \times I \to \M$ such that for $s \in I$, we
  have $\Sc_{\sigma(s)} = \Phi(\Sc_0,s) \subset \Sigma_{\sigma(s)}$ for $s
  \in I$. It is clear from the construction that this defines a MOTT 
$\TT_{[0,\tau]} = \TT_{[0,\tau)} \cup \Sc_\tau$. 

It remains to consider the case when 
$\la \phi^\dagger_{\tau}, W |_{\Sc_\tau} \ra = 0$.
In this case, by
applying the same argument as above, we find that $u$ may, or may not diverge
as $t \to \tau$, depending on the detailed behavior of $\lambda$ and $W$ as
$t \to \tau$. 
In particular, we see from (\ref{eq:zlam}) that if
$\lambda^{-1} \la \phi^\dagger, W \ra$
diverges as $t \to \tau$, 
then also $z$ diverges, and we are essentially in
the situation considered above. On the other hand, if 
$\lambda^{-1} \la \phi^\dagger, W \ra$ tends
to a bounded limit as $t \to \tau$, then we are in a situation which is
analogous to the case with $\lambda_\tau > 0$.  
In either case, the normalized evolution vector $\hatq^\alpha$ 
converges as $t \to \tau$, and an argument along the lines above shows that
$\TT_{[0,\tau]} = \TT_{[0,\tau)} \cup \Sc_\tau$ is an adapted MOTT. 
\end{proof} 

In \cite{AMS} we claimed that the evolution of the MOTS can be continued ``as long
as the MOTS remain  strictly stably outermost''.  We emphasize, however, that
in \cite{AMS} the MOTS were taken to be  {\it smooth and embedded by definition}. 
Moreover, we have tacitly assumed that the  MOTT does not ``run off to infinity''. 
Hence, our control of the evolution may end not only when $\lambda$ goes to zero  
but also when the MOTS develop self-intersections or when we lose  compactness or  smoothness.  
In the present setup, we do not worry about MOTS developing self-intersections as we have allowed 
them from the outset. However, to show in particular that existence continues up to and
{\it including} $\Sigma_{\tau}$ (which was assumed in Theorems \ref{ex1} and
\ref{thm:lars-ex2}) we have to exclude the
other pathologies.  

To avoid that the MOTT ``runs off to infinity'' we simply
require that it is contained in a compact subset of $\M$. 
Note that it may still happen that the area of the MOTS
$\Sc_t$ grows without bound as we  approach  $\Sc_{\tau}$. In view of the
curvature estimates of \cite{AM1}, this can happen in the four-dimensional
case only if the MOTS ``folds up'' sufficiently. 
Assuming a uniform area bound, we have the following
result, which follows from the work in \cite{AM1}. 

\begin{proposition} 
\label{prop:lars-ex2} 
Let $(\M, g_{\alpha\beta})$ be a spacetime of dimension 4, 
with a spacelike reference foliation 
$\{\Sigma_t\}_{t \in [0,T]}$. Assume that $\Sigma_0$
contains a smooth, immersed, strictly stable MOTS $\Sc_0$. Let
$\TT_{[0,\tau)}$ be the adapted MOTT through $\Sc_0$ constructed in Theorem
\ref{ex1}, and let $\Sc_t = \TT_{[0,\tau)} \cap \Sigma_t$ be the leaf of
$\TT_{[0,\tau)}$ in $\Sigma_t$. Assume that for $t \in[0, \tau)$, the
leaves $\Sc_t$ have uniformly bounded area, and are contained in a
compact subset of $\M$. Then the $\Sc_t$ converge as $t \to \tau$ 
to a smooth compact surface $\Sc_\tau$ which is a smooth, immersed, stable MOTS.
\end{proposition}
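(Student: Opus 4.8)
The plan is to reduce everything to the compactness theorem for stable MOTS in dimension four due to Andersson and Metzger \cite{AM1}, and then to promote the subsequential statement it provides to convergence of the whole family. First I would record what is available: by Theorem \ref{ex1}, for every $t\in[0,\tau)$ the leaf $\Sc_t=\TT_{[0,\tau)}\cap\Sigma_t$ is a smooth, compact, immersed MOTS lying in the spacelike slice $\Sigma_t$, and it is strictly stable, hence a fortiori stable in the sense of Definition \ref{def:stablyouter} (equivalently, by Proposition \ref{lem:eigen}, its principal eigenvalue satisfies $\lambda_t>0$). By hypothesis the $\Sc_t$ are all contained in a fixed compact set $\Omega\subset\M$ and satisfy a uniform area bound $|\Sc_t|\le A_0$. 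These are exactly the inputs — stability, containment in a compact region, an area bound, and an ambient spacelike slice of a four-dimensional spacetime — under which the curvature estimates of \cite{AM1} apply.

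Next I would run the compactness argument along an arbitrary sequence $t_n\nearrow\tau$. The estimates of \cite{AM1} bound the second fundamental forms of the $\Sc_{t_n}$ uniformly in terms of $\Omega$, $A_0$ and the ambient geometry; combined with the area bound and the Gauss equation these also control $\int_{\Sc_{t_n}}|R_{\Sc_{t_n}}|$, hence by Gauss--Bonnet the Euler characteristics are uniformly bounded, so after passing to a subsequence all $\Sc_{t_n}$ have a common topological type and converge in $C^\infty$, as immersed submanifolds, to a smooth, compact, immersed surface $\Sc_\tau$. Since $\Sc_{t_n}\subset\Sigma_{t_n}$ and $t\mapsto\Sigma_t$ is a smooth foliation, $\Sc_\tau\subset\Sigma_\tau$, and since the null expansion depends continuously on the surface in the $C^\infty$ topology, $\theta|_{\Sc_\tau}=0$, i.e. $\Sc_\tau$ is a MOTS. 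For stability of the limit one invokes \cite{AM1} again (or, equivalently, that the principal eigenvalue of $L_v$ varies continuously under $C^\infty$ convergence of the surfaces and hence of the operators, cf. \cite{Kato}), giving $\lambda_\tau=\lim_n\lambda_{t_n}\ge 0$, so $\Sc_\tau$ is stable by Proposition \ref{lem:eigen}.

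The only step that requires an argument beyond assembling hypotheses and quoting \cite{AM1} is to show that the subsequential limit is independent of the sequence and subsequence chosen, so that $\Sc_t$ genuinely converges as $t\to\tau$. Here I would pass to the graph representation of Section \ref{graph}: the uniform curvature bounds, together with elliptic regularity for the quasilinear MOTS equation $\theta_t[f]=0$ of Lemma \ref{unif}, give uniform $C^k$ bounds for every $k$ on the local graph functions $f(t,\cdot)$ representing $\Sc_t$ over fixed reference surfaces, so the family $\{f(t,\cdot)\}_{t\in[0,\tau)}$ is precompact in $C^\infty$ and its cluster set as $t\to\tau$ is compact and connected. It then remains to see that this cluster set is a single point. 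When $\lambda_t$ stays bounded away from zero this is straightforward: the linearization $L_v^{(t)}(\partial_t f)=-\partial_t(\text{coefficients})$ is then uniformly solvable, so $\partial_t f$ is uniformly bounded and $f(t,\cdot)$ is Cauchy as $t\to\tau$. The delicate case is $\lambda_t\to 0$, where the evolution vector of the MOTT may degenerate; one must argue, along the lines of the proof of Theorem \ref{thm:lars-ex2}, that the normalized evolution vector still converges, forcing a unique limit surface.

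I expect this last point — controlling the family across a possible degeneration $\lambda_t\to 0$ — to be the main obstacle; everything else is either a hypothesis of the proposition or a direct appeal to \cite{AM1}. In effect the proposition repackages the a priori curvature estimates for stable MOTS in dimension four: the genuine mathematical content lives in \cite{AM1}, and the work here is just to verify that a uniform area bound together with containment in a compact set are precisely the geometric hypotheses those estimates need, and to patch the resulting subsequential limit into an honest limit at $t=\tau$.
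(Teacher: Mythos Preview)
Your approach is essentially what the paper does: the paper gives no proof of this proposition at all, merely stating in the sentence preceding it that the result ``follows from the work in \cite{AM1}'' (and, in the Introduction, summarising \cite{AM1} as showing that a sequence of stable MOTS with bounded area in a compact set ``converges to a smooth and stable MOTS''). Your reduction to the curvature estimates and compactness theorem of \cite{AM1} is therefore exactly the intended argument, spelled out; the passage from subsequential to full convergence at $t=\tau$ that you single out as the main obstacle is not addressed separately in the paper but is absorbed wholesale into the citation.
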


If the dominant energy condition is satisfied, the area of the limit set of the MOTT in fact stays
bounded as long as the MOTS stay strictly stable, as a consequence of
the following lemma. (This computation is known in the context of the
topological results \cite{RN}). Recall that $v^{\alpha}$ is everywhere
linearly independent of $l^{\alpha}$.

\begin{lemma}
\label{intL}
In a 4-dimensional spacetime $(\M, g_{\alpha\beta})$ in which the dominant
energy condition holds, let $\Sc$ be a MOTS which is strictly stable with
respect to a spacelike or null direction $v$, 
with principal eigenvalue $\lambda$ and area
$|\Sc|$.  Then $\Sc$ is topologically a sphere and $\lambda |\Sc| \le 4 \pi$.
Moreover, if $\lambda |\Sc| = 4 \pi$ then $\Sc$ has constant curvature,
i.e. $R_{AB} = \lambda h_{AB}$.
\end{lemma}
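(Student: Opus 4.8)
The plan is to integrate the eigenvalue equation $L_v \phi = \lambda \phi$ for a positive principal eigenfunction $\phi$ against $\phi$ itself (or rather to integrate the equation after dividing by $\phi$), in order to extract a Gauss--Bonnet term. Concretely, with $L_v \phi = -\Delta_\Sc \phi + 2 s^A D_A \phi + (\tfrac12 R_\Sc - Y - s^A s_A + D_A s^A)\phi = \lambda\phi$, divide by $\phi > 0$ and integrate over $\Sc$ with respect to $\eta_\Sc$. The term $-\phi^{-1}\Delta_\Sc\phi$ integrates to $-\int |D\phi|^2/\phi^2$ (after writing $\phi^{-1}\Delta_\Sc\phi = \Delta_\Sc(\log\phi) + |D\log\phi|^2$ and discarding the exact divergence), the term $2\phi^{-1}s^A D_A\phi = 2 s^A D_A(\log\phi)$ pairs with $+D_A s^A$ after an integration by parts, and completing the square one arrives at
\begin{equation*}
\lambda |\Sc| = \int_\Sc \left( \tfrac12 R_\Sc - Y \right)\eta_\Sc - \int_\Sc \left( |D\log\phi - s|^2 \right)\eta_\Sc .
\end{equation*}
Here I would need to be careful that the $s^A s_A$ term and the cross terms assemble into the stated square; this is the one genuine (though routine) computation. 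Since $|D\log\phi - s|^2 \ge 0$, dropping it gives $\lambda |\Sc| \le \int_\Sc(\tfrac12 R_\Sc - Y)\,\eta_\Sc$.

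Next I would invoke the hypotheses: the dominant energy condition together with $v$ (hence $u^\alpha$) spacelike or null makes $Y \ge 0$ by the remark following \eqref{eqY}, so $\lambda|\Sc| \le \tfrac12\int_\Sc R_\Sc\,\eta_\Sc$. In dimension $n=4$ the surface $\Sc$ is two-dimensional, so Gauss--Bonnet gives $\int_\Sc R_\Sc\,\eta_\Sc = 4\pi\chi(\Sc) = 8\pi(1-g)$ for a closed orientable surface of genus $g$. Hence $\lambda|\Sc| \le 4\pi(1-g) \le 4\pi$. Strict stability means $\lambda > 0$ (Proposition \ref{lem:eigen}), so $4\pi(1-g) > 0$, forcing $g = 0$; thus $\Sc$ is topologically a sphere and $\lambda|\Sc| \le 4\pi$.

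For the rigidity statement, suppose $\lambda|\Sc| = 4\pi$. Then every inequality above must be saturated: $\int_\Sc Y\,\eta_\Sc = 0$ with $Y \ge 0$ forces $Y \equiv 0$, and $\int_\Sc |D\log\phi - s|^2\,\eta_\Sc = 0$ forces $s_A = D_A\log\phi$, i.e. $s$ is exact. Substituting $s_A = D_A\log\phi$ back into the eigenvalue equation and using $Y = 0$, a short computation collapses the stability operator so that the equation $L_v\phi = \lambda\phi$ becomes (after dividing by $\phi$ and simplifying) the pointwise identity $\tfrac12 R_\Sc = \lambda$, i.e. $R_\Sc = 2\lambda$ is constant. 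Since on a surface $R_{AB} = \tfrac12 R_\Sc h_{AB}$, this reads $R_{AB} = \lambda h_{AB}$, as claimed. The main obstacle I anticipate is purely bookkeeping: tracking the first-order term $2 s^A D_A$ and the zeroth-order pieces $-s^A s_A + D_A s^A$ through the integration by parts so that they genuinely combine into the single complete square $|D\log\phi - s|^2$; once that identity is in hand, the topological and rigidity conclusions follow immediately from Gauss--Bonnet and strict stability.
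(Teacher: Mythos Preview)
Your proposal is correct and is essentially the same argument as the paper's own proof. The paper packages the computation by introducing $y_A = -\phi^{-1}D_A\phi + s_A = s_A - D_A\log\phi$ and writing the \emph{pointwise} identity $\lambda = D_A y^A - |y|^2 + \tfrac12 R_\Sc - Y$ before integrating; your integrated identity $\lambda|\Sc| = \int_\Sc(\tfrac12 R_\Sc - Y) - \int_\Sc |D\log\phi - s|^2$ is exactly the same statement, and your rigidity step (substituting $y_A=0$, $Y=0$ back to get $R_\Sc = 2\lambda$ pointwise) matches the paper verbatim. One cosmetic slip: when $v$ is spacelike or null the vector $u^\alpha$ is \emph{causal} (timelike or null), not ``spacelike or null''; but that is precisely the hypothesis under which the paper observes $Y\ge 0$, so your use of the energy condition is correct.
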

\begin{proof} 
We take $\phi$ to be the principal eigenfunction of $L_v$,  and
we call  $y_A = -\phi^{-1} D_A \phi + s^{A}$ and $y^2 = y_A y^A$.  Taking
$\psi$ as the eigenfunction $\phi$ in (\ref{Lv}), we obtain
\begin{equation}
\label{lam}
\lambda = D_A y^A - |y|^2 + \frac{1}{2} R_{\Sc} - Y,
\end{equation} 
where $Y$ has been defined in (\ref{eqY}).  Integrating (\ref{lam}) and using
the Gauss-Bonnet theorem gives
\begin{equation}
\label{lamA}
\lambda |\Sc| = 2 \pi \chi- \int_\Sc (y^2 + Y),
\end{equation}
where $\chi$ is the Euler number.  The first assertion of the lemma now
follows since $\lambda > 0$ and $Y \ge 0$.  If $\lambda |\Sc| = 4 \pi $,
(\ref{lamA}) implies  $y_A \equiv 0$ and $Y \equiv 0$. Putting this back into
(\ref{lam}), we have $R_{\Sc} = 2\lambda$  which implies the statement of the lemma
as $R_{AB} = \frac{1}{2}R h_{AB}$ in 2 dimensions. 
\end{proof} 
 
We remark that the same method shows that in the case $\lambda = 0$,  the
MOTS can be a torus, which necessarily must be flat ($R_{AB} = 0$).
 
With the help of Lemma \ref{intL}, we can now sharpen Proposition
\ref{prop:lars-ex2} to formulate an existence result as follows.

\begin{theorem} 
\label{ex3} 
Let $(\M, g_{\alpha\beta})$ be a spacetime of dimension 4, in which the dominant
energy condition holds and which is foliated by smooth spacelike
hypersurfaces $\Sigma_t$, $t\in[0,T]$. Assume that $\Sigma_0$ contains a
smooth, immersed, strictly stable MOTS $\Sc_0$.  Assume further that the
MOTT $\TT_{[0,\tau)}$ through $\Sc_0$ constructed in Theorem \ref{ex1} 
is contained in a compact subset of $\M$ and that either
\begin{itemize}
\item[(i)] $\liminf_{t \to \tau} \lambda_t > 0$, {\underline or}
\item[(ii)] $\lim_{t \to \tau} \lambda_t = 0$ and $\limsup_{t \to \tau}
|\Sc_{t} | < \infty$. 
\end{itemize}
 Then, there is a smooth, compact, strictly stable MOTS
$\Sc_\tau$ in $\Sigma_\tau$ such that $\TT_{[0,\tau]} = 
\TT_{[0,\tau)} \cup \Sc_\tau$ is an
adapted MOTT, which is the closure of $\TT_{[0,\tau)}$.  
\end{theorem}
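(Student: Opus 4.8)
The plan is to deduce Theorem~\ref{ex3} from Proposition~\ref{prop:lars-ex2} and Theorem~\ref{thm:lars-ex2}, the one essential new ingredient being that in case~(i) the dominant energy condition supplies, through Lemma~\ref{intL}, the uniform area bound that Proposition~\ref{prop:lars-ex2} requires. Once that bound is available (it is assumed outright in case~(ii)), Proposition~\ref{prop:lars-ex2} produces the limit surface $\Sc_\tau \subset \Sigma_\tau$ as a smooth, compact, immersed, stable MOTS, and Theorem~\ref{thm:lars-ex2} upgrades $\TT_{[0,\tau)} \cup \Sc_\tau$ to an adapted MOTT; smooth convergence of the leaves then identifies it as the closure of $\TT_{[0,\tau)}$.

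First I would establish the uniform area bound on the whole interval $[0,\tau)$. Since the reference foliation is spacelike, the normal $v^\alpha$ tangent to $\Sigma_t$ is spacelike, and by Theorem~\ref{ex1} each $\Sc_t$ with $t < \tau$ is strictly stable, so Lemma~\ref{intL} applies and gives $\lambda_t |\Sc_t| \le 4\pi$. The principal eigenvalue $\lambda_t$ depends continuously (indeed smoothly) on $t$ by standard analytic perturbation theory for the simple principal eigenvalue of the smooth family $t \mapsto L_v$, cf.\ \cite{Kato}; since $\lambda_0 > 0$, it follows that $\lambda_t$ is bounded away from zero on every compact subinterval $[0,\tau-\eps]$, so $|\Sc_t|$ is bounded there. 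In case~(i), $\liminf_{t\to\tau}\lambda_t > 0$ combines with this to give a positive lower bound $\lambda_t \ge c > 0$ on all of $[0,\tau)$, hence $|\Sc_t| \le 4\pi/c$; in case~(ii) the bound near $\tau$ is hypothesised directly and combines with the bound on $[0,\tau-\eps]$. Either way the areas $|\Sc_t|$ are uniformly bounded, and together with the assumption that $\TT_{[0,\tau)}$ lies in a compact subset of $\M$ this is exactly the input for Proposition~\ref{prop:lars-ex2}, which therefore yields smooth convergence $\Sc_t \to \Sc_\tau$ to a smooth, compact, immersed, stable MOTS $\Sc_\tau \subset \Sigma_\tau$; in particular $\TT_{[0,\tau)} \cup \Sc_\tau$ is the topological closure of $\TT_{[0,\tau)}$.

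Next I would run the machinery of Theorem~\ref{thm:lars-ex2}. Smooth convergence $\Sc_t \to \Sc_\tau$ forces norm convergence of the stability operators and of their adjoints, hence convergence of the principal data $\lambda_t \to \lambda_\tau$, $\phi_t \to \phi_\tau$, $\phi^\dagger_t \to \phi^\dagger_\tau$; in case~(i) this gives $\lambda_\tau = \lim_t \lambda_t \ge c > 0$, so by Proposition~\ref{lem:eigen} the limit MOTS $\Sc_\tau$ is strictly stable. It remains to verify the extra hypothesis of Theorem~\ref{thm:lars-ex2}, namely that $\lambda_t^{-1} \la \phi^\dagger_t, W|_{\Sc_t} \ra$ has a finite or infinite limit: the pairing $\la \phi^\dagger_t, W|_{\Sc_t} \ra$ converges to $\la \phi^\dagger_\tau, W|_{\Sc_\tau} \ra$, which is nonnegative since the dominant, hence null, energy condition gives $W \ge 0$ and $\phi^\dagger_\tau > 0$; thus in case~(i) the ratio stays bounded, while in case~(ii), where $\lambda_t \to 0^+$, it tends to $+\infty$ if that limiting pairing is positive and otherwise one is in the degenerate situation $W|_{\Sc_\tau} \equiv 0$ covered by the last part of the proof of Theorem~\ref{thm:lars-ex2}. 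Theorem~\ref{thm:lars-ex2} then asserts precisely that $\TT_{[0,\tau]} = \TT_{[0,\tau)} \cup \Sc_\tau$ is an adapted MOTT, which is the closure of $\TT_{[0,\tau)}$, completing the argument.

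The real work is hidden inside Proposition~\ref{prop:lars-ex2}: it is the curvature estimate of Andersson and Metzger \cite{AM1} that converts a bare area bound into genuine smooth convergence of stable MOTS in a four-dimensional spacetime, and this I would simply invoke. The points that need care are, first, the continuity of the principal eigenvalue and eigenfunction along the smooth deformation $t \mapsto L_v$ (a routine but necessary application of analytic perturbation theory for simple eigenvalues), and second, case~(ii): because $\lambda_\tau = 0$ there, the ``evolution'' vector of the tube degenerates at $\Sc_\tau$, so one genuinely needs the renormalisation carried out in the proof of Theorem~\ref{thm:lars-ex2}, and in the sub-case $W|_{\Sc_\tau} \equiv 0$ a little extra attention to the behaviour of $\lambda_t^{-1} \la \phi^\dagger_t, W|_{\Sc_t} \ra$ to ensure that renormalisation goes through.
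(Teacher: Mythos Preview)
Your proposal is correct and follows exactly the route the paper intends: the paper gives no explicit proof of Theorem~\ref{ex3} but introduces it with the sentence ``With the help of Lemma~\ref{intL}, we can now sharpen Proposition~\ref{prop:lars-ex2}\ldots'', so the argument is precisely Lemma~\ref{intL} $\Rightarrow$ uniform area bound in case~(i), then Proposition~\ref{prop:lars-ex2} $\Rightarrow$ smooth limit $\Sc_\tau$, then Theorem~\ref{thm:lars-ex2} $\Rightarrow$ adapted MOTT structure on the closure. You have in fact been more careful than the paper in verifying the limit hypothesis of Theorem~\ref{thm:lars-ex2} and in flagging the degenerate sub-case $W|_{\Sc_\tau}\equiv 0$; note also that the conclusion ``strictly stable'' in the theorem statement can only hold in case~(i), as you implicitly observe.
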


We remark that, as the surface $\Sc_{\tau} \subset \Sigma_{\tau}$ itself
satisfies the requirements of the theorem,  the evolution in fact continues in
$[0,T]$  ``as long as the MOTS $\Sc_{t} \subset \Sigma_{t}$ stay strictly
stable''.

We now reproduce and slightly extend the result \cite{AMS}  on the causal
structure of MOTT foliated by strictly stably outermost MOTS, which must be
either null {\it everywhere} or spacelike {\it everywhere}. 

\begin{theorem}
\label{achronal}
Let $(\M, g_{\alpha\beta})$ be a spacetime in which the null energy condition
holds,  which is foliated by smooth, locally achronal (i.e. spacelike or null
at each point) hypersurfaces  $\Sigma_t$ nowhere tangent to $l^{\mu}$, and
that $\Sigma_0$ contains a strictly stable MOTS $\Sc_0$. Then, the following
holds for any MOTT $\TT$
through $\Sc_0$.
\begin{enumerate}
\item 
The MOTT
$\TT$ is achronal in a neighbourhood of $\Sc_0$. 
\item
If $W$ does
not vanish identically, $\TT$ is spacelike everywhere near $\Sc_0$.
\item 
If $W$ vanishes identically on $\Sc_0$, $\TT$ is null everywhere
on $\Sc_0$.
\end{enumerate}
\end{theorem}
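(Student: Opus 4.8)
The plan is to reduce the causal character of $\TT$ at a point of a leaf to the sign of a single scalar function, and to control that sign with the maximum principle of Lemma~\ref{MP}.

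First I would fix the relevant normals near $\Sc_0$. Since $\Sigma_t$ is nowhere tangent to $l^\mu$ we normalise $l^\mu\partial_\mu t=1$; for each leaf $\Sc_\sigma$ of $\TT$ sufficiently close to $\Sc_0$ let $v^\alpha$ be the normal to $\Sc_\sigma$ tangent to the slice containing it, scaled so that $v^\alpha l_\alpha=1$, and write $v^\alpha=-\tfrac12 k^\alpha+V l^\alpha$ as in (\ref{vectorv}). Local achronality of the reference foliation says precisely that the slices are spacelike or null along the $\Sc_\sigma$, i.e.\ $v^\alpha v_\alpha=2V\ge 0$. Because $\Sc_0$ is strictly stable, $L_v$ is invertible, so by the uniqueness statement in Lemma~\ref{Nonlinear} (equivalently, local uniqueness of MOTS near $\Sc_0$) the MOTT $\TT$ coincides near $\Sc_0$ with the adapted MOTT of Theorem~\ref{ex1}; in particular, by Corollary~\ref{tang1}, $\TT$ is transverse to the foliation near $\Sc_0$, and after a suitable normalisation its evolution vector $q^\alpha$ satisfies $q^{\bot\alpha}=l^\alpha+u\,v^\alpha$ on each nearby leaf, for some function $u$.

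The heart of the proof is the resulting identity for $u$. The condition that $q^\alpha$ be tangent to $\TT$ is $\delta_q\theta=0$; since on a MOTS $\delta_{\psi l}\theta=-\psi W$ and $\delta_{\psi v}\theta=L_v\psi$ (Lemma~\ref{varth}), the linearity (\ref{linvar}) turns this into $-W+L_v u=0$, i.e.
\[
L_v u = W .
\]
Now $W\ge 0$ by the null energy condition (cf.\ the remark following (\ref{eqY})) and $\lambda>0$ by strict stability (Proposition~\ref{lem:eigen}), so Lemma~\ref{MP} gives: $u>0$ everywhere on the leaf if $W\not\equiv 0$ there, and $u\equiv 0$ if $W\equiv 0$ there. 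Since $l^\alpha l_\alpha=0$, $l^\alpha v_\alpha=1$ and $v^\alpha v_\alpha=2V$,
\[
q^{\bot\alpha}q^\bot_\alpha=2u+2Vu^2=2u(1+Vu),
\]
and since $T_p\Sc$ is spacelike of codimension one in $T_p\TT$, the hypersurface $\TT$ is spacelike, null or timelike at $p$ according as $q^{\bot\alpha}q^\bot_\alpha$ is $>0$, $=0$ or $<0$. As $u\ge 0$ and $V\ge 0$ we have $1+Vu\ge 1>0$, so $q^{\bot\alpha}q^\bot_\alpha\ge 0$ on every leaf near $\Sc_0$, vanishing exactly where $u=0$, i.e.\ exactly when $W$ vanishes identically on that leaf. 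Applied to all leaves $\Sc_\sigma$ near $\Sc_0$ (each strictly stable by Theorem~\ref{ex1}) this gives assertion~(1); evaluated on $\Sc_0$ in the case $W|_{\Sc_0}\equiv 0$ it gives assertion~(3); and if $W|_{\Sc_0}\not\equiv 0$ then by continuity $W\not\equiv 0$ on $\Sc_\sigma$ for $\sigma$ small, whence $u>0$ and $q^{\bot\alpha}q^\bot_\alpha>0$ there, which is assertion~(2).

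I expect the main obstacle to be the reduction, carried out above via uniqueness, of an arbitrary MOTT through $\Sc_0$ to this transverse ``graph'' picture, equivalently the normalisation $dt(q^\alpha)=1$ for the evolution vector. Without it one only obtains $L_v u=\beta W$ with $\beta=dt(q^\alpha)$ a function determined only up to a multiplicative constant, and the maximum principle fails to apply directly once $\beta$ changes sign; one still checks easily that wherever $\beta$ vanishes $q^{\bot\alpha}$ is a multiple of $v^\alpha$ and hence $q^{\bot\alpha}q^\bot_\alpha=2Vu^2\ge 0$, so the genuinely delicate point is to exclude sign changes of $\beta$, for which strict stability --- through the invertibility of $L_v$ and the ensuing local uniqueness of the MOTS foliation near $\Sc_0$ (Theorem~\ref{ex1}, Corollary~\ref{tang1}) --- has to be brought in.
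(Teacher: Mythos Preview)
Your proposal is correct and follows essentially the same route as the paper: derive $L_v u = W$ from $\delta_q\theta=0$ via Lemma~\ref{varth}, then invoke parts~\ref{point:MPii} and~\ref{point:MPiii} of Lemma~\ref{MP} with $\lambda>0$ and $W\ge 0$.  You supply two steps the paper's four-line proof leaves implicit: the explicit computation $q^{\bot\alpha}q^\bot_\alpha = 2u(1+Vu)$ with $V\ge 0$ from achronality of the slices, and the reduction of an arbitrary MOTT through $\Sc_0$ to the adapted one via the local uniqueness coming from invertibility of $L_v$; both additions are sound and the second in particular addresses a point the paper glosses over.
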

\begin{proof} 
Recall equation  (\ref{LC}) for the  normal variation vector
$q^{\bot \alpha} = l^{\alpha} + u v^{\alpha}$.   Applying points
\ref{point:MPii}  and
\ref{point:MPiii} of the maximum principle, Lemma \ref{MP},  
proves the results. 
\end{proof}

We finally comment on a possible alternative construction for MOTTs.
Consider a weakly outer trapped surface $\Sc$ on some initial slice and take the null cone
emanating from it.  By the Raychaudhuri equation (\ref{Ray}) if $W > 0$, the
null cone  cuts each subsequent slice on an outer trapped surface which gives
a trapped "barrier" $\Sc_1$. If we also assume an outer untrapped "barrier"
$\Sc_2$ outside $\Sc_1$, 
which in particular always exists near spacelike or null
infinity in asymptotically flat spacetimes, there is a
MOTS in the region
bounded by $\Sc_1$ and $\Sc_2$ by the results of \cite{RS,AM2}, 
c.f. the discussion in section \ref{barrier}. In this way, one can  show the
existence of an outermost MOTS
on subsequent slices. In general however, this collection of MOTS need
not be a MOTT because it may jump, as it always tracks the outermost MOTS on
each hypersurface $\Sigma_t$.

\subsection*{Acknowledgments}
We thank Jan Metzger for several helpful discussions. 
We also wish to thank Alberto Carrasco Ferreira for comments on the manuscript
and Helmuth Urbantke for improving Thm. \ref{invariance}.
We gratefully acknowledge the
hospitality and support received from the Isaac Newton Institute, Cambridge,
where part of the work was performed.
MM and WS were supported by the projects
FIS2006-05319 of the Spanish Ministerio de Educaci\'on y Tecnolog\'{\i}a
and SA010CO of the Junta de Castilla y Le\'on. MM also acknowledges
financial support from the Junta de Andaluc\'{\i}a under
project P06-FQM-01951. LA was partly supported by the NSF under contract
no. DMS 0407732, with the University of Miami.

\appendix

\section{Proof of Lemma \ref{gv}}\label{sec:defproof} 

Due to the linearity of the variation and the fact that  $q^{\, \|}$
generates a diffeomorphism of $\Sc$, which implies $\delta_{q^{\, \|}} \thl =
q^{\, \|} (\thl)$, we can assume $q  = q^{\bot}$ without loss of
generality. Since the variation is a local calculation we may assume that all
$\Sc_{\ts}$ 
are embedded. Let $k_{\ts}$ be a null, future directed normal
vector to $\Sc_{\ts}$ satisfying $k^{\alpha}_{\ts} l^{\beta}_{\ts}
g_{\alpha\beta}= -2$. We wish to extend $l^{\alpha}_{\ts}$ and
$k^{\alpha}_{\ts}$, for each value of $\ts$ to a one parameter family of
vector fields defined on a neighbourhood of $\Sc$. Let $\Omega_{\ts}$ be the
null hypersurface generated by $k^{\alpha}_{\ts}$ by affinely parametrized
geodesics. Extend first  $l^{\alpha}_{\ts}$ to $\Omega_{\ts}$ by parallel
transport along $k^{\alpha}_{\ts}$. Then extend $l^{\alpha}_{\ts}$ away from
$\Omega_{\ts}$ by affinely parametrized null geodesics and finally extend
$k^{\alpha}_{\ts}$ by parallel transport along $l^{\alpha}_{\ts}$.  Notice
that, in general, the two planes orthogonal to $\{{l}^{\alpha}_{\ts},
{k}^{\alpha}_{\ts}\}$ at each point do not define two-surfaces. On
$\Sc_{\ts}$ however they obviously do.  Being  $l^{\alpha}_{\ts}$ a geodesic field, the
null expansion $\thl_{\ts}$ can be rewritten as $\thl_{\ts}(p) =
(\nabla_{\alpha} l^{\alpha}_{\ts}) |_{\Phi(p,\ts)}$ for any $p \in \Sc$,
where $\Phi(p,\ts)$ is the variation of Sect. \ref{variation}.
Defining ${U^{\alpha}} = \partial_{\ts} {l}^{\alpha}_{\ts} |_{\ts=0}$ (with
partial derivative taken at a fixed spacetime point), we obtain directly from
the definition of the variation
\begin{equation} 
\delta_{{q}} \, \thl = \nabla_{\alpha} U^{\alpha} + q^{\beta} \nabla_{\beta}
\nabla_{\alpha} l^{\alpha} |_{\Sc} = \nabla_{\alpha} U^{\alpha} + q^{\beta}
\nabla_{\alpha} \nabla_{\beta} l^{\alpha} - \B G_{\alpha\beta} l^{\alpha}
l^{\beta}  + \frac{\C}{2} R_{\alpha\beta} k^{\alpha} l^{\beta} |_{\Sc}
 \label{vartheta1}
\end{equation} 
where $l^{\alpha} = l^{\alpha}_{\ts=0}$ and $k^{\alpha} = k^{\alpha}_{\ts=0}$
and the Ricci identity has been used in the second equality.  Let us next
determine the divergence of $U^{\alpha}$. From  that fact that
$l_{\ts}^{\alpha}$ is null for all $\ts$, it follows $U^{\alpha}
l_{\alpha}=0$ and hence  $U^{\alpha} = \A l^{\alpha} + U^A e^{\alpha}_A$ for
$\A$ as in the statement of the lemma and for suitable functions $U^A$.  Here
$e^{\alpha}_A$ is a basis of the orthogonal subspace to $l$ and $k$ at each
point.  Using the projector $h_{\alpha\beta} =  g_{\alpha\beta} + \frac{1}{2}
( l_{\alpha} k_{\beta} + k_{\alpha} l_{\beta} )$ it is easily found that the
divergence of any vector of the form $F^A e_{A}^{\alpha}$ is
\begin{equation}
\nabla_{\alpha} \left ( F^A e^{\alpha}_{A} \right ) |_{\Sc} = D_A F^A
|_{\Sc}  \label{diver}
\end{equation}
where we used the fact that $l^{\alpha} \nabla_{\alpha} k^{\beta} = 0$ and
$k^{\alpha} \nabla_{\alpha} l^{\beta} |_{\Sc} = 0$.  We need to determine
$U^A$ on $\Sc$: in local coordinates $y^{\alpha} (x^A, \ts)$ for the map
$\Phi$, orthogonality of $l^{\alpha}_{\ts}$ to $\Sc_{\ts}$ means
$$
g_{\alpha\beta} (y^{\mu} (\ts,x^B) ) \frac{\partial y^{\alpha}}{\partial x^A}
l^{\beta}_{\ts} (y^{\mu} (\ts, x^B))=0,
$$
Its $\ts$-partial derivative at $\ts=0$  gives, $(\nabla_{e_A} q, l ) + U_A
|_{\Sc}= 0$, i.e.  $U_A |_{\Sc} = - D_A \C + \C s_A$. From  $l^{\alpha}_{\ts}
\nabla_{\alpha} l^{\beta}_{\ts}=0$ it follows $U^{\alpha} \nabla_{\alpha}
l^{\beta} + l^{\alpha} \nabla_{\alpha} U^{\beta} =0$, which after
multiplication with $k_{\beta}$ and the fact  $k^{\alpha}$ is parallel along
$l^{\alpha}$ gives $l^{\alpha} \partial_{\alpha} \A = - U^A s_A$.  Thus,
using (\ref{diver})
\begin{equation} 
\nabla_{\alpha} U^{\alpha} |_{\Sc} = - \left . \Delta_{\Sc} \C + 2 s^A D_A \C +
\C \left (D_A s^A - s^A s_A \right ) + \A \thl \, \right |_{\Sc}.
\label{first}
\end{equation} 
We next consider the second term in (\ref{vartheta1}),  i.e.  $\B l^{\beta}
\nabla_{\alpha} \nabla_{\beta} l^{\alpha} - \frac{\C}{2} k^{\beta}
\nabla_{\alpha} \nabla_{\beta} l^{\alpha}$.  An integration by parts and
using that $l^{\alpha}$ is geodesic implies
$$
l^{\beta} \nabla_{\alpha} \nabla_{\beta} l^{\alpha} |_{\Sc} = -
\nabla_{\alpha} l_{\beta} \nabla^{\beta} l^{\alpha} |_{\Sc} = - K^{\mu}_{AB}
K^{\nu \, AB} l_{\mu} l_{\nu} |_{\Sc}.
$$
Decomposing $k^{\beta} \nabla_{\beta} l^{\alpha} = Q l^{\alpha} + r^A
e^{\alpha}_A$ and using the fact that $Q |_{\Sc} = r^A |_{\Sc} = 0$, another
integration by parts gives $ k^{\beta} \nabla_{\alpha} \nabla_{\beta}
l^{\alpha} |_{\Sc}= l^{\mu} \partial_{\mu} Q  - K^{\mu}_{AB} K^{\nu \, AB}
l_{\mu} k_{\nu} |_{\Sc}$.  In order to determine $l^{\alpha}
\partial_{\alpha} Q $, we  first note that $-2 Q = k^{\alpha} k^{\beta}
\nabla_{\alpha} l_{\beta}$.  Taking covariant derivative along $l^{\alpha}$
and using the Ricci identity we find $ 2 l^{\mu} \partial_{\mu} Q  |_{\Sc} =
l^{\alpha} k^{\mu} l^{\beta} k^{\nu} R_{\alpha\mu\beta\nu} |_{\Sc}$.
Collecting terms we get
\begin{equation} 
q^{\alpha} \nabla_{\alpha} \nabla_{\beta} l^{\alpha} |_{\Sc} = \left . - \B
K^{\mu}_{AB} K^{\nu\, AB} l_{\mu} l_{\nu}  + \frac{\C}{2} \left (
K^{\mu}_{AB} K^{\nu\, AB} l_{\mu} k_{\nu} - \frac{1}{2} l^{\alpha} k^{\mu}
l^{\beta} k^{\nu} R_{\alpha\mu\beta\nu} \right )   \right |_{\Sc}.
\label{second}
\end{equation}
For the last term in (\ref{vartheta1}), the definition of the projector
$h^{\alpha\beta}$ gives $R + 2 R_{\alpha\beta} l^{\alpha} k^{\beta} =
h^{\alpha\beta} h^{\mu \nu} R_{\alpha\mu\beta\nu} + \half l^{\alpha} k^{\mu}
l^{\beta} k^{\nu} R_{\alpha\mu\beta\nu}$. Making use of the Gauss identity
$h^{\alpha\beta} h^{\mu\nu} R_{\alpha\mu\beta\nu} |_{\Sc}= R_{\Sc} - {H}^2 +
K_{\mu \,  AB} K^{\mu \, AB} $ we get
\begin{equation}
\left . R_{\alpha\beta} l^{\alpha} k^{\beta} |_{\Sc} =   - G_{\alpha \beta}
l^{\alpha} k^{\beta} +  R_{\Sc} - {H}^2 -  K^{\mu}_{AB} K^{\nu \, AB} l_{\mu}
k_{\nu}  + \frac{1}{2} l^{\alpha} k^{\mu} l^{\beta} k^{\nu}
R_{\alpha\mu\beta\nu}  \right |_{\Sc}. \label{third}
\end{equation} 
Inserting (\ref{first}), (\ref{second}) and (\ref{third}) into
(\ref{vartheta1}) completes the proof. $\hfill \Box$

\section{Proof of Lemma 4.1 (sketch)}\label{sec:appendB}  
The Krein-Rutman theorem states that  on a Banach space $B$, a compact linear
operator $E$ that maps any non-zero element of a closed cone $K$ (i.e. a
topologically closed subset of $B$ closed under addition and multiplication
by non-negative scalars) into its topological  interior, necessarily has a
unique eigenvector $u$ in the interior of $K$ of unit norm.  Moreover, the
corresponding eigenvalue $\alpha$ is real and positive and any other element
$\beta$ of the spectrum of $E$ (complex in general) satisfies $\alpha \geq
|\beta|$ where $|\cdot |$ denotes the complex norm. A proof of this theorem
can be found in \cite{KR}.

In the case of the elliptic operator $L$ (\ref{ell}), let $\delta$
be a constant satisfying $\delta > \sup_{\Sc} c$ and define the operator $L' = L + \delta$.
The zero
order term is therefore positive everywhere and the PDE $L'f = g$ admits a
unique solution in $C^{2,\alpha}(\Sc)$ for any $g \in C^{0,\alpha}(\Sc)
\equiv B$. Let $Q : B \rightarrow B$ be defined by $Q(g) = f$ and let $K$ be
the set of non-negative functions (which is obviously a cone). The maximum
principle implies that if $g \in K$ and non-identically zero, then $f = Q(g)$
is strictly positive everywhere.  Thus, all the conditions of the
Krein-Rutman theorem are fulfilled and there exists a unique non-negative
function $\phi$ of unit $C^{0,\alpha}(\Sc)$ norm  satisfying $Q(\phi) = \alpha
\phi$. Since $\alpha$ is positive and $\phi$ is in the image of $Q$ it follows
that $\phi$ is strictly positive and in
$C^{2,\alpha}(\Sc)$. Elliptic regularity implies that $\phi$ is in fact
smooth. It follows that  $L \phi =  (\alpha^{-1} - \delta ) \phi$, so we have a
positive eigenfunction (unique up to rescaling) and a real eigenvalue
$\lambda = \alpha^{-1} - \delta$.

It only remains to show that any other eigenvalue of $L$ has larger or equal
real part. This does not follow directly from the Krein-Rutman theorem.
However, we use the following argument, which we adapt from Evans
\cite{LE}. Let $\psi$ be a (possibly complex) eigenfunction of $L$
with eigenvalue $\mu$. Define $u = \phi^{-1} \psi$. A
direct computation gives
\begin{eqnarray}
- D_A D^A u + 2 t^{\prime \, A} D_A u + (\lambda - \mu) u = 0,
\label{cc}
\end{eqnarray}
where $t^{\prime \, A} = t^A - D^A \phi$. Using also the complex conjugate of
(\ref{cc}) a short calculation gives
\begin{eqnarray*}
\left ( -D_A D^A + 2 t^{\prime A} D_A \right ) |u|^2 = 2 \left ( \Real (\mu)
 - \lambda \right ) |u|^2 - D_A u D^A \overline{u}  \leq 2 \left (
\Real (\mu) - \lambda \right ) |u|^2
\end{eqnarray*}
Thus, if $\Real(\mu) < \lambda $ the right-hand side is
non-positive and the maximum principle would imply $|u|^2=0$. Thus,  $\Real
(\mu) \geq \lambda$ as claimed.

Finally, the result on the adjoint is a trivial consequence of the positivity
of the principal  eigenfunctions, c.f. \cite{LE}. Explicitly, if
$\phi^{\dagger}$ and $\lambda^{\dagger}$ are the principal eigenfunction and
eigenvalue of $L^{\dagger}$, it follows
$$
0 = \langle L^{\dagger} \phi^{\dagger} , \phi \rangle - \langle
\phi^{\dagger}, L \phi \rangle =  \left ( \lambda^{\dagger} - \lambda \right
) \langle \phi^{\dagger}, \phi \rangle
$$
and $ \lambda^{\dagger} = \lambda $ as positive functions cannot be $L^2$
orthogonal.

\end{document}